\newdimen\mainfontsize \mainfontsize=1\@ptsize pt
\newtheorem{thm}{Theorem}[section]
\newtheorem{cor}[thm]{Corollary}
\newtheorem{lem}[thm]{Lemma}
\newcommand{\bnu}{{\bar{\nu}}}
\newcommand{\cf}{\mbox{$\cal F$}}
\newcommand{\cG}{\mbox{$\cal G$}}
\newcommand{\tcf}{\tilde {\mbox{$\cal F$}}}
\newcommand{\tX}{\tilde{X}}
\newcommand{\br}{\mbox{$\mathbb R$}}
\newcommand{\bp}{\mbox{$\mathbb P$}}
\newcommand{\be}{\mbox{$\mathbb E$}}
\newcommand{\tbp}{\mbox{$\tilde{\mathbb P}$}}
\newcommand{\tbe}{\mbox{$\tilde{\mathbb E}$}}
\newcommand{\tO}{\tilde{\Omega}}
\newcommand{\srho}{\sqrt{\rho}}
\begin{document}
\bibliographystyle{plain}
\begin{center}
{\LARGE Stochastic evolution equations in portfolio credit modelling with applications to exotic credit products}
\end{center}
\vspace*{.1 true in}
\begin{center}
{\Large N.~Bush}\footnote{Mathematical Institute, University of Oxford,
24-29 St Giles, Oxford OX1 3LB, UK.\par
~\mbox{ } E-mail:  nick.bush@maths.ox.ac.uk},
{\Large B.~M.~Hambly}\footnote{Mathematical Institute, University of Oxford,
24-29 St Giles, Oxford OX1 3LB, UK.\par
~\mbox{ } E-mail:  hambly@maths.ox.ac.uk},
{\Large H.~Haworth}\footnote{Credit Suisse\par
~\mbox{ } E-mail:  helen.haworth@linacre.oxon.org},
{\Large L. Jin}\footnote{Mathematical Institute, University of Oxford,
24-29 St Giles, Oxford OX1 3LB, UK.\par
~\mbox{ } E-mail:  jin@maths.ox.ac.uk}
and 
{\Large C.~Reisinger}\footnote{Mathematical Institute, University of Oxford,
24-29 St Giles, Oxford OX1 3LB, UK.\par
~\mbox{ } E-mail:  reisinge@maths.ox.ac.uk\par
~\mbox{ } The authors are grateful to Credit Suisse for the market data used for the calibration.\par  
~\mbox{ } The views expressed in this article are those of the authors and not those of Credit Suisse.}

\vspace*{.1 true in}
\large{\today}
\end{center}

\begin{abstract}
We consider a structural credit model for a large portfolio of credit risky assets where the correlation
is due to a market factor. By considering the large portfolio limit of this system we show the existence
of a density process for the asset values. This density evolves according to a stochastic partial differential
equation and we establish existence and uniqueness for the solution taking values in a suitable function space.
The loss function of the portfolio is then a function
of the evolution of this density at the default boundary. We develop numerical methods for pricing and 
calibration of the model to credit indices and consider its performance pre and post credit crunch. Finally, we give further examples illustrating the valuation of exotic credit products, specifically forward starting CDOs.
\end{abstract}

\section{Introduction}\label{section:intro}

The rapid growth of the credit derivatives market from 2000-2007 led to the development 
of increasingly complex credit instruments requiring new mathematical models for pricing and risk 
management. The subsequent contraction due to the credit crunch has placed even more emphasis 
on the importance of understanding the risks involved in dealing with complex credit products.
Our aim in this paper is to extend standard large portfolio credit models by introducing dynamics
and working with the infinite dimensional limit. Although this model has shortcomings (as inherent
in the underlying structural model), we provide a mathematical basis for the
development of more realistic extensions.

The two natural approaches to credit modelling that have been extensively developed are the structural
approach and the reduced form approach, and each has been extended to the portfolio 
setting in a variety of ways. We consider a dynamic large portfolio model obtained by taking 
the large portfolio limit of a multidimensional structural model. By taking this limit
we obtain a stochastic partial differential equation which models the
evolution of the value of a large basket of underlying assets. The key quantities for multiname credit are
then certain functions of the solution of this stochastic partial differential equation.

Our motivation for the development of our structural evolution model came originally from the 
lack of dynamics in the credit market's standard pricing methodology. This absence of dynamics made 
pricing some structured credit instruments very difficult and credit market developments since 
mid-2007 have further exposed the limitations of existing approaches. There is still a need for a new 
generation of models to enable a better fitting as well as understanding of the risks inherent in some of the more 
complex products. For instance the existence of 5, 7 and 10-year index and bespoke tranches requires 
a model that can fit the entire correlation skew term structure, not just the correlation skew for a given 
time horizon. Also the introduction of forward starting tranches, options on tranches and STCDOs with 
trigger features requires information on the dynamics of spreads and information on the timings of 
default for their pricing.
By investigating the behaviour of our simplified model, we are able to gain an insight into which aspects of 
dynamic models are important for the pricing of more exotic structured credit products. This information can 
then be used to help guide future model development. 

Our model follows a bottom-up approach in which the individual entities in a credit basket are modelled. 
This approach (whether structural or reduced form) has been widely used, primarily as a result of the 
introduction of copulas and the subsequent conditionally independent factor (CIF) models. These models 
allow the problem of specifying the marginal distributions and the market co-movements to be separated 
and through the choice of specific copulas has led to simple, easy to implement and computationally 
efficient techniques for pricing credit products.



However, as the portfolio credit market expanded it became clear that these models were
unable to cope with some of the new instruments. Copula and CIF models have no dynamics
to speak of; nowhere is it specified how their parameters or underlyings evolve.
Furthermore, they only model expected defaults within one time period (making copula 
parameters time dependent leads to prices that are not arbitrage free). For instruments
such as collateralised debt obligations (CDO) this is not an issue as they are
essentially one period instruments, but for those with stronger timing features this is
not acceptable. These two points make it impossible to price dynamic instruments such as
options on CDOs and very difficult to price multi-period instruments such as forward
starting CDOs. Thus our purpose is to develop a relatively simple dynamic extension
of a CIF to the large portfolio setting.

An alternative multi-asset route is a top-down approach where 
the joint default distribution is modeled directly without regard to the single name market. 
The correlation is an inherent property of the quantity being modeled and thus does not need to be 
specified. Using the top-down approach, frameworks similar to
that of the HJM interest rate models have been developed for the joint loss distribution.

Although many of the exotic credit instruments have traded infrequently, especially post the credit crunch, 
their introduction highlighted the need for a more sophisticated approach to portfolio credit modelling. 
There is a large and rapidly growing literature in this area, so we only mention a
few papers \cite{carmona}, \cite{sirzar}, \cite{filip}, \cite{bayrak}, \cite{papasir}.
Top-down approaches include the Markov chain model in \cite{schon2} and the 
models of \cite{brigo1}, \cite{errais1} and \cite{graz1}. Reduced-form approaches have 
been extended to more than one issuer via correlated stochastic parameters. A relatively 
tractable example is the intensity-gamma model by \cite{joshi1}; another is 
the affine jump diffusion model of \cite{mortensen1}. \cite{okane2} provides an overview 
of some of the main bottom-up approaches. 

\subsection{Structural models}

Our model falls into the class of multi-dimensional structural models and we take the approach of
modelling the empirical measure of the asset prices in the basket when the underlyings have dynamics
linked through a factor model. The pricing of CDOs is then a function of the limit of the empirical measure
of the large basket. 

Structural models are based on the premise that when a company's asset value falls below
a certain threshold barrier a default is triggered. The first model of this type was
introduced by \cite{merton1} and then extended by \cite{blackcox1}.  To date, there are many variants of 
this model but the basic type is as
follows. Let $A_t$ be the asset value of a company whose evolution is governed by
\begin{displaymath}
\frac{dA_t}{A_t} = \mu\, dt + \sigma\, dW_t,
\end{displaymath}
where $\mu$ is the mean rate of return on the assets, $\sigma$ is the asset volatility
and $W_t$ is a standard Brownian motion. If we denote the default threshold barrier by
$B_t$ we define the \emph{distance to default}, $X_t$, as
\begin{equation}
\label{distance-to-default} X_t = \frac{1}{\sigma}\left(\,\textrm{log}A_t -
\textrm{log}B_t\,\right).
\end{equation}
The event of a default by time $t$ is now expressed as the event that $X$ hits $0$ before time $t$. 

Structural models are appealing due to their intuitive economic interpretation and the
link they provide between the equity and credit markets. They introduce spread dynamics
and allow market participants to hedge spread risk with the underlying equity of the
reference entity. Defaults are endogenously generated within the model and recovery rates
do not need to be determined until after a default occurs.

There are however downsides that affect the practical applicability of structural models.
Due to the diffusive nature of the asset process, and the assumption of perfect
information regarding asset values and default thresholds, any credit event generated by
the model is predictable. The immediate consequence is short term credit spreads that are
near zero: a fact contradicted by empirical evidence. Extensions that try to address these
issues include CreditGrades$^{TM}$ described in \cite{finger1}, as well as \cite{duffie2}, \cite{zhou2},
\cite{zhou3}, \cite{hilberink1} and \cite{schoutens1}. As structural models are extended in these
ways their analytic complexity increases dramatically. Credit spread prices can then no
longer be expressed in closed form and numerical methods must be employed for pricing.
Another downside is that calibration of the model parameters is not a straightforward
exercise.

Due to the popularity enjoyed by CIF and copula models, multidimensional structural
models have typically received less attention; as a result, the literature on this
subject is relatively sparse. 
The first authors to incorporate default correlation into first passage models were
\cite{zhou1} and \cite{hull2}. The former extended the Black-Cox framework to include
correlated asset value processes, with hitting times being calculated from a time
dependent barrier in closed form for two risky assets. \cite{hull2} followed Zhou's
approach and moved to a higher dimensional space but had to sacrifice the analytic
results. In \cite{hull1} the asset value processes for a multi-dimensional structural model are
correlated via a set of common factors. In this setting piecewise default barriers are calibrated
to match market prices and Monte-Carlo simulation is used to value single tranche CDOs (STCDOs).
Other recent papers using a structural approach include \cite{fouque1}, \cite{haworth1}, 
\cite{carmona} and \cite{crepey}. We aim to develop a model which can allow pricing of
exotic options on CDO tranches and note that there has been some discussion of such products 
in \cite{hull3}, \cite{jackson1}. 

\subsection{The SPDE model}

The starting point for our model is very similar to that used in \cite{hull1}. We will develop a
simple model in this paper in which all assets have the same constant volatility and are correlated via a single 
market factor. A more general version, in which the volatility and correlation are functions, can be found in \cite{LJthesis}. 
Let $(\Omega^N,\cf^N,\bp^N)$ denote a probability space for a market consisting of $N$ different companies 
whose asset values $A_t$ at time $t$ evolve under the risk neutral measure $\bp^N$ according to a diffusion process
given by
\begin{equation}
dA_t^i = r A_t^i \,dt + \sigma\sqrt{1-\rho}A_t^i\,dW_t^i + \sigma\sqrt{\rho} A_t^i\,dM_t,\;\;i=1,\dots,N
\label{assetsde}
\end{equation}
up until the hitting time of a barrier $B^i$ or the horizon $T$. We assume $W_t^i$ and $M_t$ are Brownian 
motions satisfying
\begin{displaymath}
d\left[ W_t^i , M_t \right] = 0 \qquad \forall i\\
\end{displaymath}
and
\begin{displaymath}
d\left[ W_t^i , W_t^j \right] = \delta_{ij}\,dt,
\end{displaymath}
where we have written $[.,.]$ for the quadratic covariation and will use $[.]$ for the quadratic variation,
and $\sigma>0$ is a constant and $\rho \in [0,1)$ is the constant correlation. 
Note the co-dependence between the asset
processes is provided solely by the Brownian motion $M_t$ which can be thought of as a market wide factor
influencing all of the assets. 

Thus we can write (\ref{assetsde}) in terms of the distance to default process $X^i_t = (\ln A^i_t - \ln B^i)/\sigma$, 
with constant barrier $B^i$, as
\begin{equation}
\begin{array}{rcl}
dX_{t}^{i}&=&
\mu dt+\sqrt{1-\rho}dW_{t}^{i}+\sqrt{\rho} dM_{t}, \quad t<T_{0}^{i},\\
X_{t}^{i}&=&0, \quad t\geq T_{0}^{i},\\
X_{0}^{i}&=&x^{i}>0, \\
T_{0}^{i}&=&\inf\{t:X_{t}^{i}=0\},\\
\end{array}
\label{1}
\end{equation}
for $i=1, 2, ... , N$, where $\mu=(r-\frac12\sigma^2)/\sigma$.

It does not matter how we label our assets so make the following assumptions.
We will assume that $\{X_{0}^{1}, ... , X_{0}^{N}\}$ is a family of exchangeable, $[C_B,\infty)$-valued random 
variables with $\mathbb{E} (X_0^i) < \infty$, where the constant $C_B>0$. 
We assume that this initial distribution is independent of $\{W^{i}\}$ and $M$. 

By construction we see that our system extends to an infinite system as $N\to\infty$ and we will show that there
is a limit empirical measure whose density satisfies an SPDE. We will write $(\Omega, \cf,\bp)$ with associated expectation
operator $\be$ for the limit probability space containing the full infinite asset value model.

In order to state our main mathematical result we need some further notation. Let $(\Omega^M, \cf^M,\bp^M)$ be 
a probability space supporting a one-dimensional Brownian motion $(M_t,\cf_t)$. Let $\cG^M$ denote 
the $\sigma$-algebra of predictable sets on $\Omega^M\times (0,\infty)$ associated with the filtration $\cf^M_t$ and 
$H^1((0,\infty))=\{f: f\in L^2((0,\infty)), f'\in L^2((0,\infty))\}$, where $L^2((0,\infty)) = \{f: \int_0^{\infty}
f^2 dx<\infty\}$. We write $L^2(\Omega^M \times (0,T), {\cG}^M,H^1((0,\infty))) = \{f(\omega,t,.): 
f(\omega,t,.)\in H^1((0,\infty)),  f(\omega,t,.) \mbox{ is $\cf^M_t$-measurable},\\ 
\be^M \int_0^T \|f(\omega,t)\|_{H^1}^2 dt<\infty\}$. 
We also write $\delta_x$ for a Dirac measure at the point $x$.

Let $\bar{\nu}_{N,t}$ denote the equally weighted 
empirical measure for the entire portfolio given by
\begin{equation}
\label{empiricalmeasure} \bar{\nu}_{N,t} = \frac{1}{N}\sum_{i=1}^N \delta_{X_t^i}.
\end{equation}

\begin{thm}\label{thm:main}
The limit empirical measure $\bar{\nu}_t=\lim_{N\to\infty} \nu_{N,t}$ exists and is a probability 
measure with two components, $\bar{\nu}_t = L_t\delta_0 + \nu_t$. 
The measure $\nu_t$ is a measure on $(0,\infty)$ with density $v(t,x)$, which is 
the unique solution in $L^2(\Omega^M \times (0,T), {\cG^M},H^1((0,\infty)))$ of the SPDE
\begin{equation}\label{simplifiedspde}
\left\{ \begin{array}{ll} dv = -\frac{1}{\sigma}\left(r-\frac{1}{2}\sigma^2\right) v_x \,dt 
+ \frac{1}{2}v_{xx}\,dt - \sqrt{\rho} v_x\,dM(t), & \vspace{0.2cm} \\
v(0,x) = v_0(x), \quad v(t,0) = 0. &
\end{array} \right.
\end{equation}
The weight of the Dirac mass at 0 is the loss function
\[ L_t = 1-\int_0^{\infty} v(t,x) dx. \] 
\end{thm}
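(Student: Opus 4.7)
The plan is to exploit the conditional i.i.d.\ structure of the particle system. Conditioned on the filtration $\cf^M_t$ generated by the common Brownian motion $M$, the processes $\{X^i\}$ are i.i.d., since they are driven by independent idiosyncratic Brownians $W^i$ and share the same $M$-adapted random drift and identical initial distribution. A conditional Glivenko--Cantelli / de Finetti argument then yields the weak convergence, $\bp$-a.s., of $\bar{\nu}_{N,t}$ to the conditional law $\bar{\nu}_t$ of $X^1_t$ given $\cf^M_t$. Since each $X^i$ is absorbed at $0$, this conditional law decomposes naturally as $\bar{\nu}_t = L_t\delta_0 + \nu_t$, with $L_t = \bp(X^1_t = 0\mid \cf^M_t)$ and $\nu_t$ the restriction to $(0,\infty)$.

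I would then show that $\nu_t$ possesses a density $v(t,\cdot)$ and derive the SPDE. Conditional on $\cf^M_t$, the killed process $X^i$ is a Brownian motion with drift $\mu\,dt+\sqrt{\rho}\,dM_t$ and diffusion coefficient $\sqrt{1-\rho}$, absorbed at $0$. The heat-kernel smoothing provided by the idiosyncratic $W^i$-noise guarantees existence of a density on $(0,\infty)$ for $t>0$, with Dirichlet trace $v(t,0)=0$ encoding absorption. To obtain the SPDE I would apply It\^o's formula to $\phi(X^1_t)\mathbf{1}_{\{t<T_0^1\}}$ for smooth test functions $\phi$ vanishing at $0$, condition on $\cf^M_t$, and use the It\^o--Ventzel formula. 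This yields
\[
d\langle\nu_t,\phi\rangle = \Bigl\langle \nu_t, \mu\phi' + \tfrac{1-\rho}{2}\phi''\Bigr\rangle dt + \sqrt{\rho}\langle \nu_t,\phi'\rangle dM_t + \tfrac{\rho}{2}\langle \nu_t,\phi''\rangle dt,
\]
where the final $\rho/2$ term is the It\^o correction from the presence of $M$ in the coefficient of the conditional dynamics. Combining the two $v_{xx}$ contributions gives the factor $\tfrac12$ in \eqref{simplifiedspde}, and integration by parts against the density, using $v(t,0)=0$ and decay at infinity, produces the strong form of the SPDE.

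Existence of a solution in $L^2(\Omega^M\times (0,T),\cG^M,H^1((0,\infty)))$ then follows from the probabilistic construction together with a priori regularity of the conditional density. For uniqueness I would appeal to a standard energy estimate: given two solutions $v_1,v_2$, the difference $u=v_1-v_2$ satisfies the same linear SPDE with zero initial data and Dirichlet trace. Applying It\^o's formula to $\|u(t,\cdot)\|_{L^2}^2$, integration by parts cancels the transport term, produces a $-(1-\rho)\|u_x\|^2_{L^2}\,dt$ dissipation term from the combination of the $\tfrac12 u_{xx}$ drift and the It\^o correction $\rho\|u_x\|^2_{L^2}\,dt$ from the martingale part, while the stochastic term vanishes in expectation. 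Gr\"onwall's inequality then gives $\be^M\|u(t)\|^2_{L^2}=0$. The loss representation $L_t = 1-\int_0^\infty v(t,x)\,dx$ is immediate from $\bar{\nu}_t$ being a probability measure.

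The main obstacle I expect is the rigorous handling of the absorbing boundary in the SPDE derivation: showing that the limiting conditional law actually carries an $H^1$-density (not merely a measure), justifying the passage from the particle It\^o expansion to the strong-form SPDE while simultaneously accounting for the flux of mass into the Dirac at $0$, and legitimately interchanging the conditional expectation with stochastic integrals against both idiosyncratic and common noises. The subtle interplay between the idiosyncratic diffusion $\tfrac{1-\rho}{2}v_{xx}$ and the It\^o--Ventzel correction $\tfrac{\rho}{2}v_{xx}$, which together yield the $\tfrac12 v_{xx}$ coefficient, is where the bulk of the technical work lies.
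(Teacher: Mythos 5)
Your high-level architecture parallels the paper's---both establish (i) existence of a limit empirical measure, (ii) identification of the weak SPDE, and (iii) $H^1$-regularity plus uniqueness---but you take a genuinely different route for (ii), and you gloss over (iii), which is where almost all of the paper's technical work is concentrated.

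On (ii): you derive the weak evolution by viewing $\bar{\nu}_t$ as the conditional law of $X^1_t$ given $\cf^M_t$ and combining the conditional generator $\mu\partial_x + \tfrac{1-\rho}{2}\partial_x^2$ with an It\^o--Ventzel correction $\tfrac{\rho}{2}\partial_x^2$ coming from the stochastic drift $\sqrt{\rho}\,dM$. That algebra is right and agrees with the paper's observation (Section 3.4) that off the boundary $v(t,x)=u(t,x-\sqrt{\rho}M_t)$ with $u$ solving the deterministic Fokker--Planck equation with diffusion $\tfrac{1-\rho}{2}$. The paper instead applies It\^o's formula to $\tfrac1N\sum_i\phi(X^i_t)\mathbf{1}_{\{t<T^i_0\}}$ at finite $N$, observes that the idiosyncratic martingale $\tfrac1N\sum_i\int\phi'(X^i_s)\sqrt{1-\rho}\,dW^i_s$ has $O(1/N)$ quadratic variation and so vanishes a.s., and passes to the limit in the remaining bounded pairings. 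The advantage of that route is that it requires no smoothness whatsoever of the conditional law: everything stays at the level of test functions paired against measures, whereas your It\^o--Ventzel argument implicitly assumes regularity of the random field you are expanding.

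The genuine gap is (iii). Your assertion that ``heat-kernel smoothing from the idiosyncratic noise guarantees existence of a density with Dirichlet trace $v(t,0)=0$'' \emph{is} the hard part of the theorem, and the absorbing boundary is precisely where it becomes delicate. The paper proves this in several nontrivial steps: a boundary estimate $\nu_t((0,\varepsilon))/\varepsilon\to 0$ a.s.\ (Theorem 3.4, via explicit reflected Gaussian kernel computations and Borel--Cantelli); a second-moment estimate $\be[\nu_t((0,\varepsilon))^2]\le K_T\varepsilon^{3+\beta}$ with $\beta>0$ (Lemma 3.6), obtained from Iyengar's planar first-passage density for two correlated Brownian motions after Girsanov removes the drift and H\"older controls the Radon--Nikodym factor; a Kurtz--Xiong-style argument convolving $\nu_t$ with the absorbing heat kernel $T_\delta$ and bounding $\be\|T_\delta\nu_t\|_{L^2}^2$ uniformly in $\delta$ using the $\varepsilon^{3+\beta}$ estimate, giving an $L^2$ density and uniqueness among all measure-valued solutions (Theorem 3.10, Corollary 3.11, Theorem 3.12); and finally Krylov's $W^n_2$ theory for the Dirichlet problem in smooth domains (Theorem 3.14) to produce the $H^1$ solution and identify it with the density. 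Your proposed energy estimate for uniqueness is correct in spirit---the cancellation $-\|u_x\|_{L^2}^2 + \rho\|u_x\|_{L^2}^2 = -(1-\rho)\|u_x\|_{L^2}^2$ is exactly the mechanism the paper exploits after smoothing by $T_\delta$---but applying it directly to $\|u(t)\|_{L^2}^2$ presupposes $H^1_0$-regularity and sufficient integrability to justify the integration by parts, which is precisely what must be proven first; the same circularity affects your proposed passage from weak to strong form. In short: valid alternative derivation of the weak equation, correct strategic outline, but the quantitative analysis of the limiting measure near the absorbing boundary---which is what makes the theorem nontrivial---is left open.
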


The price of the credit products that we consider are functions of the loss function $L_t$.
There is no analytic solution for this SPDE, though it can be viewed as the Zakai equation for a filtering problem, and 
thus we require numerical techniques for its solution. One natural approach is just to use a Monte Carlo technique to 
simulate the whole basket, and for small sizes of basket this would be a 
natural approach. However, as the basket size increases, the numerical solution of the limit SPDE becomes 
more computationally efficient and we discuss this in our simplified setting.

An outline of the paper is as follows. We begin with a description of the mechanics and basic valuation methods 
of synthetic collateralised debt obligations 
in Section \ref{section:cdo} in order to provide the necessary background for later 
sections. The mathematical core of the work
is in Section~\ref{section:derivation} where we develop our infinite dimensional model for portfolio credit starting 
from a multidimensional structural model and prove Theorem~\ref{thm:main}. 
We make strong assumptions with the aim of delivering a relatively simple, tractable model that encapsulates the
information required to calculate the loss distribution for a portfolio of risky assets.
The aim in Section \ref{section:numerics} is to develop a suitable numerical scheme for solving the SPDE.
Section \ref{section:pricing} discusses the calibration and performance of the model when pricing tranches
of the iTraxx before and after the credit crunch.

\section[Collateralised Debt Obligations]{Collateralised debt obligations}\label{section:cdo}

Collateralised Debt Obligations (CDOs) are securitized interests in pools of credit risky
assets. These assets can include mortgages, bonds, loans and credit derivatives. The CDO
repackages the credit risk of the reference portfolio into multiple tranches that are
then passed on to investors. 
Prior to the `credit crunch' the synthetic CDO, credit indices and single name Credit Default Swap (CDS) 
market together 
made up the majority of the total traded notional in the credit derivative market. However the index 
tranche market is currently the only area that is still active. The bespoke CDO business has yet to return
although there are a few signs of activity.

Although there are many different types of CDO, here we will 
be focussing on what is known as a synthetic CDO i.e. one whose collateral pool consists 
entirely of credit default swaps. It is possible to trade single
tranches within a synthetic CDO without the entire structure being constructed. In this
case the two parties of the transaction, the protection buyer and protection seller,
exchange payments as if the CDO had been set-up. The performance of this single tranche
CDO (STCDO) is dependent on the number of defaults that occur in the reference portfolio
during the lifetime of the contract.

Each tranche is defined by two points that determine its place within the capital
structure: the attachment point and the higher valued detachment point. These are usually
expressed as a percentage of the total portfolio notional. The tranche notional is
defined as the difference between the attachment and detachment points. When losses are
incurred (the loss is the notional of the defaulted entity corrected
for recovery), and the cumulative loss in the collateral pool is between the attachment and
detachment point, the seller pays the buyer an amount equal to the loss incurred within
the tranche. The tranche notional is then reduced by this amount. This means that when
the cumulative loss exceeds the detachment point the tranche notional is zero. In return
for this protection, the buyer pays a quarterly premium based off a fixed spread and the
outstanding tranche notional. 
Say we have $N$ entities in our reference credit portfolio each with notional $N_0$. We
define the total loss $L_t$ on the portfolio as
\begin{equation}
\label{lossvariable} L_t = \sum_{i=1}^{N} L_i 1_{\lbrace \tau_{i} \leq t \rbrace},
\end{equation}
where $L_{i} = N_0(1-R_{i})$, $R_{i}$ and $\tau_{i}$ are the recovery rate and default time
of the $i$-th entity respectively. If we assume the recovery rate is the same across all credit
entities and equal to a value $R$ then we can write
\begin{equation}
\label{lossvariable2} L_t = N_0(1-R)\sum_{i=1}^{N} 1_{\lbrace \tau_{i} \leq t \rbrace}.
\end{equation}
The outstanding tranche notional, $Z_t$, of a single tranche within a synthetic CDO is
given by
\begin{equation}
\label{outstandingnotional} Z_t = [d - L_t]^{+} - [a - L_t]^{+},
\end{equation}
and the tranche loss $Y_t$ as
\begin{equation}
\label{trancheloss}
Y_t = [L_t-a]^{+} - [L_t-d]^{+},
\end{equation}
where $a$ is the tranche attachment point and $d$ is the tranche detachment point.

As for a Credit Default Swap (CDS) the value of a STCDO is given by the difference between 
the fee leg and
the protection leg. The protection buyer pays a regular fixed spread on the outstanding
notional of the tranche. We denote the payment dates by $T_{i}$, $1 \leq i
\leq n$, the intervals by $\delta_i = T_i-T_{i-1}$ and the value of a bank account at time $t$ by
$b(t)$. Then the value of the fee leg is given by
\begin{equation}
\label{cdofeeleg} s V^{fee} = s \sum_{i=1}^{n} \frac{\delta_{i}}{b(T_i)}
\mathbb{E}[Z_{T_{i}}],
\end{equation}
where the expectation is with respect to a suitable pricing measure.
The protection seller only makes payments to the buyer when the tranche incurs losses,
and the value of this payment is equal to the change in the tranche loss $Y_t$. However,
we can express the value of the protection leg in terms of the outstanding tranche
notional $X_t$ as follows
\begin{equation} \label{cdoprotectionleg} V^{prot} = \sum_{i=1}^{n}
\frac{1}{b(T_{i})}\mathbb{E}[Z_{T_{i-1}} - Z_{T_{i}}],
\end{equation}
assuming that the losses are paid at the coupon dates.
As in a CDS contract the par spread $s$ of the tranche is chosen
to make the initial value zero hence is calculated as
\begin{equation}
\label{cdospread}
s = \frac{V^{prot}}{V^{fee}}.
\end{equation}
{From} (\ref{cdofeeleg}) and (\ref{cdoprotectionleg}) we see that the key to finding the
par spread is obtaining the distribution of the outstanding tranche notional; from
(\ref{outstandingnotional}), this is equivalent to finding the distribution of the loss
$L_t$. As all portfolio credit derivatives are essentially options on this loss variable
the heart of every multiname credit model is determining its distribution.

\section{An infinite dimensional structural model}\label{section:derivation}

Our aim in this section is to establish Theorem~\ref{thm:main}.
We will begin by describing the system (\ref{1}) by a measure valued process and showing that
there is a limit empirical measure for the infinite system. We then proceed to establish its behaviour near 0 before proving that
its evolution can be captured by an SPDE. 

\subsection{The limit empirical density}

Recall the equally weighted 
empirical measure for the entire portfolio is given by
\begin{equation*}
 \bar{\nu}_{N,t} = \frac{1}{N}\sum_{i=1}^N \delta_{X_t^i}.
\end{equation*}
We can write this as
\[ \bar{\nu}_{N,t} = L_{N,t} \delta_0 + \nu_{N,t}, \]
where 
\[ \nu_{N,t} = \frac1N \sum_{i=1}^N \delta_{X_t^i}1_{\{t < T_i^0\}}, \;\; L_{N,t} = \frac1N \sum_{i=1}^N 
1_{\{t \geq T_i^0\}}. \]
Note that $L_{N,t}$ is a loss function in that it is the proportion of companies that have 
defaulted by time $t$.

Let $\mathbb{R}_+=[0,\infty)$. We write $\mathcal{P}(\mathbb{R}_+)$ for the set of probability measures 
on $\mathbb{R}_+$ and $\mathcal{P}(C_{\mathbb{R}_+}[0,\infty))$ for the set of probability measures on 
$C_{\mathbb{R}_+}[0,\infty)$ where the topology is always that of weak convergence. 
We write $C_{\mathcal{P}(\mathbb{R}_+)}[0,\infty)$ for the continuous $\mathcal{P}(\mathcal{R}_+)$-valued 
functions on $[0,\infty)$. 

\begin{thm}
There exists a
$C_{\mathcal{P}(\mathbb{R}_+)}[0,\infty)$-valued random variable $\bar{\nu}$ such that
\[ \bar{\nu}_{t}=\lim_{N\rightarrow\infty}\bar{\nu}_{N,t}=\lim_{N\rightarrow\infty}
    \frac{1}{N}\sum_{i=1}^{N}\delta_{X_{t}^{i}}, \; \text{$\bp$- a.s. }.\]
We also have a decomposition for the limit into two subprobability measures
\[\bar{\nu}_{t}=L_t \delta_0 + \nu_t. \]
\label{Th2}
\end{thm}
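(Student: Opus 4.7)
The plan is to exploit the conditional independence of the processes $X^i$ given the common driver $M$ together with the directing measure of the initial conditions, apply a conditional strong law of large numbers to obtain pointwise-in-$t$ almost sure convergence of $\bar{\nu}_{N,t}$, and then upgrade this to path-continuous convergence. I first extend $\{X_0^i\}$ to an infinite exchangeable sequence, which is implicit in taking the large-portfolio limit. By de Finetti's theorem there is a random probability measure $\bar{\pi}_0$ on $[C_B,\infty)$ such that the $X_0^i$ are i.i.d.\ with law $\bar{\pi}_0$ conditional on $\mathcal{G}_0 := \sigma(\bar{\pi}_0)$. Since the idiosyncratic Brownian motions $W^i$ are i.i.d.\ and independent of $(M,\mathcal{G}_0)$, and each trajectory $X^i$ is a measurable functional of $(X_0^i,W^i,M)$ through (\ref{1}) with absorption at $0$, the processes $(X^i)_{i\geq 1}$ are i.i.d.\ conditional on $\mathcal{G} := \sigma(M)\vee \mathcal{G}_0$.

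For each fixed $t\geq 0$ and each bounded continuous $f:\mathbb{R}_+\to\mathbb{R}$, the conditional strong law of large numbers yields
\[
\langle f,\bar{\nu}_{N,t}\rangle \;=\; \frac{1}{N}\sum_{i=1}^N f(X_t^i) \;\longrightarrow\; \be\bigl[f(X_t^1)\,\big|\,\mathcal{G}\bigr] \quad \bp\text{-a.s.}
\]
Fixing a countable convergence-determining family $\{f_k\}\subset C_b(\mathbb{R}_+)$ and a countable dense set of times $\mathbb{Q}_+$, I intersect countably many $\bp$-null sets to obtain one full-measure event on which $\bar{\nu}_{N,t}\to\bar{\nu}_t$ weakly for every $t\in\mathbb{Q}_+$, where $\bar{\nu}_t(A) := \bp(X_t^1\in A\mid\mathcal{G})$. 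Absorption at $0$ gives $\{X_t^1=0\}=\{T_0^1\leq t\}$, which identifies the decomposition $\bar{\nu}_t = L_t\delta_0 + \nu_t$ with $L_t = \bp(T_0^1\leq t\mid\mathcal{G})$ and $\nu_t$ supported on $(0,\infty)$.

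To extend this to the $C_{\mathcal{P}(\mathbb{R}_+)}[0,\infty)$ statement I need continuity of the candidate limit and equicontinuity of the empirical paths. Continuity of $t\mapsto\bar{\nu}_t$ follows because, for bounded continuous $f$, path continuity of $X^1$ together with dominated convergence for conditional expectations yields continuity of $t\mapsto\be[f(X_t^1)\mid\mathcal{G}]$, while continuity of $t\mapsto L_t$ follows from non-atomicity in $t$ of the conditional law of $T_0^1$ (since, conditional on $M$, $X^1$ is a continuous process with a non-degenerate diffusion coefficient $\sqrt{1-\rho}$, so its hitting time of $0$ has no atoms). Equicontinuity of $t\mapsto \bar{\nu}_{N,t}$ against each $f_k$ is obtained from the uniform modulus of continuity of the $X^i$ (via L\'evy's modulus for the idiosyncratic Brownian motions combined with the common continuous drift coming from $M$), which transfers rational-time convergence to uniform convergence on compacts in the weak topology.

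The main obstacle I anticipate is the interaction between the weak topology on $\mathcal{P}(\mathbb{R}_+)$ and the atom at the absorbing boundary. Since $1_{\{x=0\}}$ is not continuous, the test-function method alone does not automatically certify convergence of the atomic mass $L_{N,t}\to L_t$. I would handle this by treating $L_{N,t}$ separately: $L_{N,t}=\frac{1}{N}\sum_{i=1}^N 1_{\{T_0^i\leq t\}}$ is an empirical average of conditionally i.i.d.\ indicators, so the same conditional LLN gives $L_{N,t}\to L_t$ a.s.\ for each $t$, and monotonicity of $t\mapsto L_{N,t}$ promotes this to uniform convergence on compacts via a Glivenko--Cantelli-type argument. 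Combining this atomic-mass convergence with the weak convergence restricted to test functions vanishing near $0$ yields convergence of $\bar{\nu}_{N,t}$ to $\bar{\nu}_t$ in $\mathcal{P}(\mathbb{R}_+)$, jointly continuous in $t$, completing the proof.
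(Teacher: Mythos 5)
Your proof is correct in outline, but it follows a genuinely different route from the paper's. The paper's argument works entirely at the path level: it first establishes that the absorbed trajectories $\{X^i_\cdot\}$ are exchangeable in $C_{\mathbb{R}_+}[0,\infty)$ (by reducing to the unabsorbed processes and using exchangeability of increments), then invokes de Finetti's theorem directly to get an a.s.\ limit $\bar{\nu}\in\mathcal{P}(C_{\mathbb{R}_+}[0,\infty))$, defines $\bar{\nu}_t=\bar{\nu}\circ P_t^{-1}$ as a time-$t$ marginal, and proves continuity of $t\mapsto\bar{\nu}_t$ in one shot using path continuity plus Fatou's lemma for sets and the portmanteau characterization (``$\liminf_n\bar{\nu}_{t_n}(U)\geq\bar{\nu}_{t_0}(U)$ for open $U$''). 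You instead identify the limit concretely as the conditional law $\bar{\nu}_t(\cdot)=\bp(X^1_t\in\cdot\mid\mathcal{G})$ with $\mathcal{G}=\sigma(M)\vee\sigma(\bar{\pi}_0)$, obtain pointwise-in-$t$ convergence from a conditional SLLN, and then lift from rational times to all times via an equicontinuity/tightness argument. The two constructions produce the same object (de Finetti's directing random measure is precisely the conditional law given the common driver and the directing measure of the initial conditions), so the difference is expository rather than substantive. What the paper's route buys is economy: working in $\mathcal{P}(C_{\mathbb{R}_+}[0,\infty))$ from the outset makes the continuity step a one-liner and avoids the dense-times-plus-equicontinuity lift, which in your version needs some care (a uniform-in-$N$ modulus for $\tfrac1N\sum_i f_k(X^i_\cdot)$ is not immediate from L\'evy's modulus for a single Brownian motion; you would want a Kolmogorov-type moment criterion or an explicit Glivenko--Cantelli argument). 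What your route buys is an explicit identification of the limit and, via your separate treatment of the atomic mass, the stronger statement $L_{N,t}\to L_t$; note however that the theorem as stated only requires the decomposition $\bar{\nu}_t=L_t\delta_0+\nu_t$ of the \emph{limit}, not convergence of the atomic masses, so that extra argument (and the worry about $1_{\{x=0\}}$ not being a continuity set) is more than the statement demands. The paper simply defines $L_t:=\bar{\nu}_t(\{0\})$ and $\nu_t:=\bar{\nu}_t|_{(0,\infty)}$, which is always available.
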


\begin{proof}
Let us denote the system with the same dynamics but without default by $\{\tilde{X}_{t}^{i}\}$. Then
\[X_{t}^{i}=\tilde{X}_{t}^{i}1_{\{\min\limits_{0\leq s\leq t}\tilde{X}_{s}^{i}>0\}}:=
F\left(\tilde{X}_{s}^{i}, 0\leq s\leq t\right).\]
Since $F$ is independent of $i$, in order to show that $\{X^{i}\}$ is exchangeable in $C_{\mathbb{R}}[0, \infty)$ 
we only need to show $\{\tilde{X}^{i}\}$ is exchangeable in $C_{\mathbb{R}}[0, \infty)$.

Since $\tX_{t}^{i}=X_{0}^{i}+\mu t+\sqrt{1-\rho} W_{t}^{i}+\sqrt{\rho}M_{t}$ for all $t$, and 
$\{X_{0}^{i}\}$ is an exchangeable family, we have that $\{\tX_{t}^{1}, ... , \tX_{t}^{N}\}$ is 
exchangeable at any time $t$. 

We prove that for any $N$, $\{\tX_{\cdot}^{1}, ... , \tX_{\cdot}^{N}\}$ is exchangeable in 
$C_{\mathbb{R}_+}[0, \infty)$, the continuous non-negative functions on $[0,\infty)$. 
In fact, for any Borel sets $A_{1}, ... , A_{N}\in C_{\mathbb{R}_+}[0, \infty)$, we need to prove that for any permutation 
$\sigma$, we have
\[\mathbb{P}\left\{\tX_{\cdot}^{1}\in A_{1}, ... , \tX_{\cdot}^{N}\in A_{N}\right\}=\mathbb{P}\left\{\tX_{\cdot}^{\sigma(1)}\in A_{1},
\dots, \tX_{\cdot}^{\sigma(N)}\in A_{N}\right\}.\]
It suffices to choose the following $A_{i}$'s: for any $n\in\mathbb{N}$, take $A_{i,1}, ... , A_{i,n}\in\mathcal{B}(\mathbb{R}_+)$ 
for $i=1,\dots,N$ with a time set $0=t_0<t_1< \dots < t_n$, and set
\[A_{i}=\left\{\tX_{t_{1}}^{i}\in A_{i, 1}, \tX_{t_{2}}^{i}-\tX_{t_{1}}^{i}\in A_{i,2}, ... , \tX_{t_{n}}^{i}-\tX_{t_{n-1}}^i
\in A_{i,n}\right\},\]
Therefore we have,
\begin{eqnarray*}
\lefteqn{\mathbb{P}\left(\tX_{\cdot}^{1}\in A_{1}, ... , \tX_{\cdot}^{N}\in A_{N}\right)\allowdisplaybreaks}\\
&=& \mathbb{P}\left(\bigcup_{i=1}^N \{\tX_{t_{1}}^i\in A_{i, 1}, \tX_{t_{2}}^i-\tX_{t_{1}}^i\in A_{i,2}, ... , 
\tX_{t_{n}}^i-\tX_{t_{n-1}}^i \in A_{i,n}\}\right) \\
&=&\prod_{j=1}^n \mathbb{P}\left(\bigcup_{i=1}^N\left\{\tX_{t_j}^i-\tX_{t_{j-1}}^i\in A_{i,j}\right\}\right) \allowdisplaybreaks\\
&=&\prod_{j=1}^n \mathbb{P}\left(\bigcup_{i=1}^N\left\{\tX_{t_j}^{\sigma(i)}-\tX_{t_{j-1}}^{\sigma(i)}\in A_{i,j}\right\}\right) 
\allowdisplaybreaks\\
&=&\mathbb{P}\left\{\tX_{\cdot}^{\sigma(1)}\in A_{1}, ... , \tX_{\cdot}^{\sigma(N)}\in A_{N}\right\},
\end{eqnarray*}
by the exchangeability of the increments of $\{\tX_{t}^{1}, ... , \tX_{t}^{N}\}$ at any time $t$. 
Hence $\{\tX_{\cdot}^{1}, ... , \tX_{\cdot}^{N}\}$ 
is exchangeable in $C_{\mathbb{R}}[0, \infty)$. As a consequence we have $\{X_{\cdot}^{1}, ... , X_{\cdot}^{N}\}$
is exchangeable in $C_{\mathbb{R}_+}[0, \infty)$ and for a fixed $t$, $\{X_t^{1}, ... , X_t^{N}\}$
is exchangeable in $\mathbb{R}_+$.
As the system (\ref{1}) is easily extended to an infinite particle system, by de Finetti's 
theorem, see for example, \cite{Aldous1985}, 
\[\bar{\nu}_{\cdot}=\lim\limits_{N\rightarrow+\infty}\frac{1}{N}\sum\limits_{i=1}^{N}\delta_{X_{\cdot}^{i}}\]
exists almost surely in $\mathcal{P}(C_{\mathbb{R}_+}[0,\infty))$.

We now need to show that the $\{\nu_t,t\in [0,\infty)\}$ is a continuous process in the space of probability measures.
We define a projection mapping
\[P_{t}: C_{\mathbb{R}}[0, \infty)\rightarrow\mathbb{R}\]
by setting, for any $Y_{\cdot}\in C_{\mathbb{R}}[0, \infty)$,
\[P_{t}(Y_{\cdot})=Y_{t}.\]
Then define $\bnu_{t}:=\bnu\circ P_{t}^{-1}\in\mathcal{P}(\mathbb{R})$. We first show that
\[\bnu_{t}=\lim_{N\rightarrow\infty}\frac{1}{N}\sum_{i=1}^N\delta_{X_{t}^{i}}.\]
To establish this we denote
\[\theta_N=\frac{1}{N}\sum_{i=1}^{N}\delta_{X_{t}^{i}}, \;
 \theta=\lim_{N\rightarrow\infty}\frac{1}{N}\sum_{i=1}^N\delta_{X_{t}^{i}},\]
where $\theta_N$ converges weakly to $\theta$ and $\theta$ exists in $\mathcal{P}(\mathbb{R})$ almost surely by the 
exchangeability of $\{X^{i}_{t}\}$ at any time $t$. For any $h\in C_{b}(\mathbb{R})$, the collection of all the bounded 
and continuous functions on $\mathbb{R}$, we have
\[
\int h(x) \theta(dx)= \lim\limits_{N\rightarrow\infty}\int h(x)\theta_N(dx).
\]
Define
\[\alpha_N=\frac{1}{N}\sum_{i=1}^{N}\delta_{X_{\cdot}^{i}}\in\mathcal{P}(C_{\mathbb{R}}[0,\infty)),
\; k=h\circ P_{t} \in C_b(C_{\mathbb{R}}[0,\infty)),\]
then
\[\theta_N=\alpha_N\circ P_{t}^{-1},\]
and $\alpha_N$ converges weakly to $\bnu$ in $\mathcal{P}(C_{\mathbb{R}}[0,\infty))$. Thus
\begin{align*}
\int h(x) \theta(dx)=& \lim_{N\rightarrow\infty}\int k\circ P_{t}^{-1}(x) (\alpha_N \circ P_{t}^{-1})(dx)\allowdisplaybreaks\\
=&\int k\circ P_{t}^{-1}(x)\bnu\circ P_{t}^{-1}(dx)\allowdisplaybreaks\\
=&\int h(x) \bnu_{t} (dx).
\end{align*}
Therefore $\bnu_{t}=\theta=\lim_{N\rightarrow+\infty}\frac{1}{N}\sum_{i=1}^N\delta_{X_{t}^{i}}\in
\mathcal{P}(\mathbb{R})$.
Next we want to show that $\bnu_{t}\in C_{\mathcal{P}(\mathbb{R})}[0,\infty)$. 
By definition it suffices prove that when $t_{n}\rightarrow t_{0}$, we have $\bnu_{t_{n}}\rightarrow \bnu_{t_{0}}$ 
weakly in
$\mathcal{P}(\mathbb{R})$, i.e., we want to show that for any open set $U\in\mathcal{B}(\mathbb{R})$, 
$\liminf_{n\rightarrow\infty} \bnu_{t_{n}}(U)\geq \bnu_{t_{0}}(U)$ [\cite{Ethier1986}, Theorem 3.3.1]. 
In fact we have, by continuity of $Y$ and Fatou's Lemma for sets, that
\begin{align*}
\bnu_{t_{0}}(U)=&\bnu\circ P_{t_{0}}^{-1}(U)=\bnu\left(\{Y_{\cdot}|Y_{t_{0}}\in U\}\right) \\
=& \bnu\left(\bigcup_{n=1}^{\infty}\bigcap_{k=n}^{\infty}\{Y_{\cdot}|Y_{t_{k}}\in U\}\right) \\
\leq&\lim_{n\rightarrow\infty}\inf_{k\geq n}\bnu\left(\{Y_{\cdot}|Y_{t_{k}}\in U\}\right)=
\liminf_{n\rightarrow\infty} \bnu_{t_{n}}(U).
\end{align*}
Therefore, the process $\{ \bnu_{t}: t\in [0,\infty)\}$ 
exists almost surely in $C_{\mathcal{P}(\mathbb{R})}[0,\infty)$.

The decomposition follows from the decomposition for $N$ companies. We then define $L_t = \bar{\nu}_t(\{0\})$ 
and $\nu_t$ to be $\bar{\nu}_t$ restricted to $(0,\infty)$.
\end{proof}

For a measure $\zeta_t$ and integrable function $\phi$ we write
\begin{equation}
\label{scalarproduct} \left\langle\phi,\zeta_t\right\rangle = \int \phi(x)\zeta_t(dx).
\end{equation}
Let $\bar{C}:=\{f: f\in C^2_b(0,\infty), f(0)=0,\lim_{x\to\infty}f(x)=0\}$.
Using the empirical measure (\ref{empiricalmeasure}) we define a
family of processes $F_t^{N,\phi}$ for $\phi\in \bar{C}$ by
\begin{equation}
F_{t}^{N,\phi} = \left\langle\phi,\bar{\nu}_{N,t}\right\rangle = \frac{1}{N}\sum_{i=1}^N
\phi(X_t^i) = \left\langle\phi,\nu_{N,t}\right\rangle
\end{equation}

As $X^i_t=0$ for $t>T_0^i$, and hence $\phi(X_t^i)=0$ for $t>T_0^i$, in order to apply It\^{o}'s formula to
$F^{N,\phi}_t$ we write $F^{N,\phi}_t=\dfrac{1}{N}\sum_{i=1}^{N}\phi(X_{t}^{i})
1_{\{t<T_{0}^{i}\}}$. Thus we have

\begin{eqnarray*}
F^{N,\phi}_t - F^{N,\phi}_0 &=& \dfrac{1}{N}\sum_{i=1}^{N}\int_{0}^{t}1_{\{s\leq T_{0}^{i}\}} 
\left(\phi^{\prime}(X_{s}^{i})dX_{s}^{i}+\dfrac{1}{2}\phi^{\prime\prime}(X_{s}^{i})
d[X^{i}_{s}] \right)\nonumber \allowdisplaybreaks\\
&=&\dfrac{1}{N}\sum_{i=1}^{N}\int_{0}^{t}1_{\{s\leq T_{0}^{i}\}}\left[\phi^{\prime}(X_{s}^{i}) \mu ds+
\phi^{\prime}(X_{s}^{i})\sqrt{1-\rho}dW_{s}^{i} 
+\phi^{\prime}(X_{s}^{i})\sqrt{\rho}dM_{s}+\dfrac{1}{2}\phi^{\prime\prime}(X_{s}^{i})ds\right]\nonumber \allowdisplaybreaks\\
&=&\int_{0}^{t}\dfrac{1}{N}\sum_{i=1}^{N}(\mu \phi^{\prime}(X_{s}^{i})+\dfrac{1}{2}
\phi^{\prime\prime}(X_{s}^{i}))1_{\{s<T_{0}^{i}\}}ds +\int_{0}^{t}\dfrac{1}{N}\sum_{i=1}^{N}\sqrt{1-\rho}
\phi^{\prime}(X_{s}^{i})1_{\{s< T_{0}^{i}\}}dW_{s}^{i}\allowdisplaybreaks\nonumber\\
&&\qquad +\int_{0}^{t}\dfrac{1}{N}\sum_{i=1}^{N}\sqrt{\rho}\phi^{\prime}(X_{s}^{i})1_{\{s< 
T_{0}^{i}\}}dM_{s}
\end{eqnarray*}

If we define the second order linear operator $\mathcal{A}$ by
\begin{displaymath}
\mathcal{A} = \mu\frac{\partial}{\partial x} + \frac{1}{2} \frac{\partial^2}{\partial x^2},
\end{displaymath}
we have
\begin{eqnarray}
F_t^{N,\phi} &=& F_0^{N,\phi} + \int_0^t \left\langle \mathcal{A}\phi,\nu_{N,s} \right\rangle ds 
+ \int_0^t \left\langle\sqrt{\rho}\phi^{\prime},\nu_{N,s} \right\rangle dM_s \nonumber \\ 
&&\qquad + \int_0^t \frac{1}{N}\sum_{i=1}^N \phi^{\prime}(X_s^i)\,\sqrt{1-\rho}\,dW_s^i.\label{3.6}
\end{eqnarray}
We now pass to the limit by letting $N\rightarrow\infty$.

In order to determine what happens we first focus on the idiosyncratic term in
(\ref{3.6}) 
\begin{equation} \label{idiosyncraticterm}
I^{\phi}_{t,N} = \int_0^t \frac{1}{N}\sum_{i=1}^N \sqrt{1-\rho} \phi^{\prime}(X_s^i)\,dW_s^i.
\end{equation}
As $\phi'$ is bounded $I^{\phi}_{t,N}$ is a martingale and, by the independence of the $W_t^i$ it has quadratic variation
\begin{displaymath}
[I_N^{\phi}]_t = \int_0^t \frac{1}{N^2}\sum_{i=1}^N (1-\rho)\,(\phi^{\prime}(X_s^i))^2\,ds.
\end{displaymath}
As $\phi\in \bar{C}$ there exists a constant $K_{\phi}$ such that $|\phi'|\leq K_{\phi}$. Thus
\[ \lim_{N\rightarrow\infty}\frac{1}{N}\sum_{i=1}^N
\int_0^t (1-\rho) \mid\phi^{\prime}(X_s^i)\mid^2\,ds \leq K_{\phi}^2 t, \]
and hence we have for any such $\phi$
\[ \lim_{N\rightarrow\infty}\frac{1}{N^2}\sum_{i=1}^N
\int_0^t (1-\rho) \mid\phi^{\prime}(X_s^i)\,\mid^2 \,ds \leq \lim_{N\rightarrow\infty}
\frac{1}{N} K_{\phi}^2t = 0, \;\;\forall t\in [0,T]. \]
Thus the random term due to the idiosyncratic component of the asset values has
become deterministic in the infinite dimensional limit and must vanish almost surely. 

We also note that as $\phi',\phi''$ are bounded and $\nu_{N,s}$ is a probability measure, we can apply the dominated 
convergence theorem to take the limit under the integrals in the other terms in (\ref{3.6}). 
We summarize in the following

\begin{thm}
The sequence of empirical measures $\nu_{N,t}$ on $(0,\infty)$ satisfies for all $\phi\in\bar{C}$,
\begin{displaymath}
F_t^{N,\phi} \rightarrow F_t^{\phi} = \left\langle \phi,\nu_t \right\rangle \quad
\textrm{as } N\rightarrow\infty,\;\;a.s.  \end{displaymath} 
The evolution of the limit empirical measure in the weak sense is given by
\begin{equation}
\left\langle\phi,\nu_t\right\rangle = \left\langle\phi,\nu_0\right\rangle +
\int_0^t\left\langle \mathcal{A}\phi,\nu_s \right\rangle ds +
\int_0^t\left\langle \srho\phi^{\prime},\nu_s\right\rangle
dM_s, \;\;\forall \phi\in\bar{C}. \label{martingaleproblem}
\end{equation}
\end{thm}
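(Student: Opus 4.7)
The proof essentially just packages the ingredients already assembled above the theorem statement. My plan is to combine three facts: the almost-sure weak convergence $\bar{\nu}_{N,t}\to\bar{\nu}_t$ from the previous theorem, the It\^o expansion (3.6), and the $L^2$ bound just derived on the idiosyncratic martingale.

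First I would establish the pointwise convergence $F_t^{N,\phi}\to\langle\phi,\nu_t\rangle$. Since $\phi\in\bar{C}$ is bounded and continuous on $[0,\infty)$ (extending by $\phi(0)=0$), and $\bar{\nu}_{N,t}\to\bar{\nu}_t$ weakly almost surely by the previous theorem, the continuous mapping $\mu\mapsto\langle\phi,\mu\rangle$ gives $\langle\phi,\bar{\nu}_{N,t}\rangle\to\langle\phi,\bar{\nu}_t\rangle$ a.s. Because $\phi(0)=0$, the Dirac atom at $0$ contributes nothing, so the limit equals $\langle\phi,\nu_t\rangle$, which is the first assertion.

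Next I would pass to the limit term-by-term in (3.6). For the drift, $\mathcal{A}\phi$ is bounded (as $\phi',\phi''$ are), so $\langle\mathcal{A}\phi,\nu_{N,s}\rangle$ is uniformly bounded in $N,s,\omega$ and converges a.s.\ pointwise in $s$ by the first part; dominated convergence then transfers this convergence to the $ds$-integral. For the market stochastic integral, $\langle\sqrt{\rho}\phi',\nu_{N,s}\rangle$ is likewise uniformly bounded by $\sqrt{\rho}K_\phi$ and converges pointwise a.s., so by It\^o isometry and bounded convergence
\[\be\left[\left(\int_0^t\langle\sqrt{\rho}\phi',\nu_{N,s}-\nu_s\rangle\,dM_s\right)^2\right]=\be\int_0^t\langle\sqrt{\rho}\phi',\nu_{N,s}-\nu_s\rangle^2\,ds\to 0,\]
yielding $L^2$ convergence, hence a.s.\ convergence along a subsequence. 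The idiosyncratic martingale $I^{\phi}_{t,N}$ satisfies $\be[(I^{\phi}_{t,N})^2]\le(1-\rho)K_\phi^2 t/N\to 0$ by the computation preceding the theorem, so it too vanishes in $L^2$ and a.s.\ along a subsequence.

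Combining along a common subsequence and using uniqueness of limits delivers the weak evolution equation almost surely for each fixed $t$. The main obstacle is the passage to the limit inside the $dM$-stochastic integral; once the integrands are seen to be uniformly bounded and to converge pointwise a.s., however, It\^o isometry reduces this to a routine $L^2$ argument, so the proof is essentially assembly rather than genuinely new estimation.
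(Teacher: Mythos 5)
Your overall plan matches the paper's compressed argument (the two paragraphs preceding the theorem statement), and you are actually more careful than the paper in spelling out the It\^o isometry step for the market integral, which the paper dismisses with a loose appeal to ``dominated convergence.'' The first part (the convergence $F_t^{N,\phi}\to\langle\phi,\nu_t\rangle$ for $\phi\in\bar{C}$) and the treatment of the idiosyncratic martingale are both fine.

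There is, however, a gap in the step where you pass to the limit in the drift and the $dM$ integrands. You write that $\langle\mathcal{A}\phi,\nu_{N,s}\rangle$ and $\langle\sqrt{\rho}\phi',\nu_{N,s}\rangle$ ``converge a.s.\ pointwise in $s$ by the first part,'' but the first part applies only to test functions in $\bar{C}$, which vanish at $0$; the crucial point there was precisely that $\phi(0)=0$ kills the atom at $0$. Neither $\mathcal{A}\phi=\mu\phi'+\tfrac12\phi''$ nor $\phi'$ need vanish at $0$. Concretely, for a bounded continuous $g$ on $[0,\infty)$ one has $\langle g,\nu_{N,s}\rangle = \langle g,\bar{\nu}_{N,s}\rangle - g(0)L_{N,s}$: weak convergence of $\bar{\nu}_{N,s}$ handles the first term, but for the second you need $L_{N,s}\to L_s$ a.s., which is \emph{not} a consequence of weak convergence alone (Portmanteau only yields $\limsup_N L_{N,s}\le L_s$ because $\{0\}$ is closed with nonzero limit mass and hence not a continuity set). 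The paper leaves this step implicit as well, but your explicit appeal to ``the first part'' is the wrong justification. The clean repair is to use exchangeability and de Finetti more fully: the conditional strong law of large numbers gives $\frac1N\sum_{i=1}^N G(X^i_\cdot)\to\langle G,\bar{\nu}\rangle$ a.s.\ for every \emph{bounded measurable} path functional $G$, not just continuous ones. Applying this to $G(Y)=\int_0^t \mathcal{A}\phi(Y_s)\,1_{\{T_0(Y)>s\}}\,ds$, to $G(Y)=\phi'(Y_s)1_{\{T_0(Y)>s\}}$ for fixed $s$, and to $G(Y)=1_{\{T_0(Y)\le s\}}$ (so that $L_{N,s}\to L_s$) fills the gap; the rest of your argument (dominated convergence on the $ds$ integral, It\^o isometry plus bounded convergence on the $dM$ integral, and identification of a.s.\ and $L^2$ limits) then goes through unchanged.
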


\subsection{The boundary condition}

The behaviour of $\nu_{t}$, the limit empirical measure on $(0,\infty)$, at the boundary zero is given in the 
following theorem:

\begin{thm} 
We have
\[ \lim_{\varepsilon\downarrow 0}\frac{\nu_{t}((0,\varepsilon))}{\varepsilon}=0, \; \text{ a.s.}.\]
\label{Th3}
\end{thm}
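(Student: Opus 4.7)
The plan is to bound $\mathbb{E}[\nu_t((0,\varepsilon))]$ by $C(t)\,\varepsilon^2$ and then upgrade this moment bound to the stated almost sure limit by a Borel-Cantelli argument along a geometric subsequence, combined with monotonicity in $\varepsilon$.

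The main structural observation is that, although the processes $\{X^i\}$ are correlated through the common factor $M$, the \emph{marginal} law of each $X^i$ is very simple. By L\'evy's characterisation, $B_t^i := \sqrt{1-\rho}\,W_t^i + \sqrt{\rho}\,M_t$ is a standard Brownian motion, so $X_t^i = X_0^i + \mu t + B_t^i$ is marginally a drifted Brownian motion started at the independent point $X_0^i \ge C_B > 0$, killed at $T_0^i$. This reduction is what makes the estimate tractable, since it sidesteps the moving-boundary problem one would encounter if one tried to work conditionally on $M$.

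Given this, I would write down the killed sub-probability density
\begin{equation*}
p^{\mathrm{k}}(t,x,y) = \frac{1}{\sqrt{2\pi t}}\left[e^{-(y-x-\mu t)^2/(2t)} - e^{-2\mu x}\,e^{-(y+x-\mu t)^2/(2t)}\right]
\end{equation*}
obtained from the reflection principle after a Girsanov shift, verify that $p^{\mathrm{k}}(t,x,0)=0$, and use the mean value theorem in $y$ together with a uniform bound on $\partial_y p^{\mathrm{k}}$ to obtain $p^{\mathrm{k}}(t,x,y)\le C(t)\,y$ for all $y\in(0,1]$ and $x\ge C_B$. Integrating gives $\mathbb{P}(X_t^1\in(0,\varepsilon),\,t<T_0^1\mid X_0^1=x)\le C(t)\,\varepsilon^2$, and averaging over $X_0^1$ together with exchangeability yields $\mathbb{E}[\nu_{N,t}((0,\varepsilon))]\le C(t)\,\varepsilon^2$ uniformly in $N$. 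Since $(0,\varepsilon)$ is open, the portmanteau inequality applied to the almost sure weak convergence of Theorem~\ref{Th2}, combined with Fatou's lemma, transfers this bound to the limit: $\mathbb{E}[\nu_t((0,\varepsilon))]\le C(t)\,\varepsilon^2$.

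To promote this to the almost sure conclusion, I would set $\varepsilon_n=2^{-n}$. Markov's inequality gives $\mathbb{P}(\nu_t((0,\varepsilon_n))/\varepsilon_n>\delta)\le C(t)\,\varepsilon_n/\delta$, which is summable in $n$, so by Borel-Cantelli $\nu_t((0,\varepsilon_n))/\varepsilon_n\to 0$ almost surely along the dyadic sequence. For general $\varepsilon\in(\varepsilon_{n+1},\varepsilon_n]$, monotonicity of $\varepsilon\mapsto\nu_t((0,\varepsilon))$ gives $\nu_t((0,\varepsilon))/\varepsilon\le 2\,\nu_t((0,\varepsilon_n))/\varepsilon_n$, so the limit persists. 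The main obstacle is the uniform-in-$x$ linear boundary estimate for the killed density over the initial-data range $x\ge C_B$; once that is in hand, the rest is standard measure-theoretic bookkeeping.
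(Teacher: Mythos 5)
Your proposal follows essentially the same route as the paper: reduce to the marginal law of a single drifted Brownian motion killed at the boundary, estimate the killed transition density near the origin, pass to the limit via weak convergence plus Fatou, and finish with Markov and Borel--Cantelli along a summable subsequence followed by monotone interpolation. The one genuine (and welcome) refinement is in the density estimate. The paper uses $1-e^{-2zx/t}\le 2zx/t$ and retains a linear factor $x^i$, yielding $\mathbb{P}\{X_t^i\in(0,\varepsilon),\inf_s X_s^i>0\}\le C_T\,x^i\,\varepsilon^2$; it then averages over the initial positions, invoking $\mathbb{E}(X_0^i)<\infty$ and the a.s.\ limit $K=\lim_N \frac1N\sum x^i$, which appears as a random multiplier in the Markov bound. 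You instead establish a bound on $\sup_{y}|\partial_y p^{\mathrm{k}}(t,x,y)|$ that is uniform over $x\ge C_B$ (valid for each fixed $t$, since the Gaussian factor in $\partial_y p^{\mathrm{k}}$ dominates the linear and exponential-in-$x$ prefactors), which produces $\mathbb{E}[\nu_t((0,\varepsilon))]\le C(t)\varepsilon^2$ with a deterministic constant. This sidesteps the handling of the random average $K$ altogether, at the mild cost of having to verify the uniform bound on $\partial_y p^{\mathrm{k}}$ carefully over the joint range $x\ge C_B$, $y\le\varepsilon$. Beyond this, the differences are cosmetic: you take $\varepsilon_n=2^{-n}$ where the paper takes $1/n^2$, both summable, and you cite portmanteau explicitly where the paper folds it into the Fatou step. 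The proposal is correct.
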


\begin{proof}
By the definition of $\nu_{t}$, properties of weak convergence and an application of Fatou's Lemma, we have
\begin{align}
\mathbb{E}[\nu_{t}((0,\varepsilon))]\leq &\mathbb{E}\left[\liminf_{N\rightarrow+\infty}
\frac{1}{N}\sum\limits_{i=1}^{N}1_{\{0<X_{t}^{i}<\varepsilon\}}\right]\allowdisplaybreaks\nonumber\\
\leq&\liminf_{N\rightarrow+\infty}
\frac{1}{N}\sum\limits_{i=1}^{N}\mathbb{P}\left\{X_{t}^{i}\leq \varepsilon, \inf\limits_{0\leq s\leq t}X_{s}^{i}>0\right\}.
\label{2}
\end{align}
For $t<T_{0}^{i}$, integrating the system (\ref{1}) from time $0$ to $t$, we have:
\[X_{t}^{i}=x^{i}+\mu t+\sqrt{1-\rho}W_{t}^{i}+\sqrt{\rho}M_{t}.\]
Since we know that
\[\sqrt{1-\rho}W_{t}^{i}+\sqrt{\rho}M_{t}\stackrel{d}=B_{t},\]
where $B_{t}$ is a standard Brownian motion on the same probability space, we have
\begin{align}
\mathbb{P}\left\{X_{t}^{i} < \varepsilon, \inf_{0\leq s\leq t}X_{s}^{i}>0\right\}=& 
\mathbb{P}\left\{x^{i}+\mu t+B_{t}<\varepsilon,
\inf_{0\leq s\leq t}(x^{i}+\mu s+B_{s})>0\right\}\allowdisplaybreaks\nonumber\\
=&\mathbb{P}^{x^{i}}\left\{\mu t+B_{t}<\varepsilon\right\}-\mathbb{P}^{x^{i}}\left\{\mu t+B_{t} <\varepsilon, 
\inf_{0\leq s\leq t}(\mu s+B_{s})\leq 0\right\}\allowdisplaybreaks\nonumber\\
=&\int_{-\infty}^{\varepsilon}\frac{1}{\sqrt{2\pi t}}e^{-(z-\mu t-x^{i})^{2}/2t}dz 
-\int_{-\infty}^{\varepsilon}\frac{1}{\sqrt{2\pi t}}e^{\mu (z-x^{i})-\mu^{2}t/2-(|z|+x^{i})^{2}/2t}dz 
\nonumber\allowdisplaybreaks\\
=&\frac{1}{\sqrt{2\pi t}}\int_{0}^{\varepsilon}\left(e^{-(z-\mu t-x^{i})^{2}/2t}-
e^{\mu (z-x^{i})-\mu^{2}t/2-(z+x^{i})^{2}/2t}\right)dz\nonumber\allowdisplaybreaks\\
=&\frac{1}{\sqrt{2\pi t}}\int_{0}^{\varepsilon}e^{-(z-\mu t-x^{i})^{2}/2t} 
\left(1-e^{-\frac{2zx^{i}}{t}}\right)dz\allowdisplaybreaks\nonumber\\
\leq& \frac{1}{\sqrt{2\pi t}}\left(1-e^{-\frac{2\varepsilon x^{i}}{t}}\right)\int_{0}^{\varepsilon}
e^{-(z-\mu t-x^{i})^{2}/2t}dz\allowdisplaybreaks\nonumber\\
\leq& \frac{1}{\sqrt{2\pi t}}\frac{2\varepsilon x^{i}}{t}\int_{0}^{\varepsilon}e^{-(z-\mu t-x^{i})^{2}/2t}dz.
\label{3}
\end{align}

Assume $\varepsilon<\frac{1}{2}C_{B}$. Since we have $x^{i}\geq C_{B}$, if $t<\frac{C_{B}-\varepsilon}{|\mu|}$, then
$|z-\mu t-x^{i}|>0,\; \forall 0<z<\varepsilon$
and there exists $C^{1}_{T}>0$ only depending on $T$ such that
\[\frac{1}{t^{\frac{3}{2}}}e^{-(z-\mu t-x^{i})^{2}/2t}\leq C^{1}_{T}, \;\; \forall t<\frac{C_{B}-\varepsilon}{|\mu|}.\]
If $t\geq\frac{C_{B}-\varepsilon}{|\mu|}$, then
\[\frac{1}{t^{\frac{3}{2}}}e^{-(z-\mu t-x^{i})^{2}/2t}\leq \frac{1}{\left(\frac{C_{B}-\varepsilon}{|\mu|}\right)^{\frac{3}{2}}}
\leq \frac{1}{\left(\frac{C_{B}}{2|\mu|}\right)^{\frac{3}{2}}}.\] 

Letting $C_{T}^{\prime}:=\max\left\{C_{T}^{1}, \frac{1}{\left(\frac{C_{B}}{2|\mu|}\right)^{\frac{3}{2}}}\right\}$, 
(\ref{3}) becomes
\begin{equation}
\mathbb{P}\left\{X_{t}^{i}< \varepsilon, \inf_{0\leq s\leq t}X_{s}^{i}>0\right\}\leq \frac{2}{\sqrt{2\pi}}
\varepsilon x^{i}\varepsilon C_{T}^{\prime}:= x^{i}C_{T}\varepsilon^{2},
\label{*4}
\end{equation}
where $C_{T}$ is a positive constant only depending on $T$. Thus by (\ref{2}) and (\ref{*4}) we have
\begin{equation}
\frac{\mathbb{E}[\nu_{t}((0,\varepsilon))]}{\varepsilon}\leq C_{T}
\varepsilon (\liminf_{N\rightarrow+\infty}\frac{1}{N}\sum_{i=1}^{N} x^{i}).
\label{5}
\end{equation}
Since $\{X_{0}^{1}, ... , X_{0}^{N}\}$ is an exchangeable family of integrable random variables, 
$\lim_{N\rightarrow+\infty}\frac{1}{N} \sum_{i=1}^{N} X^{i}_{0}$ exists and is finite almost surely. 
Let $K=\lim_{N\rightarrow\infty}\frac{1}{N}\sum_{i=1}^{N} x^{i}$. Now
(\ref{5}) becomes
\[\mathbb{E}\left[\frac{\nu_{t}((0,\varepsilon))}{\varepsilon}\right]\leq KC_{T}\varepsilon.\]
By Markov's inequality, for any $\lambda>0$ we have
\[\mathbb{P}\left\{\frac{\nu_{t}((0,\varepsilon))}{\varepsilon}>\lambda\right\}\leq \frac{KC_{T}\varepsilon}{\lambda},\]
therefore, for the subsequence $\varepsilon=\frac{1}{n^{2}}$,
\[\mathbb{P}\left\{\frac{\nu_{t}((0,\frac{1}{n^{2}}))}{\frac{1}{n^{2}}}>\lambda\right\}\leq \frac{KC_{T}}{\lambda n^{2}}.\]
Thus by the first Borel-Cantelli Lemma, as $\lambda>0$ is arbitrary and also $\frac{\nu_{t}((0,\frac{1}{n^{2}}))}{\frac{1}{n^{2}}}\geq 0$, 
we must have
\[\limsup_{n\rightarrow\infty}\frac{\nu_{t}((0,\frac{1}{n^{2}}))}{\frac{1}{n^{2}}}=0, \; \text{ a.s.}.\]
Now for any $\varepsilon>0$, there exists a $n$ such that $\frac{1}{(n+1)^{2}}\leq\varepsilon\leq\frac{1}{n^{2}}$ and hence
\[\limsup_{\varepsilon\downarrow 0}\frac{\nu_{t}((0,\varepsilon))}{\varepsilon}\leq 
\limsup_{n\rightarrow\infty}\frac{\nu_{t}((0,\frac{1}{n^{2}}))}{\frac{1}{(n+1)^{2}}}=
\limsup_{n\rightarrow\infty}\frac{\nu_{t}((0,\frac{1}{n^{2}}))}{\frac{1}{n^{2}}}\frac{\frac{1}{n^{2}}}
{\frac{1}{(n+1)^{2}}}=0 , \; \text{ a.s.}.\]
Since $\frac{\nu_{t}((0,\varepsilon))}{\varepsilon}\geq 0$, therefore
\[v(t,0):=\lim_{\varepsilon\downarrow 0}\frac{\nu_{t}((0,\varepsilon))}{\varepsilon}=0, \;\text{ a.s.}.\]
\end{proof}

Therefore, if there is a density for the empirical measure, it will satisfy a Dirichlet boundary condition.

Next we give an estimate on $\mathbb{E}[(\nu_{t}((0,\varepsilon)))^{2}]$ which will be needed later.
In order to do this we require an estimate for the distribution of the first passage times of two correlated Brownian motions, 
and the Brownian motions themselves. We use a transformation to independence and the formula derived in \cite{Iyengar1985}.

\begin{lem}
Let $B^{1}_{t}$ and $B^{2}_{t}$ be two correlated Brownian motions with constant correlation $|\varrho|<1$, $B^{1}_{0}=a_{1}>0$,
$B^{2}_{0}=a_{2}>0$ and law $\bp_B$. Then there exists
 $\varepsilon_{0}=\frac{1}{3}\sqrt{\frac{1-\varrho}{2}}\sqrt{\frac{a_{1}^{2}+a_{2}^{2}-
2\varrho a_{1}a_{2}}{1-\varrho^{2}}}$ such that for all $\varepsilon<\varepsilon_{0}$,
\[\mathbb{P}_B\left\{0<B^{1}_{t}<\varepsilon,\; \inf_{0\leq s \leq t}B^{1}_{s}>0, \; 0<B^{2}_{t}<\varepsilon, 
\; \inf_{0\leq s \leq t}B^{2}_{s}>0\right\}\leq C_{T} \varepsilon^{2+\frac{\pi}{\alpha}},\]
where $C_{T}=2^{1-\frac{\pi}{\alpha}}\left(\sqrt{\frac{a_{1}^{2}+a_{2}^{2}-2\varrho a_{1}a_{2}}{1-\varrho^{2}}} 
\right)^{\frac{\pi}{\alpha}}K_{T}\left(\sqrt{\frac{2}{1-\varrho}}\right)^{2+\frac{\pi}{\alpha}}$ 
and $K_{T}$ is a constant only depending on $T$; and
\begin{equation}
\alpha=\left\{
\begin{array}{lll}
&\pi+\tan^{-1}\left(-\frac{\sqrt{1-\varrho^{2}}}{\varrho}\right), &\varrho>0,\\
&\frac{\pi}{2}, &\varrho=0,\\
&\tan^{-1}\left(-\frac{\sqrt{1-\varrho^{2}}}{\varrho}\right),&\varrho<0.
\end{array}
\right.
\label{7*}
\end{equation}
Therefore, if $\varrho\geq 0$, we have $\frac{\pi}{2}\leq\alpha<\pi$ and $3<2+\frac{\pi}{\alpha}\leq 4$.
\label{Lemma1*}
\end{lem}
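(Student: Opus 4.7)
The plan is to decorrelate the Brownian motions by a linear change of coordinates, reducing the event in question to a problem about two-dimensional Brownian motion in a wedge with absorbing boundary, and then use the explicit heat-kernel formula in a wedge (this is the role of the Iyengar reference) to extract the correct power of $\varepsilon$.

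First I would set $W^{1}_{t}=B^{1}_{t}$ and $W^{2}_{t}=(B^{2}_{t}-\varrho B^{1}_{t})/\sqrt{1-\varrho^{2}}$, so that $(W^{1},W^{2})$ is a standard 2D Brownian motion starting from a point whose distance from the origin is exactly $R_{0}:=\sqrt{(a_{1}^{2}+a_{2}^{2}-2\varrho a_{1}a_{2})/(1-\varrho^{2})}$, the quantity appearing in $\varepsilon_{0}$. The half-planes $\{B^{1}>0\}$ and $\{B^{2}>0\}$ become the half-planes $\{W^{1}>0\}$ and $\{\varrho W^{1}+\sqrt{1-\varrho^{2}}\,W^{2}>0\}$, whose intersection is a wedge $\mathcal{W}_{\alpha}$ of opening angle $\alpha$ given by (\ref{7*}); one checks this directly by computing the angle between the two bounding rays, with the cases $\varrho>0$, $\varrho=0$, $\varrho<0$ matching those in the definition of $\alpha$.

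Next, the small square $(0,\varepsilon)\times(0,\varepsilon)$ pulls back to a parallelogram $P_{\varepsilon}$ with one vertex at the tip of $\mathcal{W}_{\alpha}$, area $\varepsilon^{2}/\sqrt{1-\varrho^{2}}$, and diameter at most $\varepsilon\sqrt{2/(1-\varrho)}$; this last bound is where the factor $\sqrt{2/(1-\varrho)}$ in $C_{T}$ comes from. The probability to be estimated then equals
\[
\int_{P_{\varepsilon}} p^{\alpha}_{t}\bigl((R_{0},\theta_{0}),y\bigr)\,dy,
\]
where $p^{\alpha}_{t}$ is the transition density for 2D Brownian motion killed at $\partial\mathcal{W}_{\alpha}$. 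Iyengar's formula gives the polar expansion
\[
p^{\alpha}_{t}\bigl((R_{0},\theta_{0}),(r,\theta)\bigr)=\frac{1}{\alpha t}\,e^{-(r^{2}+R_{0}^{2})/(2t)}\sum_{n=1}^{\infty}\sin\!\frac{n\pi\theta_{0}}{\alpha}\sin\!\frac{n\pi\theta}{\alpha}\,I_{n\pi/\alpha}\!\Bigl(\tfrac{rR_{0}}{t}\Bigr).
\]
Using the small-argument asymptotics $I_{\nu}(z)\sim (z/2)^{\nu}/\Gamma(\nu+1)$, the $n=1$ term produces a factor $r^{\pi/\alpha}$, and the tail of the series contributes higher powers of $r$; hence, once $r\leq \varepsilon\sqrt{2/(1-\varrho)}<R_{0}/3$ (which is exactly the role of the factor $1/3$ in $\varepsilon_{0}$), one obtains a bound
\[
p^{\alpha}_{t}\bigl((R_{0},\theta_{0}),(r,\theta)\bigr)\leq K_{T}\,R_{0}^{\pi/\alpha}\,r^{\pi/\alpha}
\]
uniform in $t\in(0,T]$, for some constant $K_{T}$ depending only on $T$.

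Finally I would just integrate this bound over $P_{\varepsilon}$: since the radial coordinate on $P_{\varepsilon}$ is bounded by $\varepsilon\sqrt{2/(1-\varrho)}$ and the area is of order $\varepsilon^{2}$, one collects the constants and recovers
\[
\mathbb{P}_{B}(\cdots)\leq K_{T}R_{0}^{\pi/\alpha}\bigl(\varepsilon\sqrt{2/(1-\varrho)}\bigr)^{\pi/\alpha}\cdot\bigl(\varepsilon\sqrt{2/(1-\varrho)}\bigr)^{2}\cdot 2^{-\pi/\alpha-1}\cdot 2,
\]
which is precisely the stated estimate $C_{T}\varepsilon^{2+\pi/\alpha}$ with $C_{T}$ as in the lemma. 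The main obstacle is controlling the Bessel series uniformly in $t\in(0,T]$ so that the small-$r$ asymptotic gives a \emph{clean} bound with the stated constants; bounding the tail $n\geq 2$ and the exponential $e^{-R_{0}^{2}/(2t)}/t$ factors (the latter being largest for small $t$) is the only genuinely technical point, and absorbing all such dependence into the single constant $K_{T}$ is what the statement is doing.
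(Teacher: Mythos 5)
Your high-level plan is the same as the paper's: decorrelate with an affine change of variables, recognise the event as a killed two-dimensional Brownian motion in a wedge of angle $\alpha$ started from polar radius $R_0=\sqrt{(a_1^2+a_2^2-2\varrho a_1a_2)/(1-\varrho^2)}$, apply Iyengar's series, and do power counting in $\varepsilon$. The geometric bookkeeping (identifying $\alpha$, the diameter bound $\varepsilon\sqrt{2/(1-\varrho)}$, the role of $R_0$ and of the threshold $\tfrac13 R_0$) also matches the paper.

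However, your sketch glosses over precisely the step where all the work lies, and the gloss is not correct as stated. You invoke the small-argument asymptotics $I_\nu(z)\sim (z/2)^\nu/\Gamma(\nu+1)$ to claim a bound $p^\alpha_t\le K_T R_0^{\pi/\alpha} r^{\pi/\alpha}$ uniform in $t\in(0,T]$. But here $z=rR_0/t$, and as $t\downarrow 0$ the argument $z$ is \emph{large}, not small; then $I_\nu(z)\sim e^z/\sqrt{2\pi z}$ and the small-argument asymptotic gives nothing. Your final paragraph flags the $t\downarrow 0$ behaviour as ``the only genuinely technical point'' but treats it as bookkeeping to be absorbed into $K_T$, when in fact it requires an idea. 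The paper does not use small-argument asymptotics at all. It derives the all-$z$ algebraic inequality
\[
I_{n\pi/\alpha}(z)\le \frac{1}{\big[\tfrac{n\pi}{\alpha}\big]!}\Big(\frac{z}{2}\Big)^{n\pi/\alpha}e^{z},
\]
obtained by replacing $\Gamma(m+n\pi/\alpha+1)$ with $m!\,[\tfrac{n\pi}{\alpha}]!$ in the power series and then bounding $\sum_m y^{2m}/(m!)^2$ by $\big(\sum_m y^m/m!\big)^2=e^{2y}$. Summing over $n$ produces an additional factor $(rR_0/2t)^{\pi/\alpha}e^{rR_0/2t}$. Combining the three exponentials $e^{-(r^2+R_0^2)/2t}$, $e^{rR_0/t}$ and $e^{rR_0/2t}$ and the prefactor yields
\[
\frac{1}{t^{1+\pi/\alpha}}\,e^{-\frac{r^2+R_0^2-3rR_0}{2t}},
\]
and it is only at this point that the condition $r<\tfrac13 R_0$ is used: it makes the quadratic form $r^2+R_0^2-3rR_0$ strictly positive, so the whole quantity is bounded on $(0,T]$ by a constant $K_T$. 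This competing-exponential mechanism, together with the specific numerical threshold $\tfrac13$ that makes the Gaussian win, is exactly what is missing from your argument and cannot be recovered from the small-argument asymptotic of $I_\nu$. To repair the proof you should replace the asymptotic step by an explicit all-argument bound of the above type and track the exponentials; once you do so you land in the paper's proof almost verbatim.
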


\begin{proof}
We begin by making a transformation to obtain 
a two-dimensional Brownian motion with independent components. We follow the setup and statements in \cite{Metzler2010}. 
Let $B_{t}=(B^{1}_{t}, B^{2}_{t})$ and consider the process $Z=\sigma^{-1}B$, where
\[\sigma=
\begin{bmatrix}
  \sqrt{1-\varrho^{2}} & \varrho \\
  0 & 1
 \end{bmatrix}.
\]
We know that $Z$ has independent components. It is easily seen that the horizontal axis is invariant under the transformation 
$T: \mathbb{R}^{2}\rightarrow\mathbb{R}^{2}$ defined by $T(\textbf{x})=\sigma^{-1}\textbf{x}$, while the vertical axis is mapped 
to the line $z_{1}=-\frac{\varrho}{\sqrt{1-\varrho^{2}}}z_{2}$. 

Now the time that the first Brownian motion $B^{1}$ hits zero is transformed to the time $\tau_{1}$ which is the first passage time 
of $Z_{t}$ to the horizontal axis; and the time that the second Brownian motion $B^{2}$ hits zero is transformed to the time 
$\tau_{2}$ which is the first passage time of $Z_{t}$ to the line $z_{2}=z_{1}\tan\alpha$, where $0<\alpha<\pi$ is given in 
(\ref{7*}). Moreover, in polar coordinates $Z_{t}=(R_t,\Theta_t)$ starts at the point $z_{0}$ given by
\[r_{0}=\sqrt{\frac{a_{1}^{2}+a_{2}^{2}-2\varrho a_{1}a_{2}}{1-\varrho^{2}}};\]
and
\begin{equation*}
\theta_{0}=\left\{
\begin{array}{lll}
&\pi+\tan^{-1}\left(\frac{a_{2}\sqrt{1-\varrho^{2}}}{a_{1}-\varrho a_{2}}\right), &a_{1}<\varrho a_{2},\\
&\frac{\pi}{2}, &a_{1}=\varrho a_{2},\\
&\tan^{-1}\left(\frac{a_{2}\sqrt{1-\varrho^{2}}}{a_{1}-\varrho a_{2}}\right),&a_{1}>\varrho a_{2}.
\end{array}
\right.
\label{7}
\end{equation*}
It is easily verified that $0<\theta_{0}<\alpha$. We denote by $\tau=\min(\tau_{1}, \tau_{2})$ the first exit time of $Z$ from 
the wedge
\[C_{\alpha}=\{(r\cos\theta, r\sin\theta): r>0, 0<\theta<\alpha\}\subset\mathbb{R}^{2}.\]  

 If $z=(r\cos\theta, r\sin\theta)$ is a point in $C_{\alpha}$ we have, by \cite{Iyengar1985},
\begin{equation}
\mathbb{P}_B^{z_{0}}\{\tau>t, Z_t\in dz\}=\frac{2r}{t\alpha}e^{-(r^{2}+r_{0}^{2})/2t}\sum_{n=1}^{\infty}\sin\frac{n\pi\theta} 
{\alpha}\sin\frac{n\pi\theta_{0}}{\alpha}I_{n\pi/\alpha}\left(\frac{rr_{0}}{t}\right)drd\theta,
\label{8}
\end{equation}
where $I_{v}$ denotes the modified Bessel function of the first kind of order $v$ 

Using this transformation and the formula (\ref{8}) we have
\begin{eqnarray}
\lefteqn{\mathbb{P}_B\left\{0<B^{1}_{t}<\varepsilon,\; \inf_{0\leq s \leq t}B^{1}_{s}>0, \; 0<B^{2}_{t}< \varepsilon, 
\; \inf_{0\leq s \leq t}B^{2}_{s}>0\right\}} \nonumber \\
&\leq&\mathbb{P}_B\left\{\tau>t, 0<\Theta_t<\alpha, 0<R_t< \sqrt{\frac{2}{1-\varrho}}\varepsilon\right\}\nonumber\\
&=&\int_{0}^{\sqrt{\frac{2}{1-\varrho}}\varepsilon}\int_{0}^{\alpha}\frac{2r}{t\alpha}e^{-(r^{2}+r_{0}^{2})/2t}
\sum_{n=1}^{\infty}\sin\frac{n\pi\theta}{\alpha}\sin\frac{n\pi\theta_{0}}{\alpha}I_{n\pi/\alpha}\left(\frac{rr_{0}}{t}\right)
drd\theta \nonumber\\
&\leq&\int_{0}^{\sqrt{\frac{2}{1-\varrho}}\varepsilon}\frac{2r}{t\alpha}e^{-(r^{2}+r_{0}^{2})/2t}\int_{0}^{\alpha}
\sum_{n=1}^{\infty}I_{n\pi/\alpha}\left(\frac{rr_{0}}{t}\right)drd\theta. \label{eq:8a}
\end{eqnarray}

By the definition of the modified Bessel function, we have
\begin{eqnarray*}
I_{n\pi/\alpha}\left(\frac{rr_{0}}{t}\right)&=&\sum\limits_{m=0}^{\infty}\frac{1}{m!\Gamma(m+\frac{n\pi}{\alpha}+1)}
\left(\frac{rr_{0}}{2t}\right)^{2m+\frac{n\pi}{\alpha}}\allowdisplaybreaks\\
&\leq&\sum_{m=0}^{\infty}\frac{1}{(m!)^{2}\left[\frac{n\pi}{\alpha}\right]!}\left(\frac{rr_{0}}{2t}\right)^{2m+
\frac{n\pi}{\alpha}}\allowdisplaybreaks\\
&=&\frac{1}{\left[\frac{n\pi}{\alpha}\right]!}\left(\frac{rr_{0}}{2t}\right)^{\frac{n\pi}{\alpha}}
\sum_{m=0}^{\infty}\frac{1}{(m!)^{2}}\left(\frac{rr_{0}}{2t}\right)^{2m}\allowdisplaybreaks\\
&\leq&\frac{1}{\left[\frac{n\pi}{\alpha}\right]!}\left(\frac{rr_{0}}{2t}\right)^{\frac{n\pi}{\alpha}}
\left[\sum_{m=0}^{\infty}\frac{1}{m!}\left(\frac{rr_{0}}{2t}\right)^{m}\right]^{2}\allowdisplaybreaks\\
&=&e^{rr_{0}/t}\frac{1}{\left[\frac{n\pi}{\alpha}\right]!}\left(\frac{rr_{0}}{2t}\right)^{\frac{n\pi}{\alpha}},
\end{eqnarray*}
where $[x]$ denotes the integer part of $x$. Using this in (\ref{eq:8a}) we have
\begin{eqnarray*}
\lefteqn{\mathbb{P}_B\left\{0<B^{1}_{t}<\varepsilon,\; \inf_{0\leq s \leq t}B^{1}_{s}>0, \; 0<B^{2}_{t}<\varepsilon, 
\; \inf_{0\leq s \leq t}B^{2}_{s}>0\right\}\allowdisplaybreaks}\\
&\leq&\int_{0}^{\sqrt{\frac{2}{1-\varrho}}\varepsilon}\frac{2r}{t\alpha}e^{-(r^{2}+r_{0}^{2})/2t}\int_{0}^{\alpha}e^{rr_{0}/t} 
\sum_{n=1}^{\infty}\frac{1}{\left[\frac{n\pi}{\alpha}\right]!}\left(\frac{rr_{0}}{2t}\right)^{\frac{n\pi}{\alpha}}drd\theta
\allowdisplaybreaks\\
&\leq&\int_{0}^{\sqrt{\frac{2}{1-\varrho}}\varepsilon}\frac{2r}{t}e^{-(r^{2}+r_{0}^{2})/2t}e^{rr_{0}/t}
(\frac{rr_{0}}{2t})^{\frac{\pi}{\alpha}}e^{rr_{0}/2t}dr\allowdisplaybreaks\\
&=& 2^{1-\frac{\pi}{\alpha}}r_{0}^{\frac{\pi}{\alpha}}\int_{0}^{\sqrt{\frac{2}{1-\varrho}}\varepsilon}r^{1+\frac{\pi}{\alpha}}
\frac{1}{t^{1+\frac{\pi}{\alpha}}}e^{-\frac{r^{2}+r_{0}^{2}-3rr_{0}}{2t}}dr.
\end{eqnarray*}

If we choose $\varepsilon_{0}=\frac{r_{0}\sqrt{\frac{1-\varrho}{2}}}{3}$, then for any $\varepsilon<\varepsilon_{0}$ we have
$r^{2}+r_{0}^{2}-3rr_{0}>0$. Therefore we can find a constant $K_{T}$ only depending on $T$ such that
\[\frac{1}{t^{1+\frac{\pi}{\alpha}}}e^{-\frac{r^{2}+r_{0}^{2}-3rr_{0}}{2t}}\leq K_{T}.\]
Thus
\begin{eqnarray*}
\lefteqn{\mathbb{P}_B\left\{0<B^{1}_{t}<\varepsilon,\; \inf_{0\leq s \leq t}B^{1}_{s}>0, \; 0<B^{2}_{t}<\varepsilon, 
\; \inf_{0\leq s \leq t}B^{2}_{s}>0\right\}\allowdisplaybreaks}\\
&\leq&2^{1-\frac{\pi}{\alpha}}r_{0}^{\frac{\pi}{\alpha}}\int_{0}^{\sqrt{\frac{2}{1-\varrho}}\varepsilon}r^{1+\frac{\pi}{\alpha}}
K_{T}dr\allowdisplaybreaks\\
&\leq&2^{1-\frac{\pi}{\alpha}}r_{0}^{\frac{\pi}{\alpha}}K_{T}\left(\sqrt{\frac{2}{1-\varrho}}\varepsilon\right)^{1+
\frac{\pi}{\alpha}} \sqrt{\frac{2}{1-\varrho}}\varepsilon 
= C_{T}\varepsilon^{2+\frac{\pi}{\alpha}},
\end{eqnarray*}
where $C_{T}=2^{1-\frac{\pi}{\alpha}}r_{0}^{\frac{\pi}{\alpha}}K_{T}\left(\sqrt{\frac{2}{1-\varrho}}\right)^{2+\frac{\pi}{\alpha}}$ 
is a constant only depending on $\varrho, a_{1}, a_{2}$ and $T$.

Moreover, it is obvious that $0<\alpha<\pi$ and $\frac{\pi}{2}\leq\alpha<\pi$ if $\varrho\geq 0$. In the latter case we have
$3<2+\frac{\pi}{\alpha}\leq 4$.
\end{proof}

\begin{lem}
There exists $\tilde{\varepsilon}_{0}>0$ only depending on $\rho$ and the lower bound $C_{B}$ for the $\{X_0^i\}$, such that for any 
$\eta>0$, for all $\varepsilon<\tilde{\varepsilon}_{0}$ we have
\[
\mathbb{E}[(\nu_{t}((0,\varepsilon)))^{2}]\leq K_{T}\varepsilon^{2+\pi/\alpha-\eta},
\]
where $K_{T}$ is a positive constant depending on $T$ and $\alpha$ is given in (\ref{7}).
\label{Lemma1}
\end{lem}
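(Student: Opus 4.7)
My plan is to mimic the proof of Theorem \ref{Th3} but at the level of the second moment, reducing the estimate of $\mathbb{E}[(\nu_t((0,\varepsilon)))^2]$ to a two-particle calculation that Lemma \ref{Lemma1*} handles. Since $(0,\varepsilon)$ is open and $\nu_{N,t}\Rightarrow\nu_t$, the Portmanteau theorem together with Fatou gives
\[
\mathbb{E}[(\nu_t((0,\varepsilon)))^2]\ \leq\ \liminf_{N\to\infty}\frac{1}{N^{2}}\sum_{i,j=1}^{N}
\mathbb{P}\{0<X_t^{i}<\varepsilon,\ 0<X_t^{j}<\varepsilon\},
\]
which I would then split into a diagonal piece ($i=j$) and an off-diagonal piece ($i\neq j$).

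For the diagonal, the single-particle estimate already established inside the proof of Theorem \ref{Th3}, namely $\mathbb{P}\{0<X_t^{i}<\varepsilon\}\leq x^{i}C_{T}\varepsilon^{2}$, combined with $\mathbb{E}[X_0^{i}]<\infty$ and exchangeability, yields a contribution of order $\varepsilon^{2}/N$, which vanishes in the limit. The real work is the off-diagonal piece, where I need to control, for $i\neq j$,
\[
p_{\varepsilon}(x^{i},x^{j})\ :=\ \mathbb{P}\Bigl\{X_t^{i}\in(0,\varepsilon),\ \inf_{s\leq t}X_s^{i}>0,\ X_t^{j}\in(0,\varepsilon),\ \inf_{s\leq t}X_s^{j}>0\Bigr\}.
\]

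Writing $X_s^{i}=x^{i}+\mu s+B_s^{i}$ with $B_s^{i}:=\sqrt{1-\rho}\,W_s^{i}+\sqrt{\rho}\,M_s$, the pair $(B^{i},B^{j})$ is a two-dimensional correlated Brownian motion with correlation $\rho\geq 0$, which is exactly the setting of Lemma \ref{Lemma1*} once the drift $\mu$ is removed. I would remove it by a two-dimensional Girsanov change of measure $P\to Q$; the density $Z_T$ is log-Gaussian and has finite $L^{p}$-moments for every $p\geq 1$ on $[0,T]$. Hölder's inequality then gives
\[
p_{\varepsilon}(x^{i},x^{j})\ \leq\ \bigl(\mathbb{E}_{Q}[Z_T^{-p}]\bigr)^{1/p}\bigl(Q(A)\bigr)^{1/q},\qquad \tfrac{1}{p}+\tfrac{1}{q}=1,
\]
and under $Q$ the pair $(B^{i},B^{j})$ is driftless, starts at $(x^{i},x^{j})$ with $x^{i},x^{j}\geq C_{B}$, so that Lemma \ref{Lemma1*} is directly applicable once $\varepsilon$ is below a threshold $\tilde\varepsilon_{0}$ depending only on $C_{B}$ and $\rho$ (since $r_{0}\geq C_{B}\sqrt{2/(1+\rho)}$). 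This delivers $Q(A)\leq C_{T}(x^{i},x^{j})\,\varepsilon^{2+\pi/\alpha}$, and taking $q$ arbitrarily close to $1$ produces the required exponent $2+\pi/\alpha-\eta$. Averaging over the random initial data via exchangeability, using $x^{i},x^{j}\geq C_{B}$ to keep $\tilde\varepsilon_{0}$ uniform and integrability of $X_0^{i}$ to absorb the polynomial prefactor $r_{0}^{\pi/\alpha}$ from Lemma \ref{Lemma1*}, completes the proof. The main obstacle is precisely this drift-removal step: Lemma \ref{Lemma1*} is stated only for driftless correlated Brownian motions, and it is the Girsanov/Hölder device that both absorbs the drift $\mu$ and simultaneously explains the infinitesimal loss $\eta$ in the exponent.
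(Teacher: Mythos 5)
Your proposal matches the paper's proof essentially step for step: the Fatou/weak-convergence reduction to a two-particle probability, discarding the vanishing diagonal, the identification of $(B^i,B^j)$ as correlated Brownian motion, the Girsanov change of measure to kill the drift $\mu$, Hölder's inequality to split off the Radon--Nikodym term (whose $L^p$ norms are explicit and finite), the invocation of Lemma~\ref{Lemma1*} under the new measure, and the use of $x^i,x^j\ge C_B$ to make $\tilde\varepsilon_0$ uniform, with the conjugate-exponent slack producing the $\eta$ loss. This is precisely the paper's argument (with $a=q$ and $b=p$ in your notation), so the proposal is correct and not a genuinely different route.
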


\begin{proof}
By definition of $\nu_{t}$, properties of weak convergence and Fatou's Lemma
\begin{align}
\mathbb{E}[(\nu_{t}((0,\varepsilon)))^{2}]\leq &\mathbb{E}\left[\liminf_{N\rightarrow\infty}\frac{1}{N}
\sum_{i=1}^{N}1_{\{0<X_{t}^{i}< \varepsilon, \inf_{0\leq s\leq t}X_{s}^{i}>0\}} \liminf_{M\rightarrow\infty}
\frac{1}{M}\sum_{j=1}^{M}1_{\{0<X_{t}^{j}< \varepsilon, \inf_{0\leq s\leq t}X_{s}^{j}>0\}}\right]
\allowdisplaybreaks\nonumber\\
\leq&\liminf_{N\rightarrow\infty, M\rightarrow\infty }\frac{1}{NM}\sum_{i=1}^{N}\sum_{j=1}^{M}\mathbb{E}\left[1_{\{0<X_{t}^{i}
< \varepsilon, \inf\limits_{0\leq s\leq t}X_{s}^{i}>0, \;0<X_{t}^{j}< \varepsilon, 
\inf_{0\leq s\leq t}X_{s}^{j}>0\}}\right]\allowdisplaybreaks\nonumber\\
=&\liminf_{N\rightarrow\infty, M\rightarrow\infty }\frac{1}{NM}\sum_{i=1}^{N}\sum_{j\neq i, j=1}^{M}\mathbb{E} 
\left[1_{\{0<X_{t}^{i}< \varepsilon, \inf\limits_{0\leq s\leq t}X_{s}^{i}>0, \; 0<X_{t}^{j}< \varepsilon, 
\inf_{0\leq s\leq t}X_{s}^{j}>0\}}\right].
\label{*9}
\end{align}
Since neither of the firms $i$ or $j$ has defaulted by time $t$, we have
\[X_{t}^{i}=x^{i}+\mu t +\sqrt{1-\rho}W_{t}^{i}+\sqrt{\rho}M_{t}\stackrel{d}=x^{i}+\mu t+B^{1}_{t}; \]
\[ X_{t}^{j}=x^{j}+\mu t +\sqrt{1-\rho}W_{t}^{j}+\sqrt{\rho}M_{t}\stackrel{d}=x^{j}+\mu t+B^{2}_{t},\]
where $B^{1}_{t}$ and $B^{2}_{t}$ are correlated Brownian motions with correlation $\rho$.

We use the Girsanov theorem (e.g. \cite{Revuz2005}) to change the measure and set
\[ Z_{t}(\mu) =\exp\left(-\frac{\mu}{1+\rho}\left(B^1_t+B^2_t +\mu t\right)\right),\]
which is easily seen to be a true martingale by Novikov's condition.
We write $\tilde{\mathbb{P}}$ for the probability measure on $\mathcal{F}_{T}$ given by
\begin{equation}
\tilde{\mathbb{P}}(A):=\mathbb{E}[1_{A}Z_{T}(\mu)]; \;A\in\mathcal{F}_{T},
\label{3.31}
\end{equation}
and $\tilde{\be}$ for expectation with respect to $\tilde{\bp}$.
Thus for each fixed $T\in[0, \infty)$, the process
\[\{(\tilde{B}^{1}_{t}, \tilde{B}^{2}_{t}):=(B^{1}_{t}+\mu t, B^{2}_{t}+\mu t), \mathcal{F}_{t}, 
0\leq t\leq T\}\]
is a two-dimensional Brownian motion on $(\Omega, \mathcal{F}_{T}, \tilde{\mathbb{P}})$, where $\tilde{B}^{1}_{t}$ 
and $\tilde{B}^{2}_{t}$ have correlation $\rho$.

We now calculate the term $\mathbb{E}\left[1_{\{0<X_{t}^{i}< \varepsilon, \inf_{0\leq s\leq t}X_{s}^{i}>0, 
\; 0<X_{t}^{j}< \varepsilon, \inf_{0\leq s\leq t}X_{s}^{j}>0\}}\right]$ in (\ref{*9}). We have
\begin{eqnarray*}
\lefteqn{\mathbb{E}\left[1_{\{0<X_{t}^{i}< \varepsilon, \inf_{0\leq s\leq t}X_{s}^{i}>0, 
\; 0<X_{t}^{j}< \varepsilon, \inf_{0\leq s\leq t}X_{s}^{j}>0\}}\right]\allowdisplaybreaks}\\
&=&\tilde{\mathbb{E}}\left[1_{\{0<X_{t}^{i}< \varepsilon, \inf_{0\leq s\leq t}X_{s}^{i}>0, \; 
0<X_{t}^{j}< \varepsilon, \inf_{0\leq s\leq t}X_{s}^{j}>0\}}\frac{1}{Z_{T}(\mu)}\right]\allowdisplaybreaks\\
&\leq&\left\{\tilde{\be}\left[1_{\{0<X_{t}^{i}< \varepsilon, \inf_{0\leq s\leq t}X_{s}^{i}>0, 
\; 0<X_{t}^{j}< \varepsilon, \inf_{0\leq s\leq t}X_{s}^{j}>0\}}\right]\right\}^{1/a} 
\cdot\left\{\tilde{\be}\left[\left(\frac{1}{Z_{T}(\mu)}\right)^{b}\right]\right\}^{1/b}
\allowdisplaybreaks\\
&=&J_{1}\cdot J_{2},
\end{eqnarray*}
where
\[J_{1}=\left\{\tilde{\mathbb{E}}\left[1_{\{0<X_{t}^{i}< \varepsilon, \inf_{0\leq s\leq t}X_{s}^{i}>0, 
\; 0<X_{t}^{j}< \varepsilon, \inf_{0\leq s\leq t}X_{s}^{j}>0\}}\right]\right\}^{1/a},\]
\[J_{2}=\left\{\tilde{\mathbb{E}}\left[\left(\frac{1}{Z_{T}(\mu)}\right)^{b}\right]\right\}^{1/b},\]
\[1/a+1/b=1, a>1, b>1,\]
by H\"olders inequality.

For $J_{1}$, 
we have
\begin{align*}
J_{1}=&\left(\mathbb{\tilde{P}}\left\{0<X_{t}^{i}< \varepsilon, \inf_{0\leq s\leq t}X_{s}^{i}>0, \; 0<X_{t}^{j}< \varepsilon,
\inf_{0\leq s\leq t}X_{s}^{j}>0\right\}\right)^{1/a}\allowdisplaybreaks\\
=&\left(\mathbb{\tilde{P}}\left\{0<x^{i}+\tilde{B}_{t}^{1}< \varepsilon, \inf_{0\leq s\leq t}x^{i}+\tilde{B}_{s}^{1}>0, \; 
0<x^{j}+\tilde{B}_{t}^{2}< \varepsilon, \inf_{0\leq s\leq t}x^{j}+\tilde{B}_{s}^{2}>0\right\}\right)^{1/a}.
\end{align*}
By Lemma \ref{Lemma1*} with $\varrho=\rho, a_1=x^i, a_2=x^j$ we know that there exists 
$\varepsilon_{0}=\frac{1}{3}\sqrt{\frac{1-\rho}{2}}\frac{\sqrt{(x^{i})^{2}+(x^{j})^{2}-2\rho x^{i}x^{j}}}{\sqrt{1-\rho^{2}}}$ and 
$\alpha$ as in (\ref{7}), such that for all $\varepsilon<\varepsilon_{0}$ we have
\[\mathbb{\tilde{P}}\left\{0<x^{i}+\tilde{B}_{t}^{1}< \varepsilon, \inf_{0\leq s\leq t}x^{i}+\tilde{B}_{s}^{1}>0, \;
0<x^{j}+\tilde{B}_{t}^{2}\leq <, \inf_{0\leq s\leq t}x^{j}+\tilde{B}_{s}^{2}>0\right\}\leq 
C_{T}\varepsilon^{2+\frac{\pi}{\alpha}}. \]
As $x^{i}\geq C_{B}$ and $x^{j}\geq C_{B}$, we have
\[\sqrt{(x^{i})^{2}+(x^{j})^{2}-2\rho x^{i}x^{j}}\geq \sqrt{2(1-\rho)}C_{B}.\]
Thus we can choose a new $\tilde{\varepsilon}_{0}:=\frac{1}{3}\sqrt{\frac{1-\rho}{1+\rho}}
C_{B} \leq C_B$, such that for all $\varepsilon<\tilde{\varepsilon}_{0}$ we have, for all $i,j$,
\[J_{1}\leq C_{T}\varepsilon^{\frac{2+\frac{\pi}{\alpha}}{a}}.\]

For $J_{2}$ we have
\begin{align*}
J_{2}=&\left\{\tilde{\mathbb{E}}\left[\left(\frac{1}{Z_{T}(\mu)}\right)^{b}\right]\right\}^{1/b}\allowdisplaybreaks\\
=&\left\{\tilde{\mathbb{E}}\left[\exp\left(\frac{b\mu}{1+\rho}\left(B^1_T+B^2_T +\mu T\right)\right) 
\right]\right\}^{1/b}\allowdisplaybreaks\\
=&\left\{\tilde{\mathbb{E}}\left[\exp\left(\frac{b\mu}{1+\rho}\left(\tilde{B}^1_T+\tilde{B}^2_T -\mu T\right)\right) 
\right]\right\}^{1/b}\allowdisplaybreaks\\
=&\exp\left(-\frac{\mu^2 T}{1+\rho}\right) \left\{ \tilde{\mathbb{E}} 
\left[\exp\left(\frac{b\mu}{1+\rho}\left(\tilde{B}^1_T+\tilde{B}^2_T\right)\right) 
\right]\right\}^{1/b}\allowdisplaybreaks\\
=&\exp\left(\frac{(b-1)\mu^{2}T}{1+\rho}\right):=J_{T}<\infty, \;\; \forall b, i,j\in\mathbb{N}.
\end{align*}

Now we have
\[
\mathbb{E}[\nu_{t}((0,\varepsilon))^{2}]\leq J_{1}\cdot J_{2}\leq C_{T}J_{T}\varepsilon^{\frac{2+\frac{\pi}{\alpha}}{a}}, 
\; \forall \varepsilon<\tilde{\varepsilon}_{0}.
\]
Now for any $0<\eta< \frac{\pi}{\alpha}-1$ we can choose $1<a=(2+\frac{\pi}{\alpha})/(2+\frac{\pi}{\alpha}-\eta) 
<(2+\frac{\pi}{\alpha})/3$ and hence
\[\mathbb{E}[\nu_{t}((0,\varepsilon))^{2}]\leq K_{T}\varepsilon^{2+\pi/\alpha-\eta}, \; 
\forall \varepsilon<\tilde{\varepsilon}_{0},\]
where $K_{T}$ is a positive constant only depending on $T$.
\end{proof}

We will write $\beta=\pi/\alpha-\eta-1>0$ so that $2+\pi/\alpha-\eta=3+\beta$.

\subsection{The existence and uniqueness of the density}

In order to prove our main Theorem we need to recharacterise the evolution obtain in (\ref{martingaleproblem}) as 
the stochastic PDE. Thus we need the measure $\nu_t$ to be absolutely
continuous with respect to the Lebesgue measure to write $\nu_t(dx) = v(t,x)dx$ for some density $v$.


We introduce some notation first. Let $H^0=L^{2}((0, \infty))$ be the usual Hilbert space with $L^{2}$-norm 
$||\cdot||_{0}$ and inner product $\langle\cdot, \cdot\rangle_{0}$ given by
$||\phi||_{0}^{2}=\int_{0}^{\infty}|\phi(x)|^{2}dx$ and $\langle\phi,\psi\rangle_{0}= 
\int_0^{\infty}\phi(x)\psi(x)dx$. In the following we adapt the approach in \cite{ku1999} to our setting. 
The idea to prove the existence of an $L^{2}((0,\infty))$-density is to transform our
$\mathcal{M}((0,\infty))$-valued process to an $H^{0}$-valued process, by convolving the measure 
with the absorbing heat kernel, where $\mathcal{M}((0,\infty))$ denotes the set of finite Borel measures 
on $(0,\infty)$.

For any $\varrho\in\mathcal{M}((0,\infty))$ and $\delta>0$, we write
\begin{equation}
(T_{\delta}\varrho)(x)=\int_0^{\infty}G_{\delta}(x,y)\varrho(dy),
\end{equation}
where $G_{\delta}$ is the absorbing heat kernel in $\mathbb{R}^{+}$ given by
\[G_{\delta}(x,y)=\frac{1}{\sqrt{2\pi\delta}}\left(e^{-\frac{(x-y)^{2}}{2\delta}}- 
e^{-\frac{(x+y)^{2}}{2\delta}}\right), \;\forall x, y>0.\]
We use the same notation for the Brownian semigroup on $C_{b}(\mathbb{R}_{+})$, the bounded and continuous functions 
on $\mathbb{R}_{+}$, i.e.,
\[T_{t}\phi(x)=\int_0^{\infty}G_{t}(x,y)\phi(y)dy, \quad \forall \phi\in C_{b}(\mathbb{R}_{+}).\]

We will also need to consider the reflecting heat kernel $G_{\delta}^{r}(x, y)$, defined by
\[G_{\delta}^{r}(x,y)=\frac{1}{\sqrt{2\pi\delta}}\left(e^{-\frac{(x-y)^{2}}{2\delta}}+
e^{-\frac{(x+y)^{2}}{2\delta}}\right), \;\forall x, y>0. \]
We write the associated semigroup as 
\[T_{\delta}^{r}\nu_{t}(x)=\int_0^{\infty}G_{\delta}^{r}(x,y)\nu_{t}(dy).\]
Then it is an easy calculation to see that
\begin{equation}
\partial_{x}G_{\delta}(x,y)=-\partial_{y}G^{r}_{\delta}(x,y).\label{eq:diffrel}
\end{equation}

It is not difficult to prove the following lemma.
\begin{lem}
If $\varrho\in\mathcal{M}((0,\infty))$ and $\delta>0$, then $T_{\delta}\varrho\in H^0$.
\end{lem}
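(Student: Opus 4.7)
The plan is to invoke Minkowski's integral inequality to reduce the $L^2$ bound on $T_\delta \varrho$ to a uniform $L^2$ bound on the kernel slices $y \mapsto G_\delta(\cdot, y)$, and then to use the absorbing heat kernel's domination by a shifted Gaussian. Since $\varrho$ is a finite measure, boundedness of the slice norms uniformly in $y$ will give an integrable majorant.

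First, I would write
\[
\|T_\delta \varrho\|_0 = \left\| \int_0^\infty G_\delta(\cdot, y)\, \varrho(dy) \right\|_0 \leq \int_0^\infty \|G_\delta(\cdot, y)\|_0\, \varrho(dy)
\]
by Minkowski's integral inequality (valid here as $G_\delta \geq 0$ for $x, y > 0$). Next I would bound the slice norm: since $e^{-(x+y)^2/(2\delta)} \geq 0$, we have $0 \leq G_\delta(x,y) \leq (2\pi\delta)^{-1/2} e^{-(x-y)^2/(2\delta)}$ for $x, y > 0$, and hence
\[
\|G_\delta(\cdot, y)\|_0^2 \leq \frac{1}{2\pi\delta} \int_0^\infty e^{-(x-y)^2/\delta}\, dx \leq \frac{1}{2\pi\delta} \int_{-\infty}^\infty e^{-u^2/\delta}\, du = \frac{1}{2\sqrt{\pi \delta}},
\]
so $\|G_\delta(\cdot, y)\|_0 \leq (4\pi\delta)^{-1/4}$ independently of $y$.

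Combining these gives
\[
\|T_\delta \varrho\|_0 \leq (4\pi\delta)^{-1/4}\, \varrho\bigl((0,\infty)\bigr) < \infty,
\]
since $\varrho$ is finite. This shows $T_\delta \varrho \in H^0$. There is no real obstacle: the only point to verify carefully is the applicability of Minkowski's integral inequality for a measure-valued integrand, which is standard for nonnegative kernels, and the Gaussian $L^2$ computation is elementary.
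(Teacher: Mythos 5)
Your proof is correct. The paper states this lemma without proof (``It is not difficult to prove the following lemma''), so there is no official argument to compare against, but your argument is a clean and complete one. The key points are all in order: the positivity $G_\delta(x,y)\geq 0$ for $x,y>0$ (since $(x+y)^2-(x-y)^2=4xy>0$) justifies both the use of Minkowski's integral inequality and the discarding of the $e^{-(x+y)^2/2\delta}$ term, the slice-norm estimate is computed correctly, and the finiteness of $\varrho$ closes the argument. A slightly more elementary variant, avoiding Minkowski, is to observe that $G_\delta$ is uniformly bounded by $(2\pi\delta)^{-1/2}$ and that $\int_0^\infty G_\delta(x,y)\,dx\leq 1$, so that $\|T_\delta\varrho\|_0^2\leq\|T_\delta\varrho\|_\infty\int_0^\infty T_\delta\varrho(x)\,dx\leq(2\pi\delta)^{-1/2}\varrho((0,\infty))^2$ by Tonelli; but your route is equally valid and no less rigorous.
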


We will write $\nu_t\in H^0$ if the measure $\nu_t$ has a density which is in $H^0$.
Let $Z_{\delta}(s)=T_{\delta}\nu_{s}$, where $\nu$ is an $\mathcal{M}((0,\infty))$-valued solution 
to (\ref{martingaleproblem}).
Our aim is to obtain an estimate for the $H^{0}$-norm of the process $Z_{\delta}$. 

It is easy to see that $T_{\delta}\phi\in\bar{C}$ 
for any $\phi\in \bar{C}$. Thus, replacing $\phi\in\bar{C}$ by $T_{\delta}\phi$ in (\ref{martingaleproblem}) and using Fubini, 
we have
\begin{align}
\langle Z_{\delta}(t), \phi\rangle_{0} =& \langle T_{\delta}\phi, \nu_{t}\rangle\nonumber\\
=&\langle T_{\delta}\phi, \nu_{0}\rangle+\int_{0}^{t}\langle\mu(T_{\delta}\phi)^{\prime}(x)+ 
\frac{1}{2}(T_{\delta}\phi)^{\prime\prime}(x), \nu_{s} \rangle ds+ 
\int_{0}^{t}\langle\sqrt{\rho}(T_{\delta}\phi)^{\prime}(x), \nu_{s}\rangle dM_{s}.
\label{4}
\end{align}

The integrands can be rewritten as
\begin{align*}
\langle\mu(T_{\delta}\phi)^{\prime}(x), \nu_{s}\rangle=& 
\mu\int_0^{\infty}(T_{\delta}\phi)^{\prime}(x)\nu_{s}(dx)\allowdisplaybreaks\\
=&\mu\int_0^{\infty}\partial_{x}\left(\int_0^{\infty}G_{\delta}(x,y)\phi(y)dy\right)\nu_{s}(dx)
\allowdisplaybreaks\\
=&\mu\int_0^{\infty}\left(\int_0^{\infty}(\partial_{x}G_{\delta}(x,y))\phi(y)dy\right)\nu_{s}(dx)
\allowdisplaybreaks
\end{align*}
Applying (\ref{eq:diffrel}) and Fubini we have
\begin{align*}
\langle\mu(T_{\delta}\phi)^{\prime}(x), \nu_{s}\rangle
=&\mu\int_0^{\infty}\left(\int_0^{\infty}(-\partial_{y}G_{\delta}^{r}(x,y))\phi(y)dy\right)\nu_{s}(dx)
\allowdisplaybreaks\\
=&\mu\int_0^{\infty}\left(\int_0^{\infty}G_{\delta}^{r}(x,y)\phi^{\prime}(y)dy\right)\nu_{s}(dx)
\allowdisplaybreaks\\
=&\mu\int_0^{\infty} (T_{\delta}^{r}\nu_{s})(y)\phi^{\prime}(y)dy\allowdisplaybreaks\\
=&-\mu\int_0^{\infty} \phi(y)\partial_{y}(T_{\delta}^{r}\nu_{s}(y))dy\allowdisplaybreaks\\
=&-\mu\langle\phi, \partial_{x}T_{\delta}^{r}(\nu_{s})\rangle_{0}.
\end{align*}

Similarly, for the term $\langle\sqrt{\rho}(T_{\delta}\phi)^{\prime}(x), \nu_{s}\rangle$ we have
\[\langle\sqrt{\rho}(T_{\delta}\phi)^{\prime}(x), \nu_{s}\rangle=-\sqrt{\rho}\langle\phi, 
\partial_{x}T_{\delta}^{r}(\nu_{s})\rangle_{0}.\]

For the term $\langle\frac{1}{2}(T_{\delta}\phi)^{\prime\prime}(x), \nu_{s} \rangle$ we can perform the same type of calculation
to see
\[ \langle\frac{1}{2}(T_{\delta}\phi)^{\prime\prime}(x), \nu_{s} \rangle = 
\frac{1}{2}\langle\phi,\partial_{x}^{2}T_{\delta}(\nu_{s})\rangle_{0}. \]

Therefore (\ref{4}) becomes
\begin{align}
\langle Z_{\delta}(t), \phi\rangle_{0}=&\langle T_{\delta}\nu_{0}, \phi\rangle_{0}-\mu\int_{0}^{t}\langle\phi,
\partial_{x}T_{\delta}^{r}(\nu_{s})\rangle_{0}ds+\frac{1}{2}\int_{0}^{t}\langle\phi, 
\partial^{2}_{x}T_{\delta}(\nu_{s})\rangle_{0}ds-\sqrt{\rho}\int_{0}^{t}\langle\phi, 
\partial_{x}T_{\delta}^{r}(\nu_{s})\rangle_{0} dM_{s}.
\label{12}
\end{align}

By using It\^{o}'s formula on $\langle Z_{\delta}(s),\phi\rangle^2$ we have
\begin{align*}
\langle Z_{\delta}(t), \phi\rangle_{0}^{2}=&\langle Z_{\delta}(0), \phi\rangle_{0}^{2}+\int_{0}^{t}d\langle 
Z_{\delta}(s), \phi\rangle_{0}^{2}\allowdisplaybreaks\\
=&\langle Z_{\delta}(0), \phi\rangle_{0}^{2}+\int_{0}^{t}2\langle Z_{\delta}(s), \phi\rangle_{0} d\langle 
Z_{\delta}(s), \phi\rangle_{0}+\int_{0}^{t}d\Big\langle\langle Z_{\delta}(s), \phi\rangle_{0},\langle Z_{\delta}(s),
\phi\rangle_{0}\Big\rangle\allowdisplaybreaks\\
=&\langle Z_{\delta}(0), \phi\rangle_{0}^{2}-2\mu\int_{0}^{t}\langle Z_{\delta}(s), \phi\rangle_{0}\langle\phi,
 \partial_{x}T_{\delta}^{r}(\nu_{s})\rangle_{0}ds\allowdisplaybreaks\\
&+\int_{0}^{t}\langle Z_{\delta}(s), \phi\rangle_{0}\langle\phi, \partial^{2}_{x}T_{\delta}(\nu_{s})\rangle_{0}ds 
-2\sqrt{\rho}\int_{0}^{t}\langle Z_{\delta}(s), \phi\rangle_{0}\langle\phi, \partial_{x}T_{\delta}^{r}(\nu_{s})
\rangle_{0} dM_{s}\allowdisplaybreaks\\
&+\rho\int_{0}^{t}|\langle\phi, \partial_{x}T_{\delta}^{r}(\nu_{s})\rangle_{0}|^{2}ds.
\end{align*}

We can choose a set of $\phi\in\bar{C}$ to be a complete, orthonormal basis of $H^{0}$ and taking expectations, we have
\begin{align}
\mathbb{E}||Z_{\delta}(t)||_{0}^{2}=&||Z_{\delta}(0)||_{0}^{2}-2\mu\mathbb{E}\int_{0}^{t}\langle
Z_{\delta}(s),\partial_{x}T_{\delta}^{r}(\nu_{s})\rangle_{0}ds+\mathbb{E}\int_{0}^{t}\langle Z_{\delta}(s), 
 \partial^{2}_{x}T_{\delta}(\nu_{s})\rangle_{0}ds\nonumber\allowdisplaybreaks\\
&+\rho\mathbb{E}\int_{0}^{t}||\partial_{x}T_{\delta}^{r}(\nu_{s})||_{0}^{2}ds\nonumber\allowdisplaybreaks\\
=&||Z_{\delta}(0)||_{0}^{2}-2\mu\mathbb{E}\int_{0}^{t}\langle T_{\delta}(\nu_{s}), 
\partial_{x}T_{\delta}^{r}(\nu_{s})\rangle_{0}ds +\mathbb{E}\int_{0}^{t}\langle T_{\delta}(\nu_{s}),  
\partial^{2}_{x}T_{\delta}(\nu_{s})\rangle_{0}ds\nonumber\allowdisplaybreaks\\
&+\rho\mathbb{E}\int_{0}^{t}||\partial_{x}T_{\delta}^{r}(\nu_{s})||_{0}^{2}ds.
\label{13}
\end{align}

We now control the integral terms on the right-hand side of (\ref{13}) in terms of the integral of 
$\mathbb{E}||T_{\delta}(\nu_{s})||^{2}_{0}$ plus some constant which goes to 0 as $\delta\to 0$.

\begin{lem}
There exist constants $C_T^1,C_2$ such that for $\delta<\tilde{\varepsilon}_{0}^2/2$ we have
\begin{equation}
\mathbb{E}[|-2\mu\langle T_{\delta}(\nu_{s}),\partial_{x}T_{\delta}^{r}(\nu_{s})\rangle_{0}|]\leq 
|\mu|\cdot\mathbb{E}[||T_{\delta}(|\nu_{s}|)||_{0}^{2}]+C^{1}_{T}\delta^{\frac{\beta}{2}}+\frac{C_{2}\tilde{\varepsilon}_{0}}
{\delta^2} e^{-\tilde{\varepsilon}_{0}^2/2\delta}.
\end{equation}
\label{Lemma2}
\end{lem}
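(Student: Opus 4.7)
My plan is to split the reflecting kernel into the absorbing kernel plus an image-charge correction and exploit that the absorbing part pairs with $T_\delta\nu_s$ to zero via integration by parts. Writing $G_\delta^r(x,y) = G_\delta(x,y) + H_\delta(x,y)$ with $H_\delta(x,y) = \tfrac{2}{\sqrt{2\pi\delta}}e^{-(x+y)^2/(2\delta)}$ and $B_\delta\nu_s(x) = \int_0^\infty H_\delta(x,y)\nu_s(dy)$ gives $T_\delta^r\nu_s = T_\delta\nu_s + B_\delta\nu_s$. Since $T_\delta\nu_s(0)=0$ (Dirichlet boundary) and $T_\delta\nu_s$ decays at infinity,
\[
\int_0^\infty T_\delta\nu_s(x)\,\partial_x T_\delta\nu_s(x)\,dx \;=\; \tfrac12\bigl[(T_\delta\nu_s)^2\bigr]_0^\infty \;=\; 0,
\]
so $\langle T_\delta\nu_s,\partial_x T_\delta^r\nu_s\rangle_0 = \langle T_\delta\nu_s,\partial_x B_\delta\nu_s\rangle_0$. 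Young's inequality then yields
\[
\bigl|{-}2\mu\langle T_\delta\nu_s,\partial_x T_\delta^r\nu_s\rangle_0\bigr| \;\le\; |\mu|\,\|T_\delta\nu_s\|_0^2 + |\mu|\,\|\partial_x B_\delta\nu_s\|_0^2,
\]
and since $\nu_s$ is a positive measure, $T_\delta\nu_s = T_\delta|\nu_s|$, producing the first term of the claim.

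It then remains to bound $|\mu|\,\mathbb{E}\|\partial_x B_\delta\nu_s\|_0^2$ by the other two terms. I would write $\partial_x B_\delta\nu_s(x) = -2\int_0^\infty g_\delta(x+y)\,\nu_s(dy)$ with $g_\delta(u) = ue^{-u^2/(2\delta)}/(\delta\sqrt{2\pi\delta})$, and split $\nu_s = \nu_s|_{(0,\tilde\varepsilon_0)} + \nu_s|_{[\tilde\varepsilon_0,\infty)}$. On the far piece, $y \ge \tilde\varepsilon_0$ forces $(x+y)^2\ge\tilde\varepsilon_0^2$, so pulling out the factor $e^{-\tilde\varepsilon_0^2/(2\delta)}$ and estimating the remaining Gaussian moments elementarily gives the $(C_2\tilde\varepsilon_0/\delta^2)e^{-\tilde\varepsilon_0^2/(2\delta)}$ contribution. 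On the near piece, Cauchy--Schwarz inside the defining integral followed by Fubini yields a bound of the form
\[
\|\partial_x B_\delta(\nu_s|_{(0,\tilde\varepsilon_0)})\|_0^2 \;\lesssim\; \delta^{-3/2}\,\nu_s((0,\tilde\varepsilon_0))^2,
\]
and Lemma \ref{Lemma1}, applied at the natural scale $\varepsilon\sim\sqrt\delta$ after localising to the effective support of $g_\delta$, then produces the $\delta^{\beta/2}$ decay via $\delta^{-3/2}\cdot\delta^{(3+\beta)/2}=\delta^{\beta/2}$.

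The main technical obstacle I anticipate is precisely this near-boundary step: a direct application of Lemma \ref{Lemma1} at the fixed scale $\tilde\varepsilon_0$ gives no $\delta$-decay, so one has to truncate $g_\delta$ to its characteristic width $\sqrt\delta$ (or use a dyadic decomposition of the interval $(0,\tilde\varepsilon_0)$) before invoking the $\varepsilon^{3+\beta}$ bound. The final exponent $\beta/2$ is exactly the mismatch between the $L^2$-mass of the kernel $g_\delta$, which scales as $\delta^{-3/2}$, and the second-moment bound $\mathbb{E}[\nu_s((0,\sqrt\delta))^2]\lesssim\delta^{(3+\beta)/2}$ supplied by Lemma \ref{Lemma1}. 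The positivity of $\nu_s$ (used to identify $T_\delta\nu_s$ with $T_\delta|\nu_s|$) and the Dirichlet boundary condition (used in the integration by parts) are the two structural inputs that make the argument work cleanly.
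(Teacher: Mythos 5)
Your first half exactly mirrors the paper: the split $G_\delta^r = G_\delta + H_\delta$, the observation that $\int_0^\infty T_\delta\nu_s\,\partial_x T_\delta\nu_s\,dx = \frac12\bigl[(T_\delta\nu_s)^2\bigr]_0^\infty = 0$ kills the absorbing part, and Young's inequality to peel off the $|\mu|\,\mathbb{E}\|T_\delta\nu_s\|_0^2$ term. The place where you diverge is the estimate of $P_1 := \|\partial_x B_\delta\nu_s\|_0^2$.

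The paper does \emph{not} apply Cauchy--Schwarz to the inner integral. Instead it writes $P_1$ as a double integral $\int\!\!\int \nu_s(dy_1)\,\nu_s(dy_2)\int_0^\infty (\cdots)\,dx$, performs the change of variables $z^2 = \bigl(x + \tfrac{y_1+y_2}{2}\bigr)^2 + \bigl(\tfrac{y_1-y_2}{2}\bigr)^2$ in the $x$-integral, bounds the Jacobian by $\sqrt{2}$, and then dominates the product indicator by $1_{\{y_1 < \sqrt{2}z\}}1_{\{y_2 < \sqrt{2}z\}}$. After a rescaling this collapses to the single-variable bound
\[
P_1 \;\le\; \int_0^\infty \frac{2}{\pi\delta}\,e^{-z^2/2\delta}\,\frac{z^2}{2\delta^2}\,\nu_s\bigl((0,z)\bigr)^2\,dz,
\]
so all the correlated information in the $\nu_s\otimes\nu_s$ integral is retained in one shot, and the Gaussian weight of width $\sqrt\delta$ automatically does the localisation for you: splitting the $z$-integral at $\tilde\varepsilon_0$ then gives $\delta^{\beta/2}$ on $\{z<\tilde\varepsilon_0\}$ from $\mathbb{E}[\nu_s((0,z))^2]\le K_T z^{3+\beta}$, and the exponential tail on $\{z\ge\tilde\varepsilon_0\}$. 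By contrast, your ``Cauchy--Schwarz then Fubini'' destroys this correlation and produces the non-decaying factor $\nu_s((0,\tilde\varepsilon_0))^2$ at fixed scale — a defect you correctly diagnose and then propose to repair with a dyadic (or $\sqrt\delta$-truncation) decomposition. That repair does work (a scale-by-scale bound with $\varepsilon_k\sim 2^{-k}\tilde\varepsilon_0$, using Lemma~\ref{Lemma1} at each scale and summing the square roots before squaring, gives the geometric series dominated by $\varepsilon_k\sim\sqrt\delta$ and reproduces $\delta^{\beta/2}$), but it is noticeably heavier than the paper's calculation: you need to control the off-diagonal cross terms when re-assembling $\|\sum_k \partial_x B_\delta(\nu_s|_{I_k})\|_0^2$, a step your sketch leaves unaddressed. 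The paper's change of variables buys a one-line version of the same $\sqrt\delta$-matching at the cost of one non-obvious algebraic manipulation; your route is conceptually more transparent about \emph{why} the exponent is $\beta/2$ (the $\delta^{-3/2}$ versus $\delta^{(3+\beta)/2}$ mismatch is exactly right) but would need the dyadic bookkeeping written out to be a proof.

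One small imprecision on the far piece: pulling out the \emph{entire} factor $e^{-\tilde\varepsilon_0^2/2\delta}$ as you describe leaves you with no integrable Gaussian in $x$; you need to split the exponent (e.g.\ keep $e^{-(x+y)^2/2\delta}$ and extract $e^{-\tilde\varepsilon_0^2/2\delta}$ from the remaining half), or simply integrate $\int_{\tilde\varepsilon_0}^\infty g_\delta(u)^2\,du$ directly and use the Gaussian tail estimate $\int_\eta^\infty u^2 e^{-u^2}du \le \tfrac12(\eta+\eta^{-1})e^{-\eta^2}$, which is what the paper does.
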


\begin{proof}
\begin{align*}
\langle T_{\delta}(\nu_{s}),\partial_{x}T_{\delta}^{r}(\nu_{s})\rangle_{0}=&\int_0^{\infty}T_{\delta}(\nu_{s})(x) 
\partial_{x}T_{\delta}^{r}(\nu_{s})(x)dx\allowdisplaybreaks\\
=&\int_0^{\infty}T_{\delta}(\nu_{s})(x)\left(\int_0^{\infty}\partial_{x}G_{\delta}^{r}(x,y)\nu_{s}(dy)\right)dx
\allowdisplaybreaks\\
=&\int_0^{\infty}T_{\delta}(\nu_{s})(x)\left(\int_0^{\infty}(\partial_{x}G_{\delta}(x,y)-
\frac{2}{\sqrt{2\pi\delta}}e^{-\frac{(x+y)^{2}}{2\delta}}\frac{x+y}{\delta})\nu_{s}(dy)\right)dx\allowdisplaybreaks\\
=&\int_0^{\infty}T_{\delta}(\nu_{s})(x)\int_0^{\infty}\partial_{x}G_{\delta}(x,y)\nu_{s}(dy)dx\allowdisplaybreaks\\
&-\int_0^{\infty}T_{\delta}(\nu_{s})(x)\int_0^{\infty}\frac{2}{\sqrt{2\pi\delta}}e^{-\frac{(x+y)^{2}}{2\delta}}
\frac{x+y}{\delta}\nu_{s}(dy)dx\allowdisplaybreaks\\
=&\int_0^{\infty}T_{\delta}(\nu_{s})(x)\partial_{x}T_{\delta}(\nu_{s})(x)dx-\int_0^{\infty}T_{\delta}(\nu_{s})(x)
\int_0^{\infty}\frac{2}{\sqrt{2\pi\delta}}e^{-\frac{(x+y)^{2}}{2\delta}}\frac{x+y}{\delta}\nu_{s}(dy)dx\allowdisplaybreaks\\
=&\frac{1}{2}\int_0^{\infty}\partial_{x}[(T_{\delta}(\nu_{s})(x))^{2}]dx-\int_0^{\infty}T_{\delta}(\nu_{s})(x)
\int_0^{\infty}\frac{2}{\sqrt{2\pi\delta}}e^{-\frac{(x+y)^{2}}{2\delta}}\frac{x+y}{\delta}\nu_{s}(dy)dx\allowdisplaybreaks\\
=&-\int_0^{\infty}T_{\delta}(\nu_{s})(x)\left(\int_0^{\infty}\frac{2}{\sqrt{2\pi\delta}}e^{-\frac{(x+y)^{2}}{2\delta}}
\frac{x+y}{\delta}\nu_{s}(dy)\right)dx.
\end{align*}
Therefore,
\begin{align*}
|-2\mu\langle T_{\delta}(\nu_{s}),\partial_{x}T_{\delta}^{r}(\nu_{s})\rangle_{0}|=&\left|2\mu\int_0^{\infty}T_{\delta}(\nu_{s})(x) 
\left(\int_0^{\infty}\frac{2}{\sqrt{2\pi\delta}}e^{-\frac{(x+y)^{2}}{2\delta}}\frac{x+y}{\delta}\nu_{s}(dy)\right)dx\right|
\allowdisplaybreaks\\
\leq&\left|\mu\int_0^{\infty}(T_{\delta}(\nu_{s})(x))^{2}dx+\mu\int_0^{\infty}\left(\int_0^{\infty}
\frac{2}{\sqrt{2\pi\delta}}e^{-\frac{(x+y)^{2}}{2\delta}}\frac{x+y}{\delta}\nu_{s}(dy)\right)^{2}dx\right|\allowdisplaybreaks\\
\leq&|\mu|\cdot||T_{\delta}(\nu_{s})||_{0}^{2}+|\mu|\int_0^{\infty}\left(\int_0^{\infty}\frac{2}{\sqrt{2\pi\delta}}
e^{-\frac{(x+y)^{2}}{2\delta}}\frac{x+y}{\delta}\nu_{s}(dy)\right)^{2}dx.
\end{align*}
Now let us denote
\[P_{1}:=\int_0^{\infty}\left(\int_0^{\infty}\frac{2}{\sqrt{2\pi\delta}}e^{-\frac{(x+y)^{2}}{2\delta}}
\frac{x+y}{\delta}\nu_{s}(dy)\right)^{2}dx\]
and investigate the bound for $P_{1}$.
\begin{align*}
P_{1}=&\int_0^{\infty}\left(\int_0^{\infty}\frac{2}{\sqrt{2\pi\delta}}e^{-\frac{(x+y)^{2}}{2\delta}}\frac{x+y}{\delta}
\nu_{s}(dy)\right)^{2}dx\allowdisplaybreaks\\
=&\int_0^{\infty}\int_0^{\infty}\int_0^{\infty}(\frac{2}{\sqrt{2\pi\delta}})^{2}
e^{-\frac{(x+y_{1})^{2}+(x+y_{2})^{2}}{2\delta}}\frac{(x+y_{1})(x+y_{2})}{\delta^{2}}\nu_{s}(dy_{1})\nu_{s}(dy_{2})dx\allowdisplaybreaks\\
=&\int_0^{\infty}\int_0^{\infty}\nu_{s}(dy_{1})\nu_{s}(dy_{2})\int_0^{\infty}(\frac{2}{\sqrt{2\pi\delta}})^{2}
e^{-\frac{(x+y_{1})^{2}+(x+y_{2})^{2}}{2\delta}}\frac{(x+y_{1})(x+y_{2})}{\delta^{2}}dx\allowdisplaybreaks\\
=&\int_0^{\infty}\int_0^{\infty}\nu_{s}(dy_{1})\nu_{s}(dy_{2})\int_0^{\infty}(\frac{2}{\sqrt{2\pi\delta}})^{2}
e^{-\frac{1}{\delta}\left[(x+\frac{y_{1}+y_{2}}{2})^{2}+(\frac{y_{1}-y_{2}}{2})^{2}\right]}\frac{(x+\frac{y_{1}+y_{2}}{2})^{2}
-(\frac{y_{1}-y_{2}}{2})^{2}}{\delta^{2}}dx.
\end{align*}
By changing variables using
\[z^{2}=(x+\frac{y_{1}+y_{2}}{2})^{2}+(\frac{y_{1}-y_{2}}{2})^{2},\]
we have
\[ P_{1}=\int_0^{\infty}\int_0^{\infty} \nu_{s}(dy_{1})\nu_{s}(dy_{2})\int_{\sqrt{\frac{y_{1}^{2}+y_{2}^{2}}{2}}}^{+\infty}
(\frac{2}{\sqrt{2\pi\delta}})^{2}e^{-\frac{z^{2}}{\delta}}\frac{z^{2}-\frac{(y_{1}-y_{2})^{2}}{2}}{\delta^{2}}
\frac{z}{\sqrt{z^{2}-\left(\frac{y_{1}-y_{2}}{2}\right)^{2}}}dz.\]

Since
\[1\leq\frac{z}{\sqrt{z^{2}-\left(\frac{y_{1}-y_{2}}{2}\right)^{2}}}\leq\sqrt{2}\]
when $z\geq \sqrt{\frac{y_{1}^{2}+y_{2}^{2}}{2}}$, we have
\begin{align*}
P_{1}\leq&\int_0^{\infty}\int_0^{\infty}\nu_{s}(dy_{1})\nu_{s}(dy_{2})
\int_{\sqrt{\frac{y_{1}^{2}+y_{2}^{2}}{2}}}^{\infty}\sqrt{2}(\frac{2}{\sqrt{2\pi\delta}})^{2}e^{-\frac{z^{2}}{\delta}}
\frac{z^{2}-\frac{(y_{1}-y_{2})^{2}}{2}}{\delta^{2}}dz\allowdisplaybreaks\\
=&\int_0^{\infty}\int_0^{\infty}\nu_{s}(dy_{1})\nu_{s}(dy_{2})
\int_0^{\infty}1_{\{y_{1}^{2}+y_{2}^{2}<2z^2\}}\sqrt{2}(\frac{2}{\sqrt{2\pi\delta}})^{2}e^{-\frac{z^{2}}{\delta}}
\frac{z^{2}-\frac{(y_{1}-y_{2})^{2}}{2}}{\delta^{2}}dz\allowdisplaybreaks\\
=&\int_{0}^{\infty}\sqrt{2}(\frac{2}{\sqrt{2\pi\delta}})^{2}e^{-\frac{z^{2}}{\delta}}dz\int_0^{\infty}\int_0^{\infty}
\frac{z^{2}-\frac{(y_{1}-y_{2})^{2}}{2}}{\delta^{2}}1_{\{y_{1}^{2}+y_{2}^{2}<
2z^{2}\}}\nu_{s}(dy_{1})\nu_{s}(dy_{2})\allowdisplaybreaks\\
\leq&\int_{0}^{\infty}\sqrt{2}(\frac{2}{\sqrt{2\pi\delta}})^{2}e^{-\frac{z^{2}}{\delta}}\frac{z^{2}}{\delta^{2}}dz
\int_0^{\infty}\int_0^{\infty}1_{\{y_{1}^{2}+y_{2}^{2}< 2z^{2}\}}\nu_{s}(dy_{1})\nu_{s}(dy_{2})\allowdisplaybreaks\\
\leq&\int_{0}^{+\infty}\sqrt{2}(\frac{2}{\sqrt{2\pi\delta}})^{2}e^{-\frac{z^{2}}{\delta}}\frac{z^{2}}{\delta^{2}}dz\int_0^{\infty}
\int_0^{\infty}1_{\{y_{1}< \sqrt{2}z,y_{2}<\sqrt{2}z}\}\nu_{s}(dy_{1})\nu_{s}(dy_{2})\allowdisplaybreaks\\
\leq&\int_{0}^{\infty}\sqrt{2}(\frac{2}{\sqrt{2\pi\delta}})^{2}e^{-\frac{z^{2}}{\delta}}\frac{z^{2}}{\delta^{2}}
(\nu_{s}((0, \sqrt{2}z)))^{2}dz\allowdisplaybreaks\\
=&\int_{0}^{\infty}\frac{2}{\pi\delta}e^{-\frac{z^{2}}{2\delta}}\frac{z^{2}}{2\delta^{2}}(\nu_{s}((0,z)))^{2}dz.
\end{align*}
Therefore,
\begin{eqnarray*}
\lefteqn{\mathbb{E}[|-2\mu\langle T_{\delta}(\nu_{s}),\partial_{x}T_{\delta}^{r}(\nu_{s})\rangle_{0}|]\allowdisplaybreaks}\\
&\leq&|\mu|\cdot\mathbb{E}[||T_{\delta}(\nu_{s})||_{0}^{2}]+|\mu|\mathbb{E}[P_{1}]\allowdisplaybreaks\\
&\leq&|\mu|\cdot\mathbb{E}[||T_{\delta}(\nu_{s})||_{0}^{2}]+|\mu|\mathbb{E}\left[\int_{0}^{\infty}\frac{2}{\pi\delta}
e^{-\frac{z^{2}}{2\delta}}\frac{z^{2}}{2\delta^{2}}(\nu_{s}((0,z)))^{2}dz\right]\allowdisplaybreaks\\
&=&|\mu|\cdot\mathbb{E}[||T_{\delta}(\nu_{s})||_{0}^{2}]+|\mu|\int_{0}^{\infty}\frac{2}{\pi\delta}e^{-\frac{z^{2}}{2\delta}}
\frac{z^{2}}{2\delta^{2}}\mathbb{E}[\nu_{s}((0,z))^{2}]dz.
\end{eqnarray*}
By Lemma \ref{Lemma1} in the last section we know that for the measure-valued solution $\nu_{s}^{+}$ of (\ref{martingaleproblem}), 
there exists $\tilde{\varepsilon}_{0}>0$ such that for all $z<\tilde{\varepsilon}_{0}$ we have
\[\mathbb{E}[\nu_{s}^{+}((0,z))^{2}]\leq K_{T}z^{3+\beta}.\]

Hence we have
\begin{align*}
&\mathbb{E}[|-2\mu\langle T_{\delta}(\nu_{s}),\partial_{x}T_{\delta}^{r}(\nu_{s})\rangle_{0}|]\allowdisplaybreaks\\
\leq&|\mu|\cdot\mathbb{E}[||T_{\delta}(\nu_{s})||_{0}^{2}]+|\mu|\int_{0}^{\tilde{\varepsilon}_{0}}\frac{2}{\pi\delta}
e^{-\frac{z^{2}}{2\delta}}\frac{z^{2}}{2\delta^{2}}\mathbb{E}[\nu_{s}((0,z))^{2}]dz+|\mu|\int_{\tilde{\varepsilon}_{0}}^{\infty}
\frac{2}{\pi\delta}e^{-\frac{z^{2}}{2\delta}}\frac{z^{2}}{2\delta^{2}}\mathbb{E}[(\nu_{s}((0,z]))^{2}]dz\allowdisplaybreaks\\
\leq&|\mu|\cdot\mathbb{E}[||T_{\delta}(\nu_{s})||_{0}^{2}]+|\mu|\int_{0}^{\tilde{\varepsilon}_{0}}\frac{2}{\pi\delta}
e^{-\frac{z^{2}}{2\delta}}\frac{z^{2}}{2\delta^{2}}K_{T}z^{3+\beta}dz+4|\mu|\int_{\tilde{\varepsilon}_{0}}^{\infty}
\frac{2}{\pi\delta}e^{-\frac{z^{2}}{2\delta}}\frac{z^{2}}{2\delta^{2}}dz\allowdisplaybreaks\\
\leq&|\mu|\cdot\mathbb{E}[||T_{\delta}(\nu_{s})||_{0}^{2}]+|\mu|\int_{0}^{\infty}\frac{2}{\pi\delta}e^{-\frac{z^{2}}{2\delta}}
\frac{z^{2}}{2\delta^{2}}K_{T}z^{3+\beta}dz+4|\mu|\int_{\tilde{\varepsilon}_{0}}^{\infty}\frac{2}{\pi\delta}e^{-\frac{z^{2}}{2\delta}}
\frac{z^{2}}{2\delta^{2}}dz\allowdisplaybreaks\\
=&|\mu|\cdot\mathbb{E}[||T_{\delta}(\nu_{s})||_{0}^{2}]+|\mu|K_{T}\frac{2^{3+\frac{\beta}{2}}}{\pi}\delta^{\frac{\beta}{2}}
\int_{0}^{\infty}e^{-x^{2}}x^{5+\beta}dx+4|\mu|\frac{2\sqrt{2}}{\pi\delta^{\frac{3}{2}}}\int_{\frac{\tilde{\varepsilon}_{0}}
{\sqrt{2\delta}}}^{\infty}e^{-x^{2}}x^{2}dx
\end{align*}
Finally we observe that for $\eta>0$
\[ \int_{\eta}^{\infty} x^2 e^{-x^2} dx \leq \frac12(\eta+\frac1\eta) e^{-\eta^2}, \]
and hence setting $\eta=\frac{\tilde{\varepsilon}_{0}}{\sqrt{2\delta}}$, so that by assumption $\eta>1$ we have
\[ \mathbb{E}[|-2\mu\langle T_{\delta}(\nu_{s}),\partial_{x}T_{\delta}^{r}(\nu_{s})\rangle_{0}|] \leq
 |\mu|\cdot\mathbb{E}[||T_{\delta}(\nu_{s})||_{0}^{2}]+C^{1}_{T}\delta^{\frac{\beta}{2}}+\frac{C_{2}
\tilde{\varepsilon}_{0}}{\delta^2} e^{-\tilde{\varepsilon}^2_{0}/2\delta}
\]
where 
\[C_{T}^{1}=|\mu|K_{T}\frac{2^{3+\frac{\beta}{2}}}{\pi}\int_{0}^{+\infty}e^{-x^{2}}x^{5+\beta}dx\]
is a constant and
\[C_{2}=|\mu|\frac{8\sqrt{2}}{\pi}.\]
\end{proof}

\begin{lem}
For $\delta<\tilde{\varepsilon}_0^2/2$ we have
\begin{equation}
\mathbb{E}[\langle T_{\delta}(\nu_{s}),  \partial^{2}_{x}T_{\delta}(\nu_{s})\rangle_{0}+\rho||\partial_{x} 
T_{\delta}^{r}(\nu_{s})||_{0}^{2}]\leq \frac{\rho}{1-\rho}\left(C^{1}_{T}\delta^{\frac{\beta}{2}}+ \frac{C_{2}\tilde{\varepsilon}_{0}}
{\delta^2} e^{-\tilde{\varepsilon}^2_{0}/2\delta} \right),
\end{equation}
where $C_{T}^{1}, C_{2}$ and $\tilde{\varepsilon}_{0}$ are the same as in Lemma \ref{Lemma2}.
\label{Lemma4}
\end{lem}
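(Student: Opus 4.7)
The plan is to rewrite the left-hand side as a non-positive quadratic plus a constant multiple of the quantity
\[
P_{1}=\int_{0}^{\infty}\Big(\int_{0}^{\infty}\tfrac{2(x+y)}{\delta\sqrt{2\pi\delta}}e^{-(x+y)^{2}/2\delta}\nu_{s}(dy)\Big)^{2}dx
\]
that was already bounded inside the proof of Lemma \ref{Lemma2}. Thus no fresh input from Lemma \ref{Lemma1} is needed: the argument requires only an integration by parts, an identity relating the absorbing and reflecting kernels, and a completion of the square.

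First, since $T_{\delta}(\nu_{s})(0)=0$ and both $T_{\delta}(\nu_{s})$ and its derivative decay exponentially at infinity (as $\nu_{s}$ is a finite measure convolved with a Gaussian), an integration by parts gives
\[
\langle T_{\delta}(\nu_{s}),\partial_{x}^{2}T_{\delta}(\nu_{s})\rangle_{0}=-\|\partial_{x}T_{\delta}(\nu_{s})\|_{0}^{2}.
\]
Second, a direct differentiation of the definitions yields
\[
\partial_{x}G_{\delta}^{r}(x,y)-\partial_{x}G_{\delta}(x,y)=-\tfrac{2(x+y)}{\delta\sqrt{2\pi\delta}}e^{-(x+y)^{2}/2\delta},
\]
so setting $Q(x):=\int_{0}^{\infty}\tfrac{2(x+y)}{\delta\sqrt{2\pi\delta}}e^{-(x+y)^{2}/2\delta}\nu_{s}(dy)\geq 0$ we have $\partial_{x}T_{\delta}^{r}(\nu_{s})=\partial_{x}T_{\delta}(\nu_{s})-Q$ and $\|Q\|_{0}^{2}=P_{1}$. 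Expanding $\|\partial_{x}T_{\delta}^{r}(\nu_{s})\|_{0}^{2}$ and completing the square in the variable $\partial_{x}T_{\delta}(\nu_{s})$ gives
\begin{align*}
\langle T_{\delta}(\nu_{s}),\partial_{x}^{2}T_{\delta}(\nu_{s})\rangle_{0}+\rho\|\partial_{x}T_{\delta}^{r}(\nu_{s})\|_{0}^{2}
&=-(1-\rho)\|\partial_{x}T_{\delta}(\nu_{s})\|_{0}^{2}-2\rho\langle\partial_{x}T_{\delta}(\nu_{s}),Q\rangle_{0}+\rho\|Q\|_{0}^{2}\\
&=-(1-\rho)\Big\|\partial_{x}T_{\delta}(\nu_{s})+\tfrac{\rho}{1-\rho}Q\Big\|_{0}^{2}+\tfrac{\rho}{1-\rho}\|Q\|_{0}^{2}\\
&\leq\tfrac{\rho}{1-\rho}\,P_{1}.
\end{align*}
Taking expectations and substituting the estimate on $\mathbb{E}[P_{1}]$ established in the proof of Lemma \ref{Lemma2} (obtained by splitting at $\tilde{\varepsilon}_{0}$, using Lemma \ref{Lemma1} on $(0,\tilde{\varepsilon}_{0})$ and a Gaussian tail bound on $(\tilde{\varepsilon}_{0},\infty)$) then delivers the stated inequality with exactly the constants $C_{T}^{1}$ and $C_{2}$ from Lemma \ref{Lemma2}.

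The only delicate point is ensuring that the boundary contributions in the integration by parts really vanish at $x=0$; this is automatic because $T_{\delta}(\nu_{s})(0)=0$, so no control on $\partial_{x}T_{\delta}(\nu_{s})(0)$ is required. The factor $\rho/(1-\rho)$ appears naturally from completing the square and is consistent with the expected degeneracy of the estimate as $\rho\uparrow 1$, where the idiosyncratic noise disappears and the SPDE becomes a transport equation with no smoothing.
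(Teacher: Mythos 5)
Your proof is correct and follows essentially the same route as the paper's: both reduce the left-hand side to the bound $\tfrac{\rho}{1-\rho}P_1$ using the kernel identity $\partial_x T_\delta^r\nu_s=\partial_x T_\delta\nu_s - Q$ and the integration by parts $\langle T_\delta\nu_s,\partial_x^2 T_\delta\nu_s\rangle_0=-\|\partial_x T_\delta\nu_s\|_0^2$, then invoke the estimate for $\mathbb{E}[P_1]$ from Lemma~\ref{Lemma2}. The only cosmetic difference is that you complete the square directly whereas the paper applies a weighted Young's inequality to the cross term $-2\rho\langle\partial_x T_\delta\nu_s,Q\rangle_0$; the two manipulations are algebraically identical and yield the same constant $\rho/(1-\rho)$.
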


\begin{proof}
We have
\begin{align*}
||\partial_{x}T_{\delta}^{r}(\nu_{s})||_{0}^{2}=&\int_0^{\infty}(\partial_{x}T_{\delta}^{r}(\nu_{s})(x))^{2}dx\allowdisplaybreaks\\
=&\int_0^{\infty}\left(\int_0^{\infty}\partial_{x}G_{\delta}^{r}(x,y)\nu_{s}(dy)\right)^{2}dx\allowdisplaybreaks\\
=&\int_0^{\infty}\left(\int_0^{\infty}\left(\partial_{x}G_{\delta}(x,y)-\frac{2}{\sqrt{2\pi\delta}}
e^{-\frac{(x+y)^{2}}{2\delta}}\frac{x+y}{\delta}\right)\nu_{s}(dy)\right)^{2}dx\allowdisplaybreaks\\
=&\int_0^{\infty}\left(\int_0^{\infty}\partial_{x}G_{\delta}(x,y)\nu_{s}(dy)-\int_0^{\infty}\frac{2}{\sqrt{2\pi\delta}}
e^{-\frac{(x+y)^{2}}{2\delta}}\frac{x+y}{\delta}\nu_{s}(dy)\right)^{2}dx\allowdisplaybreaks\\
=&\int_0^{\infty}\left(\partial_{x}T_{\delta}\nu_{s}(x)-\int_0^{\infty}\frac{2}{\sqrt{2\pi\delta}}
e^{-\frac{(x+y)^{2}}{2\delta}}\frac{x+y}{\delta}\nu_{s}(dy)\right)^{2}dx\allowdisplaybreaks\\
=&||\partial_{x}T_{\delta}\nu_{s}||_{0}^{2}-2\int_0^{\infty}\partial_{x}T_{\delta}\nu_{s}(x)\int_0^{\infty}
\frac{2}{\sqrt{2\pi\delta}}e^{-\frac{(x+y)^{2}}{2\delta}}\frac{x+y}{\delta}\nu_{s}(dy)dx\allowdisplaybreaks\\
&+\int_0^{\infty}\left(\int_0^{\infty}\frac{2}{\sqrt{2\pi\delta}}e^{-\frac{(x+y)^{2}}{2\delta}}\frac{x+y}{\delta}
\nu_{s}(dy)\right)^{2}dx.
\end{align*}
Moreover,
\begin{eqnarray*}
\lefteqn{\rho\left(\langle T_{\delta}(\nu_{s}),  \partial^{2}_{x}T_{\delta}(\nu_{s})\rangle_{0}+||\partial_{x}T_{\delta}\nu_{s}||_{0}^{2}\right) 
\allowdisplaybreaks}\\
&=&\rho\left(\int_0^{\infty}T_{\delta}(\nu_{s})(x)\partial^{2}_{x}T_{\delta}(\nu_{s})(x)dx+\int_0^{\infty} 
\left(\partial_{x}T_{\delta}(\nu_{s})(x)\right)^{2}dx\right)\allowdisplaybreaks\\
&=&\rho\left(\int_0^{\infty}T_{\delta}(\nu_{s})(x)d(\partial_{x}T_{\delta}(\nu_{s})(x))+\int_0^{\infty}
\left(\partial_{x}T_{\delta}(\nu_{s})(x)\right)^{2}dx\right)\allowdisplaybreaks\\
&=&\rho\left(-\int_0^{\infty}(\partial_{x}T_{\delta}(\nu_{s})(x))^{2}dx+\int_0^{\infty}\left(\partial_{x}
T_{\delta}(\nu_{s})(x)\right)^{2}dx\right)\allowdisplaybreaks\\
&=&0.
\end{eqnarray*}
Also we know
\begin{align*}
(1-\rho)\langle T_{\delta}(\nu_{s}), \partial^{2}_{x}T_{\delta}(\nu_{s})\rangle_{0}=(1-\rho)\left(-\int_0^{\infty} 
(\partial_{x}T_{\delta}(\nu_{s})(x))^{2}dx\right).
\end{align*}
Therefore we have
\begin{eqnarray*}
\lefteqn{\langle T_{\delta}(\nu_{s}),  \partial^{2}_{x}T_{\delta}(\nu_{s})\rangle_{0}+\rho||\partial_{x}T_{\delta}^{r}(\nu_{s})||_{0}^{2} 
\allowdisplaybreaks}\\ 
&=& -2\rho\int_0^{\infty}\partial_{x}T_{\delta}\nu_{s}(x)\int_0^{\infty}\frac{2}{\sqrt{2\pi\delta}} 
e^{-\frac{(x+y)^{2}}{2\delta}}\frac{x+y}{\delta}\nu_{s}(dy)dx\allowdisplaybreaks\\
&&+\rho\int_0^{\infty}\left(\int_0^{\infty}\frac{2}{\sqrt{2\pi\delta}}e^{-\frac{(x+y)^{2}}{2\delta}}\frac{x+y}{\delta}
\nu_{s}(dy)\right)^{2}dx-(1-\rho)\int_0^{\infty}(\partial_{x}T_{\delta}(\nu_{s})(x))^{2}dx\allowdisplaybreaks\\
&\leq& \left|2\rho\int_0^{\infty}\sqrt{\frac{1-\rho}{\rho}}\partial_{x}T_{\delta}\nu_{s}(x)\int_0^{\infty} 
\sqrt{\frac{\rho}{1-\rho}}\frac{2}{\sqrt{2\pi\delta}}e^{-\frac{(x+y)^{2}}{2\delta}}\frac{x+y}{\delta}\nu_{s}(dy)dx\right|
\allowdisplaybreaks\\
&&+\rho\int_0^{\infty}\left(\int_0^{\infty}\frac{2}{\sqrt{2\pi\delta}}e^{-\frac{(x+y)^{2}}{2\delta}}\frac{x+y}{\delta}
\nu_{s}(dy)\right)^{2}dx-(1-\rho)\int_0^{\infty}(\partial_{x}T_{\delta}(\nu_{s})(x))^{2}dx\allowdisplaybreaks\\
&\leq&\rho\int_0^{\infty}\left(\sqrt{\frac{1-\rho}{\rho}}\partial_{x}T_{\delta}\nu_{s}(x)\right)^{2}dx+\rho\int_0^{\infty}
\left(\int_0^{\infty}\sqrt{\frac{\rho}{1-\rho}}\frac{2}{\sqrt{2\pi\delta}}e^{-\frac{(x+y)^{2}}{2\delta}}\frac{x+y}{\delta} 
\nu_{s}(dy)\right)^{2}dx\allowdisplaybreaks\\
&&+\rho\int_0^{\infty}\left(\int_0^{\infty}\frac{2}{\sqrt{2\pi\delta}}e^{-\frac{(x+y)^{2}}{2\delta}}\frac{x+y}{\delta}
\nu_{s}(dy)\right)^{2}dx-(1-\rho)\int_0^{\infty}(\partial_{x}T_{\delta}(\nu_{s})(x))^{2}dx\allowdisplaybreaks\\
&=&(1-\rho)\int_0^{\infty}\left(\partial_{x}T_{\delta}\nu_{s}(x)\right)^{2}dx+\frac{\rho}{1-\rho}\int_0^{\infty}
\left(\int_0^{\infty}\frac{2}{\sqrt{2\pi\delta}}e^{-\frac{(x+y)^{2}}{2\delta}}\frac{x+y}{\delta}\nu_{s}(dy)\right)^{2}dx
\allowdisplaybreaks\\
&&-(1-\rho)\int_0^{\infty}(\partial_{x}T_{\delta}(\nu_{s})(x))^{2}dx\allowdisplaybreaks\\
&=&\frac{\rho}{1-\rho}\int_0^{\infty}\left(\int_0^{\infty}\frac{2}{\sqrt{2\pi\delta}}e^{-\frac{(x+y)^{2}}{2\delta}}
\frac{x+y}{\delta}\nu_{s}(dy)\right)^{2}dx.
\end{eqnarray*}

By the estimate for $P_{1}$ obtained in Lemma \ref{Lemma2} we have
\[
\mathbb{E}[\langle T_{\delta}(\nu_{s}), \partial^{2}_{x}T_{\delta}(\nu_{s})\rangle_{0}+\rho||\partial_{x}T_{\delta}^{r}(\nu_{s})||_{0}^{2}]
\leq \frac{\rho}{1-\rho}\mathbb{E}[P_{1}] \leq \frac{\rho}{1-\rho}\left(C^{1}_{T}\delta^{\frac{\beta}{2}}+ 
\frac{C_{2}\tilde{\varepsilon}_{0}} {\delta^2} e^{-\tilde{\varepsilon}_{0}^2/2\delta} \right),
\]
where $C_{T}^{1}, C_{2}$ and $\tilde{\varepsilon}_{0}$ are the same as in Lemma \ref{Lemma2}.
\end{proof}

Now, combining Lemma \ref{Lemma2} and \ref{Lemma4} gives the following
\begin{thm}
If $\nu_{t}$ is an $\mathcal{M}(\mathbb{R}^{+})$-valued solution of ($\ref{martingaleproblem}$) 
and $Z_{\delta}(t)=T_{\delta}\nu_{t}$, we have for $\delta<\tilde{\varepsilon}_0^2/2$,
\begin{align}
\mathbb{E}||Z_{\delta}(t)||_{0}^{2}\leq& ||Z_{\delta}(0)||_{0}^{2}+|\mu|\int_{0}^{t}\mathbb{E} ||T_{\delta}(\nu_{s})||_{0}^{2}ds 
+\frac{1}{1-\rho}C^{1}_{T}\delta^{\frac{\beta}{2}}t\allowdisplaybreaks\nonumber\\
&+\frac{C_{2}t\tilde{\varepsilon}_{0}} {(1-\rho)\delta^2} e^{-\tilde{\varepsilon}^2_{0}/2\delta}.
\label{27}
\end{align}
\label{Th1}
\end{thm}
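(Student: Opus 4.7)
The plan is essentially to combine equation~(\ref{13}), which identifies $\mathbb{E}\|Z_\delta(t)\|_0^2$ via It\^o's formula applied to $\langle Z_\delta(s),\phi\rangle_0^2$ summed over an orthonormal basis of $\bar{C}$, with the two pointwise-in-time bounds established in Lemmas~\ref{Lemma2} and~\ref{Lemma4}. Since (\ref{13}) already puts the quantity of interest in the form
\[
\|Z_\delta(0)\|_0^2 + \mathbb{E}\int_0^t \bigl(-2\mu\langle T_\delta(\nu_s),\partial_x T_\delta^r(\nu_s)\rangle_0\bigr)\,ds + \mathbb{E}\int_0^t\bigl(\langle T_\delta(\nu_s),\partial_x^2 T_\delta(\nu_s)\rangle_0 + \rho\|\partial_x T_\delta^r(\nu_s)\|_0^2\bigr)\,ds,
\]
the task reduces to bounding the two integrands $s$-by-$s$ and then invoking Fubini.

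First I would apply Lemma~\ref{Lemma2} to the drift integrand: for each $s\le t$ and each $\delta<\tilde\varepsilon_0^2/2$,
\[
\mathbb{E}\bigl[-2\mu\langle T_\delta(\nu_s),\partial_x T_\delta^r(\nu_s)\rangle_0\bigr] \;\le\; |\mu|\cdot\mathbb{E}\|T_\delta(\nu_s)\|_0^2 + C_T^1\delta^{\beta/2} + \frac{C_2\tilde\varepsilon_0}{\delta^2}e^{-\tilde\varepsilon_0^2/2\delta}.
\]
Next I would apply Lemma~\ref{Lemma4} to the sum of the second-derivative and quadratic-variation contributions:
\[
\mathbb{E}\bigl[\langle T_\delta(\nu_s),\partial_x^2 T_\delta(\nu_s)\rangle_0 + \rho\|\partial_x T_\delta^r(\nu_s)\|_0^2\bigr] \;\le\; \frac{\rho}{1-\rho}\Bigl(C_T^1\delta^{\beta/2} + \frac{C_2\tilde\varepsilon_0}{\delta^2}e^{-\tilde\varepsilon_0^2/2\delta}\Bigr).
\]

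Finally I would integrate both bounds over $s\in[0,t]$ and sum. The non-stochastic error terms combine according to $1 + \rho/(1-\rho) = 1/(1-\rho)$, producing precisely the coefficient $\frac{1}{1-\rho}$ in front of $C_T^1\delta^{\beta/2}t$ and $\frac{C_2 t\tilde\varepsilon_0}{\delta^2}e^{-\tilde\varepsilon_0^2/2\delta}$ in the statement, while the $|\mu|\cdot\mathbb{E}\|T_\delta(\nu_s)\|_0^2$ term integrates against $ds$ to yield the remaining summand.

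There is no real obstacle here: the only subtleties, namely the removal of the reflecting-kernel defect via the $e^{-(x+y)^2/(2\delta)}$ correction and the careful use of the boundary estimate $\mathbb{E}[\nu_s((0,\varepsilon))^2]\le K_T\varepsilon^{3+\beta}$, are already packaged inside Lemmas~\ref{Lemma2} and~\ref{Lemma4}. What one should double-check is merely that the It\^o expansion in (\ref{13}) is justified with $\phi$ ranging over a basis of $\bar C\subset H^0$ (so that the stochastic integral against $dM_s$ vanishes in expectation, as we have used) and that Fubini can be applied to the non-negative $C_2\tilde\varepsilon_0\delta^{-2}e^{-\tilde\varepsilon_0^2/2\delta}$ term, both of which follow from the boundedness of $\phi',\phi''$ and the finiteness of $\nu_s$.
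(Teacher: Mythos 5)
Your proposal is correct and follows exactly the route the paper takes: the paper derives (\ref{13}) via the It\^{o} expansion over an orthonormal basis and then states Theorem~\ref{Th1} with the one-line justification ``combining Lemma~\ref{Lemma2} and~\ref{Lemma4}''. You have simply written out the two pointwise bounds, integrated in $s$, and correctly observed that $1+\rho/(1-\rho)=1/(1-\rho)$ gives the stated coefficient.
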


\begin{cor}
If $\nu_{t}$ is a measure-valued solution of ($\ref{martingaleproblem}$), then $\nu_{t}\in H^{0}$, $a.s.$ and $\mathbb{E}||\nu_{t}||_{0}^{2}<\infty$, $\forall t\geq 0$.
\label{Coro1}
\end{cor}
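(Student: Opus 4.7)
The plan is to apply Gronwall's inequality to the bound (\ref{27}) of Theorem~\ref{Th1} to produce, for all sufficiently small $\delta > 0$,
\[ \mathbb{E}\|T_\delta\nu_t\|_0^2 \leq \left(\|T_\delta\nu_0\|_0^2 + E(\delta,t)\right) e^{|\mu| t}, \]
where $E(\delta,t) = \frac{C_T^1}{1-\rho}\delta^{\beta/2}t + \frac{C_2 t\tilde{\varepsilon}_0}{(1-\rho)\delta^2}e^{-\tilde{\varepsilon}_0^2/(2\delta)}$. Both terms of $E(\delta,t)$ vanish as $\delta\downarrow 0$: the first because $\beta>0$, and the second because the Gaussian tail dominates the polynomial $\delta^{-2}$. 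Provided the initial measure $\nu_0$ has a density $v_0 \in H^0$, one also has $\|T_\delta\nu_0\|_0 \to \|v_0\|_0$, so there is a constant $C_t < \infty$ with
\[ \sup_{\delta \in (0,\delta_0)} \mathbb{E}\|T_\delta\nu_t\|_0^2 \leq C_t. \]

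With this uniform bound in hand, I would extract an $H^0$ density for $\nu_t$ by weak compactness in the Hilbert space $L^2(\Omega; H^0)$. Boundedness of $\{T_{\delta_k}\nu_t\}$ along any sequence $\delta_k \downarrow 0$ yields, by Banach--Alaoglu, a subsequence converging weakly to some $V_t \in L^2(\Omega; H^0)$. To identify $V_t$ with a density of $\nu_t$, fix any $\phi\in C_c^\infty((0,\infty))$ and any bounded random variable $Z$; self-adjointness of $T_\delta$ together with the uniform convergence $T_{\delta_k}\phi \to \phi$ and finiteness of $\nu_t$ gives
\[ \mathbb{E}\bigl[Z \langle T_{\delta_k}\nu_t,\phi\rangle_0\bigr] = \mathbb{E}\bigl[Z \langle \nu_t, T_{\delta_k}\phi\rangle\bigr] \longrightarrow \mathbb{E}\bigl[Z\langle\nu_t,\phi\rangle\bigr], \]
while the left-hand side also converges to $\mathbb{E}[Z\langle V_t,\phi\rangle_0]$ by the weak convergence in $L^2(\Omega;H^0)$. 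Taking $\phi$ in a countable dense subset of $C_c^\infty((0,\infty))$ shows that $V_t$ is almost surely a density of $\nu_t$ on $(0,\infty)$, so $\nu_t \in H^0$ a.s. Weak lower semi-continuity of the norm then yields $\mathbb{E}\|\nu_t\|_0^2 \leq \liminf_k \mathbb{E}\|T_{\delta_k}\nu_t\|_0^2 \leq C_t$.

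The routine part is the Gronwall step, which is tailor-made by (\ref{27}). The genuinely delicate step is the identification of the weak limit with an honest density of $\nu_t$: the duality calculation above is needed to pin the subsequential limit down, and one must check that the limit is independent of the subsequence, which follows from uniqueness of densities once testing against $C_c^\infty$ has been performed. A quieter but essential prerequisite is that $\nu_0$ already admits an $H^0$-density; otherwise $\|T_\delta\nu_0\|_0^2$ blows up as $\delta\downarrow 0$ and the entire argument collapses at the very first step.
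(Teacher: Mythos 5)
Your argument is correct and reaches the conclusion, but after the common Gronwall step it takes a genuinely different route from the paper. The paper does not invoke weak compactness at all: it picks a complete orthonormal system $\{\phi_j\}\subset C_b(\mathbb{R}_+)$ for $H^0$, observes (via symmetry of the absorbing kernel and dominated convergence) that $\langle\phi_j,T_\delta\nu_t\rangle_0=\langle T_\delta\phi_j,\nu_t\rangle\to\langle\phi_j,\nu_t\rangle$ as $\delta\downarrow 0$, and then applies Fatou twice---to the sum over $j$ and to the expectation---to obtain
\[
\mathbb{E}\Bigl[\sum_j\langle\phi_j,\nu_t\rangle^2\Bigr]\leq\liminf_{\delta\to 0}\mathbb{E}\|T_\delta\nu_t\|_0^2\leq\|\nu_0\|_0^2\,e^{|\mu|t}.
\]
Since finiteness of $\sum_j\langle\phi_j,\nu_t\rangle^2$ is exactly the Riesz criterion for $\nu_t$ to define an element of $H^0$, this gives $\nu_t\in H^0$ a.s.\ with the moment bound in one stroke, and no subsequential limit ever needs to be extracted or identified. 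Your route (uniform $L^2(\Omega;H^0)$ bound, Banach--Alaoglu, duality test against $C_c^\infty$, weak lower semicontinuity of the norm) is heavier machinery for the same destination, and the identification of the weak limit and its independence of the subsequence is precisely the work the paper's CONS/Fatou device avoids by operating directly on $\nu_t$. On the other hand, you are explicit about something the paper leaves half-stated: the standing hypothesis $\nu_0\in H^0$ is essential (the paper only acknowledges this in the sentence following the Corollary), and your observation that $\|T_\delta\nu_0\|_0$ need not stay bounded otherwise is exactly why.
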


\begin{proof}
By ($\ref{27}$) we have for small $\delta$ that
\begin{align*}
\mathbb{E}||Z_{\delta}(t)||_{0}^{2}\leq& ||Z_{\delta}(0)||_{0}^{2}+|\mu|\int_{0}^{t}\mathbb{E} 
||T_{\delta}(\nu_{s})||_{0}^{2}ds+\frac{1}{1-\rho}C^{1}_{T}\delta^{\frac{\beta}{2}}T\allowdisplaybreaks\\
&+\frac{C_{2}T\tilde{\varepsilon}_{0}} {(1-\rho)\delta^2} e^{-\tilde{\varepsilon}_{0}^2/2\delta}\allowdisplaybreaks\\
:=&||Z_{\delta}(0)||_{0}^{2}+|\mu|\int_{0}^{t}\mathbb{E} ||Z_{\delta}(s)||_{0}^{2}ds+f(\delta,T),
\end{align*}
where
\[f(\delta,T)=\frac{1}{1-\rho}C^{1}_{T}\delta^{\frac{\beta}{2}}T+
\frac{C_{2}T\tilde{\varepsilon}_{0}} {(1-\rho)\delta^2} e^{-\tilde{\varepsilon}_{0}^2/2\delta}.\]

Applying Gronwall's inequality we have
\[\mathbb{E}||Z_{\delta}(t)||_{0}^{2}\leq (||Z_{\delta}(0)||_{0}^{2}+f(\delta,T)) e^{|\mu|t}.\]
It is clear that $\lim_{\delta\rightarrow 0}f(\delta,T)=0$. Now let $\{\phi_{j}\}$ be a complete, orthonormal system for $H^{0}$ 
such that $\phi_{j}\in C_{b}(\mathbb{R}^{+})$. Then by Fatou's lemma,
\begin{equation*}
\mathbb{E}\left[ \sum\limits_{j}\langle\phi_{j}, \nu_{t}\rangle^{2}\right]=\mathbb{E}\left[ \sum\limits_{j}
\lim_{\delta\rightarrow 0}\langle\phi_{j}, T_{\delta}\nu_{t}\rangle^{2}\right]\leq \liminf\limits_{\delta\rightarrow
 0}\mathbb{E}||Z_{\delta}(t)||_{0}^{2}\leq ||\nu_{0}||_{0}^{2} e^{|\mu|t},
\end{equation*}
Therefore $\nu_{t}\in H^{0}$ and $\mathbb{E}||\nu_{t}||_{0}^{2}<\infty$, $\forall t\geq 0$.
\end{proof}

Now we have proved the existence of an $L^{2}$-density for the limit empirical measure $\nu_{t}$, 
given that $\nu_{0}$ has an $L^{2}$-density.

\begin{thm}
Suppose that $\nu_{0}\in H^{0}$. Then (\ref{martingaleproblem}) has at most one measure-valued solution.
\end{thm}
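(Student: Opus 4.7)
The plan is to exploit the linearity of (\ref{martingaleproblem}) and recycle the a priori estimate of Theorem \ref{Th1}. Suppose $\nu^1,\nu^2$ are two measure-valued solutions with the same initial data $\nu_0\in H^0$. By Corollary \ref{Coro1} both $\nu^i_t$ admit $H^0$-densities almost surely. Subtracting the two versions of (\ref{martingaleproblem}) shows that the signed measure $w_t:=\nu^1_t-\nu^2_t$ satisfies, for every $\phi\in\bar C$,
\[ \langle\phi,w_t\rangle=\int_0^t\langle\mathcal{A}\phi,w_s\rangle\,ds+\int_0^t\langle\sqrt{\rho}\phi',w_s\rangle\,dM_s, \]
with $w_0=0$. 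I would then mollify by setting $Z_\delta(t):=T_\delta w_t$ and reproduce the derivation of Theorem \ref{Th1} line by line with $\nu_s$ replaced by $w_s$.

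Every calculation between (\ref{4}) and (\ref{27}) is bilinear in the underlying measure, so the integration-by-parts and It\^o-formula manipulations for $\|Z_\delta(t)\|_0^2$ transfer without change. The only ingredient that used positivity of $\nu_s$ is the small-ball estimate invoked inside Lemmas \ref{Lemma2} and \ref{Lemma4} to bound $\mathbb{E}[(\nu_s((0,z)))^2]$. For the signed difference, the elementary inequality $|w_s((0,z))|\leq\nu^1_s((0,z))+\nu^2_s((0,z))$ together with two applications of Lemma \ref{Lemma1} gives
\[ \mathbb{E}[(w_s((0,z)))^2]\leq 4K_T z^{3+\beta},\qquad z<\tilde\varepsilon_0, \]
so the error terms $C_T^1\delta^{\beta/2}$ and $C_2\tilde\varepsilon_0\delta^{-2}e^{-\tilde\varepsilon_0^2/2\delta}$ of Lemmas \ref{Lemma2} and \ref{Lemma4} are unchanged up to a multiplicative constant.

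Combining these modifications yields the analogue of (\ref{27}),
\[ \mathbb{E}\|Z_\delta(t)\|_0^2 \leq \|Z_\delta(0)\|_0^2+|\mu|\int_0^t\mathbb{E}\|Z_\delta(s)\|_0^2\,ds+\tilde f(\delta,T), \]
with $\tilde f(\delta,T)\to 0$ as $\delta\downarrow 0$ and, crucially, $\|Z_\delta(0)\|_0=\|T_\delta w_0\|_0=0$. Gronwall's inequality then gives $\mathbb{E}\|Z_\delta(t)\|_0^2\leq \tilde f(\delta,T)e^{|\mu|t}\to 0$ as $\delta\downarrow 0$. Expanding in a complete orthonormal system $\{\phi_j\}\subset\bar C$ of $H^0$ and applying Fatou exactly as in the proof of Corollary \ref{Coro1} yields $\mathbb{E}\sum_j\langle\phi_j,w_t\rangle^2=0$, so $w_t=0$ in $H^0$ almost surely; since $\nu^1_t$ and $\nu^2_t$ are absolutely continuous with $H^0$-densities, this forces $\nu^1_t=\nu^2_t$ as measures for every $t$.

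The main obstacle will be the bookkeeping: verifying that each integration-by-parts step and each boundary-value cancellation in Lemmas \ref{Lemma2} and \ref{Lemma4} survives the substitution of the signed measure $w$ for the positive measure $\nu$. No genuinely new difficulty should arise at the absorbing boundary, since the kernel identity $G_\delta(0,y)=0$ ensures $T_\delta w_t(0)=0$ just as in the positive case, and the integrands $T_\delta(\nu_s)\partial_x T_\delta^r(\nu_s)$ are bilinear in the measure argument. The quantitative ingredient needed to beat the noise contribution is precisely the small-ball control on $w$, and the factor $4$ introduced above is harmless.
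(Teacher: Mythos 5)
Your proposal is correct and follows essentially the same route as the paper: subtract the two solutions to obtain a signed-measure-valued solution $w$ of (\ref{martingaleproblem}) with $w_0 = 0$, observe that the Lemma~\ref{Lemma2} and Lemma~\ref{Lemma4} estimates are bilinear and hence transfer modulo the small-ball bound $|w_s((0,z))| \le \nu^1_s((0,z)) + \nu^2_s((0,z))$, apply the resulting analogue of (\ref{27}), and conclude with Gronwall. The one small stylistic difference is the order of operations at the end: you apply Gronwall at the $\delta$-level first (using that Young's inequality in Lemma~\ref{Lemma2} delivers $\|T_\delta(w_s)\|_0^2 = \|Z_\delta(s)\|_0^2$ in the integrand, not $\|T_\delta(|w_s|)\|_0^2$, so the integrand matches the left-hand side) and then let $\delta \to 0$, whereas the paper phrases it as passing $\delta \to 0$ first and applying Gronwall to $\mathbb{E}\|w_t\|_0^2$ directly; both orderings are legitimate, and yours is arguably cleaner since it avoids having to justify passing the $\delta$-limit inside the time integral.
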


\begin{proof}
Let $\nu^{1}_{t}$ and $\nu^{2}_{t}$ be two measure-valued solutions with the same initial value $\nu_{0}$, and both of them satisfy 
the boundary condition stated in Lemma \ref{Lemma1}. By Corollary \ref{Coro1}, $\nu^{1}_{t}, \nu^{2}_{t}\in H^{0}\; \text{ a.s.}$. 
Let $\nu_{t}=\nu^{1}_{t}-\nu^{2}_{t}$. Then $\nu_{t}\in H^{0}$ and also $\nu_{t}$ is a signed measure-valued solution to the 
equation (\ref{martingaleproblem}). It is straightforward to extend all the estimates we have obtained to the case of the
difference of two solutions as $|\nu_t| \leq \nu^1_t+\nu^2_t$ and the equations are linear.

Therefore by the appropriate extension of Theorem \ref{Th1} we have for $\delta<\tilde{\varepsilon}_0^2/2$
\[ \mathbb{E}||T_{\delta}\nu_{t}||_{0}^{2}\leq |\mu|\int_{0}^{t}\mathbb{E} ||T_{\delta}(|\nu_{s}|)||_{0}^{2}ds 
+\frac{2}{1-\rho}C^{1}_{T}\delta^{\frac{\beta}{2}}T+\frac{2C_{2}T\tilde{\varepsilon}_{0}} {(1-\rho)\delta^2}
 e^{-\tilde{\varepsilon}_{0}^2/2\delta}.\]
As before, taking $\delta\rightarrow 0$, we have
\[\mathbb{E}||\nu_{t}||_{0}^{2}\leq |\mu|\int_{0}^{t}\mathbb{E} |||\nu_{s}|||_{0}^{2}ds=|\mu|\int_{0}^{t}\mathbb{E} 
||\nu_{s}||_{0}^{2}ds,\]
and by Gronwall's inequality, we have $\nu_{t}\equiv 0$.
\end{proof}

This completes the proof of the uniqueness of the $L^2$-valued solution to the equation (\ref{martingaleproblem}).

\subsection{The limit SPDE}

Substituting the Lebesgue representation for the empirical measure into (\ref{martingaleproblem}),
integrating by parts 
and writing $\mathcal{A}^{\dag}$ for the adjoint operator of $\mathcal A$, we get
\begin{eqnarray*}
\int \phi(x)v(t,x)\,dx &=& \int \phi(x)v(0,x)\,dx + \int_0^t \int
\mathcal{A}\phi(x)v(s,x)\,dx\,ds
\\ &&\qquad + \int_0^t \int \srho\phi^{\prime}(x)v(s,x)\,dx\,dM_s \\
&=& \int \phi(x)v(0,x)\,dx + \int_0^t \int \phi(x)\mathcal{A}^{\dag}v(s,x)\,dx\,ds \\
& &\qquad - \int_0^t \int\phi(x) \frac{\partial}{\partial
x}\left(\srho v(s,x)\right)\,dx\,dM_s \\
&=& \int\phi(x)\left( v(0,x) + \int_0^t \mathcal{A}^{\dag}v(s,x)\,ds -
\int_0^t\frac{\partial}{\partial
x}\left(\srho v(s,x) \right)\,dM_s\right)\,dx.
\end{eqnarray*}
As this holds $\forall \phi \in \bar{C}$ we have shown that we have a weak solution to the SPDE
given by
\begin{equation}
\label{integralspde} v(t,x) =  v(0,x) + \int_0^t \mathcal{A}^{\dag}v(s,x)\,ds -
\int_0^t \frac{\partial}{\partial x}\left(\srho v(s,x)\right)\,dM_s,
\end{equation}
with $v(t,0)=0$ for all $t\in [0,T]$.
Alternatively, we can write this in differential form 
\begin{equation}
\label{spde} dv(t,x) = -\mu\frac{\partial v}{\partial x}(t,x) dt + \frac{1}{2}
\frac{\partial^2  v }{\partial x^2}(t,x) dt - \srho \frac{\partial v}{\partial
x}(t,x)\textrm{d}M_t,
\end{equation}
with $v(t,0)=0$ for all $t\in [0,T]$ and $v(0,x)=v_0(x)$.
This is a stochastic PDE that describes the evolution of the distance to default of 
an infinite portfolio of assets whose dynamics are given by (\ref{assetsde}). However
the derivatives are only defined in the weak sense.

We can now use the limiting empirical measure $\nu_t$ to approximate
the loss distribution for a portfolio of fixed size $N$ whose assets also follow
(\ref{assetsde}). We do this by matching the initial conditions, thus setting
\begin{equation}
v(0,x) = \frac{1}{N}\sum_{i=1}^N \delta_{X_0^i}(x),\label{initialcondition}
\end{equation}
where the $X_0^i>0$, $i=1,\ldots,N$ are the initial values for the distance to default of the 
assets in our fixed portfolio of size $N$.

\subsection{Solving the SPDE}

The SPDE (\ref{simplifiedspde}) without the boundary condition is easily solved as
\begin{equation}
 v(t,x) = u(t,x-\sqrt{\rho} M_t), \;\;\forall x\in \br, t>0,  \label{eq:fullspde}
\end{equation}
where $u(t,x)$ is the solution to the deterministic PDE
\begin{equation}
\label{eq:pde}
u_t = \frac12(1-\rho)u_{xx} - \frac{1}{\sigma}(r-\frac12\sigma^2) u_x, 
\end{equation}
with $u(0,x) = v_0(x)$.

The SPDE with the boundary condition has been treated in \cite{Kry94}. This allows us to complete the proof of
our existence and uniqueness theorem. 

\begin{thm}\label{thm:krylov}
Let $v_0(x)\in H^1((0,\infty))$.
The SPDE (\ref{simplifiedspde}) has a unique solution $u\in L^2(\Omega \times (0,T), {\cG},
H^1((0,\infty)))$ and is such that $xu_{xx} \in L^2(\Omega \times (0,T), {\cG}, L^2((0,\infty)))$.
\end{thm}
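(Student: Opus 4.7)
The plan is to verify that the SPDE (\ref{simplifiedspde}) falls within the framework of Krylov's theory of linear SPDEs on the half-line with a Dirichlet boundary condition, as developed in \cite{Kry94}. First I would recast the equation in the standard form
\[
dv = \left( a v_{xx} + b v_x \right)\,dt + c\, v_x \,dM_t
\]
with constants $a = 1/2$, $b = -\mu$, $c = -\sqrt{\rho}$, and check the crucial stochastic parabolicity condition $2a - c^2 > 0$. In our case this reads $1 - \rho > 0$, which holds since $\rho \in [0,1)$. The coefficients being constant makes all the smoothness/boundedness hypotheses in Krylov's paper trivial.

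Next I would invoke Krylov's existence and uniqueness result in the $L^2$-based weighted Sobolev space $\mathfrak{H}^{\gamma,\theta}_2$ adapted to the half-line with Dirichlet data. For the present problem the natural choice corresponds to $H^1((0,\infty))$ regularity in space together with the weighted second-order integrability $x v_{xx} \in L^2((0,\infty))$; the weight $x$ compensates for the failure of full $H^2$ regularity near the boundary caused by the Dirichlet condition. The initial datum $v_0 \in H^1((0,\infty))$ lies in the correct trace class (under the convention that the boundary condition $v(t,0)=0$ is imposed on the boundary of the half-line, consistent with $v_0(0)=0$ in the sense of traces), and the required predictability is exactly the $\mathcal{G}$-measurability in the theorem's function space.

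Combining Krylov's theorem with the weak formulation (\ref{martingaleproblem}) and its Lebesgue-density version (\ref{spde}) established in the previous subsections then identifies the unique Krylov solution with the density $v(t,x)$ of $\nu_t$. Thus $v$ is the unique element of $L^2(\Omega \times (0,T), \mathcal{G}, H^1((0,\infty)))$ satisfying (\ref{simplifiedspde}), with the additional regularity $x v_{xx} \in L^2(\Omega \times (0,T), \mathcal{G}, L^2((0,\infty)))$.

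The main obstacle is administrative rather than analytic: one must translate carefully between Krylov's weighted Sobolev space formalism on the half-line and the function-space statement given in the theorem, in particular pinning down the correct weight index $\theta$ so that $\mathfrak{H}^{1,\theta}_2$ coincides with $H^1((0,\infty))$ at the Dirichlet boundary and the $x$-weighted $L^2$ space captures the second-derivative integrability. Once this correspondence is explicit, parabolicity plus the constant-coefficient structure reduces the proof to a direct citation.
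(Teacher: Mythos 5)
Your proposal is correct and takes essentially the same route as the paper: both reduce the statement to a direct citation of Krylov's Theorem~2.1 in \cite{Kry94} after verifying that the constant coefficients satisfy the measurability, Lipschitz, and (super)parabolicity hypotheses and that $v_0\in H^1((0,\infty))$. You make the stochastic parabolicity check $2a - c^2 = 1-\rho > 0$ explicit, which the paper leaves implicit in its reference to Hypothesis~2.3; otherwise the arguments match, including the role of the distance-to-boundary weight (the paper's $\psi(x)=\min(x,1)$) in producing the $x u_{xx}\in L^2$ regularity statement.
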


\begin{proof}
The result follows from Theorem~2.1 of \cite{Kry94}. Thus all we have to do is ensure 
that the conditions of that Theorem hold in our setting. The boundary of the domain $(0,\infty)$ is the
single point 0 and hence we can take the function $\psi(x)=\min(x,1)$ in the Theorem. The single point
boundary trivially satisfies the Hypothesis~2.1 of \cite{Kry94}. The coefficients of our SPDE 
are constants and hence satisfy the measurability requirement of Hypothesis~2.2 and the Lipschitz condition of 
Hypothesis~2.4. Hypothesis~2.3 also follows as the coefficients
are constants and the initial condition is in $H^1$.
\end{proof}

\begin{proof} \emph{(of Theorem~\ref{thm:main}}):
Our previous work has shown that the empirical measure satisfies (\ref{martingaleproblem}) and has a unique 
density in $L^2((0,\infty))$. By Theorem~\ref{thm:krylov} the SPDE with boundary condition has a unique solution
in $H^1((0,\infty))$. As this solution satisfies (\ref{martingaleproblem}), by the uniqueness of solutions, it must be
the density for our empirical measure. Thus our density satisfies the SPDE.
\end{proof}

We note that we can derive a formal expression for $L_t$ in terms of the density after integrating by parts.
\begin{align*}
L_{t}=&1-\int_{0}^{+\infty}v(t,x)dx\allowdisplaybreaks\\
=&1-\int_{0}^{+\infty}\left(v(0,x)-\int_{0}^{t}\frac{\partial}{\partial x}\mu 
v(s,x)ds+\int_{0}^{t}\frac{1}{2}v_{xx}(s,x)ds\right.\allowdisplaybreaks\\
&\left.-\int_{0}^{t}\frac{\partial}{\partial x}\sqrt{\rho}v(s,x)dM_{s}\right)dx\allowdisplaybreaks\\
=&1-\int_{0}^{+\infty}v(0,x)dx+\mu \int_{0}^{t} v(s,x)|_{x=0}^{x=\infty}ds-\int_{0}^{t}\frac{1}{2} 
v_{x}(s,x)|_{x=0+}^{x=\infty}ds\allowdisplaybreaks\\
&+\sqrt{\rho}\int_{0}^{t} v(s,x)|_{x=0}^{x=\infty}dM_{s}.
\end{align*}
Since $x^{i}>0, \forall i$ and $X_{t}^{i}$  is a continuous process, we can conclude that $T_{0}^{i}>0, \forall i$. Thus
\[ L_{0}=\lim_{N\rightarrow\infty}\frac{1}{N}\sum_{i=1}^{N}1_{\{0\geq T_{0}^{i}\}}=0,\]
therefore
\[ \bar{\nu}(\mathbb{R^{+}}\cup\{0\})=1=\int_{0}^{+\infty}v(0,x)dx.\]
Moreover we have $v(s,x)\rightarrow 0, v_{x}(s,x)\rightarrow 0$, as $x\rightarrow\infty$ and $v(s,0)=0$, $\forall s$. 
Therefore, provided that $v_x(s,0)$, the right derivative of $v(s,x)$ with respect to $x$ at the point $x=0$, exists we
would have
\[ L_{t}=\frac{1}{2}\int_{0}^{t}v_{x}(s,0)ds.\]

One issue that has not been addressed is the existence of $C^2$ solutions to this equation. We note that the work of 
Lototsky \cite{Lot99} shows that there is a classical $C^2$ solution to this SPDE over a bounded domain $(0,K)$, with Dirichlet
boundary conditions at 0 and $K$, provided that the initial condition is smooth enough.

\subsection{The portfolio loss} 

We would like to price portfolio credit derivatives whose values depend on the
cumulative defaults occurring within a reference basket of risky assets. The key to
pricing these instruments is determining the joint loss distribution. We have just
derived an equation that describes the evolution of the empirical measure of the limiting large portfolio
of assets. At any future value in time, we can determine the loss in the portfolio by
calculating the total mass of the empirical measure of assets that have not defaulted. 
Thus the portfolio loss $L^N_t$ can be approximated by
\begin{displaymath}
L^N_t = N L_t,\label{portfolioloss}
\end{displaymath}
where $N$ is the number of assets in the portfolio. We note that given the initial
condition (\ref{initialcondition}) we have $L^N_0=0$.
Also, due to the way in which defaults are incorporated into the model, we have
\begin{displaymath}\begin{array}{cc}
0 \leq\  L_t \leq 1, & \textrm{for } t\geq 0\\
P(L_s \geq K) \leq P(L_t \geq K ), & \textrm{for } s\leq t,
\end{array}\end{displaymath}
which ensures that there is no arbitrage in the loss distribution. Both of these
properties are expected for a model of cumulative loss in a portfolio.

\subsection{A connection with filtering}

We note that the SPDE can be viewed as a PDE with a Brownian drift. This is easily seen through 
an interpretation as the Zakai equation for a filtering problem. Let
$(\tO,\tcf,\tbp)$ be a probability space. Under $\tbp$ we
define the signal process $X$ to be a stochastic process satisfying
\begin{eqnarray*}
 dX &=& \mu dt - \srho dM + \sqrt{1-\rho} dW, \;\; t\leq \tau_0 \\
X_t &=& 0, \;\; t>\tau_0
\end{eqnarray*}
where $\tau_0 = \inf\{t: X_t = 0\}$, where $\mu,\rho$ are
constants and $M$ and $W$ are independent Brownian motions and $X_0=x$. 
The observation process $Y$ is taken to be just the market noise,
\[ dY_t = dM_t, \]
then the Zakai equation (see for example \cite{BainCris}) for the conditional distribution of the signal
given the observations is exactly our SPDE.


Thus, by standard filtering theory,
 if we want to compute a functional of the signal we need to
calculate
\[ m_{\psi}(t) = \tbe(\psi(X_t)|\cf^M_t) = \int \psi(y) u(t,x) dx. \]
This means that the probability distribution for the position of a
company given the market noise has a density $u(t,x)$ satisfying
\[ du(t,x) = (-\mu u_x(t,x)+\frac12 u_{xx}(t,x))dt -\srho
u_x(t,x) dM_t, \]
with $u(0,x)=u_0(x)$, that is the initial guess at $X_0$ is the
density $u_0(x)$ and $u(t,0)=0$. 
Thus for the loss function we are interested in computing the
proportion of companies that have defaulted by time $t$ and this can
be found by computing $m_{\psi}(t)$ for $\psi(t) = I_{\{\tau_0<t\}}$. 
If we start from a given fixed point so that $u_0(x)$ is a
delta function at $x$. Then
\[ L_t = m_{\psi}(t) = \tbp^x(\inf_{s\leq t} X_s<0|\cf^M_t). \]

Now the process $X$ can be written as a Brownian motion with drift
\[ X_t = x+\mu t - \srho M_t + \sqrt{1-\rho}W_t, \]
and if we are given $M$, this can be expressed as
\[ X_t = \sqrt{1-\rho} \left(\frac{x+f(t)}{\sqrt{1-\rho}} + W_t\right), \]
where $f(t) = \mu t - \srho M_t$ is a deterministic time dependent drift function 
which is a fixed random path.

Thus to compute the random loss function we set $x'=x/\sqrt{1-\rho}, g(t) = f(t)/\sqrt{1-\rho}$ and write
\begin{eqnarray*}
\tbp^x(\inf_{s\leq t} X_s<0|\cf^M_t) &=& \tbp^0(\inf_{s\leq t}
x' + g(s)+ W_s<0|\cf^M_t) \\
&=& \tbp(\inf_{s\leq t} g(s)+W_s<-x'|\cf^M_t).
\end{eqnarray*}
In the case where we have a general initial distribution $u_0(x)$, the loss function is then
\[ L_t = \int_0^{\infty} u_0(x) \tbp(\inf_{s\leq t} g_s+W_s<-x/\sqrt{1-\rho}|\cf^M_t) dx. \]

Thus we can try to compute this by solving the hitting time problem
for Brownian motion with time dependent drift for a fixed realization
of the market noise. It is straightforward to
use this to simulate a realization of the loss function. 



To derive this SPDE we made some simplifying assumptions. The first of these arose when
specifying the asset processes in (\ref{assetsde}). We had to set the drift and
volatility of all the assets to some common value. For the drift this is not a problem,
because under the risk neutral measure it will be transformed to a value that excludes
arbitrage. The fact that there is only one yield curve means that this value will be the
same for all assets. If our reference portfolio contained entities denominated in more
than one currency this would not be the case and some approximation would have to be
made.

This argument cannot be used for the volatility as it is not affected by a change of
measure. Therefore, it would seem that giving the assets one common value of volatility
is a very restrictive assumption. However, for any given value of the volatility we still have the
freedom to choose the default barrier specific to any one asset. Via the
distance-to-default transformation this freedom manifests itself in our particular choice
of starting value for each process. The effect of changing the barrier and changing the
volatility is very similar. To see this note that default risk is measured by how many
standard deviations away from the barrier our process is. To increase the default risk we
need to reduce this distance which can be done by either increasing the standard
deviation or moving the barrier closer. Although these are clearly not equivalent
transformations they have a very similar effect and so the single volatility assumption
is not as restrictive as it initially appears.

Having a single volatility number also eases calibration as we do not have to estimate
the volatilities of all of the entities within our portfolio. Instead, we will have to
replace it by some `average' market volatility. Not only will this help day-to-day
calibration stability but it means that credit derivative prices will be a function of
one volatility parameter only. This is usually a desirable property from a practitioner's
point of view as it allows one to take a view on that parameter; this cannot be done if
there were a single parameter for each entity within our portfolio.

The major simplification that allowed us to derive our SPDE came when we moved to an
infinite dimensional limit. In this limit, the idiosyncratic noise of the individual
assets is averaged out. In fact, we could have any number of idiosyncratic
components, provided they are independent and uncorrelated, and they would average out to
zero. It is only the correlated components between the assets that remain i.e. the market
risk. Note that this means that if the limiting portfolio was fully diversified, that is had 
no correlation, there would be no noise in the limit and the limit portfolio would 
evolve deterministically!



\section{Numerical solution}\label{section:numerics}

We outline in the following a numerical method for approximating the solution to the
SPDE, which we use in the market pricing examples in the next section.
We start with the SPDE (\ref{martingaleproblem}) in weak form, repeated here for convenience,
\[
\left\langle\phi,\nu_t\right\rangle = \left\langle\phi,\nu_0\right\rangle +
\int_0^t\left\langle \mathcal{A}\phi,\nu_s \right\rangle ds +
\int_0^t\left\langle \srho\phi^{\prime},\nu_s\right\rangle
dM_s
\]
for almost all $t$ and all smooth test functions $\phi \in \bar{C}$.
It follows from Theorem \ref{thm:main} that $\nu_t$ has as one component the density $v$ (describing the non-absorbed element)
satisfying
\begin{equation}
\left(\phi,v(t,\cdot)\right) = \left(\phi,v(0,\cdot)\right) +
\int_0^t\left( \mathcal{A}\phi,v(s,\cdot) \right) ds +
\srho \int_0^t\left(\phi^{\prime},v(s,\cdot) \right)
dM_s,
\end{equation}
where here we write $(\cdot,\cdot)$ for the $L^2$ inner product.
Integrating by parts, noting from Theorem \ref{thm:krylov} that $v(t,\cdot) \in H^1_0$
with dense subspace $\bar{C}$,
\begin{eqnarray*}
\left(\phi,v(t,\cdot)\right) 
 + \int_0^t a(\phi,v(s,\cdot)) ds
= \left(\phi,v(0,\cdot)\right) +
\srho \int_0^t\left(\phi^{\prime},v(s,\cdot) \right) dM_s
\end{eqnarray*}
for all $\phi \in H_0^1$, where
\begin{eqnarray*}
a(\phi,v) = \frac{1}{2} (\phi',v') - \sqrt{\rho} (\phi',v).
\end{eqnarray*}

\subsection{Finite element approximation}

Let $V_h\subset H_0^1([x_0,x_N])$ be the space of piecewise linear functions 
on a grid $x_1<\ldots<x_N$, which are zero at $x_1=0$ and $x_N$ a sufficiently large value (see \ref{subsec:accuracy}).
Denote further by $\{\phi_n: 1\le n\le N\}$ the standard finite element basis
(see e.g.\ \cite{quaval97} for standard finite element theory and approximations to PDEs).
Restricting both the solution 
and test functions to $V_h$,
\begin{eqnarray*}
\left(\phi_n,v_h(t,\cdot)\right) 
 + \int_0^t a(\phi_n,v_h(s,\cdot)) ds
= \left(\phi_n,v_h(0,\cdot)\right) +
\srho \int_0^t\left(\phi_n^{\prime},v_h(s,\cdot) \right) dM_s
\end{eqnarray*}
(for all $1\le n\le N$) defines a semi-discrete finite element
approximation.

Using the stochastic $\theta$-scheme (see \cite{hig00}) for the time 
discretisation of the resulting SDE system,
\begin{eqnarray}
\label{stochtheta}
\left(\phi_n,v_h^{m+1}\right) 
 + \theta \Delta t \, a(\phi_n,v_h^{m+1})
= \left(\phi_n,v_h^{m}\right)
- (1-\theta) \Delta t \, a(\phi_n,v_h^{m}) +
\srho \left(\phi_n^{\prime},v_h^m \right) \sqrt{\Delta t} \Phi_m,
\end{eqnarray}
where $\Phi_m \! \sim \! N(0,1)$, $\Delta t = t_{m+1}-t_m$ is assumed constant and
$v_h^m = \sum_{n=1}^N v_n^m \phi_n$. Thus one gets a linear system
\begin{equation}
\label{linsys}
(M + \theta \Delta t A) v^{m+1} = (M-(1-\theta)\Delta t A) v^m +
\srho \sqrt{\Delta t} \Phi_m D v^m,
\end{equation}
where $v^m=(v_1^m,\ldots,v_N^m)$ and the standard finite element 
matrices are given by
\begin{eqnarray*}
M_{ij} &=& (\phi_i, \phi_j), \quad 1\le i,j\le N,\\
A_{ij} &=& a(\phi_i, \phi_j), \quad 1\le i,j\le N, \\
D_{ij} &=& (\phi_i', \phi_j), \quad 1\le i,j\le N.
\end{eqnarray*}
This gives a pathwise (in $M$, the market factor)
approximation to the SPDE solution via timestepping from an initial
density $v_h(0,\cdot)$, which is found by $L^2$ projection of
$\bar{\nu}_{N_f,t}$ from (\ref{empiricalmeasure}) with $N_f$ firms onto
the finite element space (see e.g.\ \cite{pooley1}, \cite{rannacher1}).

\subsection{Simulating tranche spreads}

For a given (numerical) realisation of the market factor,
we can approximate the loss functional $L_{T_k}$ at time $T_k$ by
\begin{equation}
\label{loss-deltax}
L_{T_k}^{h} = 1- \int_{0}^{x_N} v_h(T_k,x) {\, \rm d}x
\approx 1- h \sum_{n=1}^{N-1} v_n^m
\end{equation}
where $m = T_k/{\Delta t}$.
If we explicitly include the dependence on the Monte Carlo samples
${\Phi} = (\Phi_i)_{1\le i\le I}$ in
$L_{T_k}^{h}(\Phi)$, where $\Phi_i$ as in (\ref{stochtheta})
are drawn independently from a standard normal distribution,
then for $N_{sims}$ simulations with samples
${\Phi^l} = (\Phi_i^l)_{1\le i\le I}$,
$1\le l\le N_{sims}$, we simulate the outstanding tranche notional
(\ref{outstandingnotional}) as
\begin{eqnarray*}
\mathbb{E}^{\mathbb{Q}}[Z_{T_{k}}] &\approx&
\mathbb{E}^{\mathbb{Q}}[\max(d-L_{T_k}^{h},0) - \max(a-L_{T_k}^{h},0)] \\ &\approx&
\frac{1}{N_{sims}} 
\sum_{l=1}^{N_{sims}} \left( \max(d-L_{T_k}^{h}(\Phi^l),0) - \max(a-L_{T_k}^{h}(\Phi^l),0) \right).
\end{eqnarray*}
This gives simulated tranche spreads via (\ref{cdofeeleg}),
(\ref{cdoprotectionleg}) and (\ref{cdospread}).

\subsection{Accuracy and further approximations}
\label{subsec:accuracy}

We now discuss the approximations made previously and further simplifications
made in the numerical implementation of the examples in the next section.

It is necessary for the finite element discretisation to approximate the
semi-infinite boundary value problem for the SPDE by one on a finite domain.
It is expected that if the upper boundary is sufficiently large, dependent on the
initial distances-to-default and model parameters, the probability of crossing this
boundary can be made negligible and zero boundary conditions are appropriate.
We have checked this to be the case for the following numerical simulations
but do not have a theoretical justification at this point.

The derivation of the SPDE and finite element solution assume $H^1$ initial data, however in practice
we want to use a sum of atomic measures (\ref{initialcondition}) corresponding to the distance-to-default of individual firms,
as backed out from CDS spreads. We deal with this by projecting these data onto the finite element
basis (see e.g.\ \cite{pooley1}, \cite{rannacher1}).

The majority of the literature on stochastic finite element methods deals with stochastic
diffusion coefficients (see e.g.\ \cite{debbabode01} and subsequent work)
and we are not aware of results which cover our setting
with stochastic drift.
From standard finite element approximation results for PDEs (see e.g.\ \cite{quaval97}),
one would expect (pathwise) convergence order two in $h$ for solutions in $H^2$, but
Theorem \ref{thm:krylov} suggests weaker regularity at the absorbing boundary,
which we also observe in the numerical solutions. This does not show a measurable
impact on the numerical accuracy in practice.
The weak approximation order of the Euler scheme for SDEs,
and that for the chosen fully implicit scheme for PDEs
($\theta=1$ in (\ref{stochtheta})),
is one (in $\Delta t$).
In this case, the scheme is stable in the mean-square sense of \cite{hig00}.
This is confirmed by numerical experiments, but a rigorous numerical analysis is beyond
the scope of this paper.

A common approximation to the finite element system is to `lump' $M$ in
(\ref{linsys}) in diagonal form,
interpretable as application of a quadrature rule, and ultimately results in $M$ being replaced
by a multiple of the identity matrix. With this approximation, the finite element scheme
becomes identical to a central finite difference approximation.

A further simplification is suggested by the solution (\ref{eq:fullspde}) of the SPDE without absorbing boundary
condition, which decouples the solution into the PDE solution (\ref{eq:pde})
on a doubly-infinite domain, and a
random (normal) offset. This is easy to implement if we apply boundary conditions only at
a discrete set of times.
In analogy to discretely sampled barrier options,
this corresponds to a situation where we observe default not continuously, but only at discrete dates.
The numerical results in the next section were obtained in this way with default monitoring
at payment dates
for computational convenience. This introduces a small shift in the
calibrated parameters compared to the SPDE with continuously absorbing barrier
but the reported results on tranche spreads are almost identical.

The Monte Carlo estimates of outstanding tranche notionals and
subsequently tranche spreads converge per $N_{sims}^{-1/2}$. The variance relative
to the spread is larger for senior tranches due to the rarity of losses in these
tranches, as illustrated by Figure \ref{fig:mcconv}.
\begin{figure}[ht]
\psfrag{log4Nsims}[r][r][0.8]{$\log_4 N_{sims}$}
\psfrag{junior}[d][t][0.8]{$\mathbb{E}^{\mathbb{Q}}[Y_{T}]$, $a=0, d=0.03$}
\includegraphics[width= 0.32 \columnwidth]{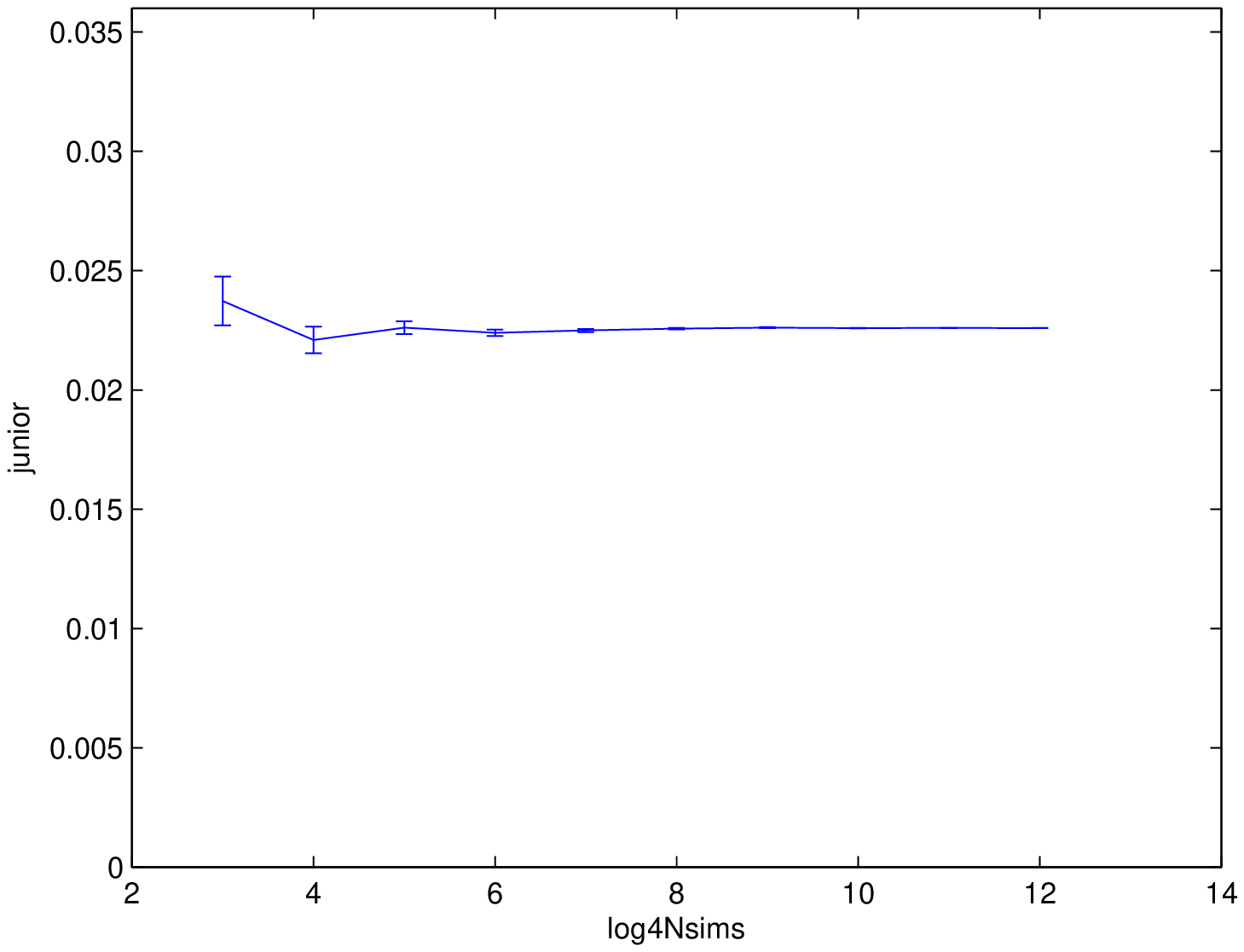}
\hfill
\psfrag{mezz2}[d][t][0.8]{$\mathbb{E}^{\mathbb{Q}}[Y_{T}]$, $a=0.06, d=0.09$}
\includegraphics[width= 0.32 \columnwidth]{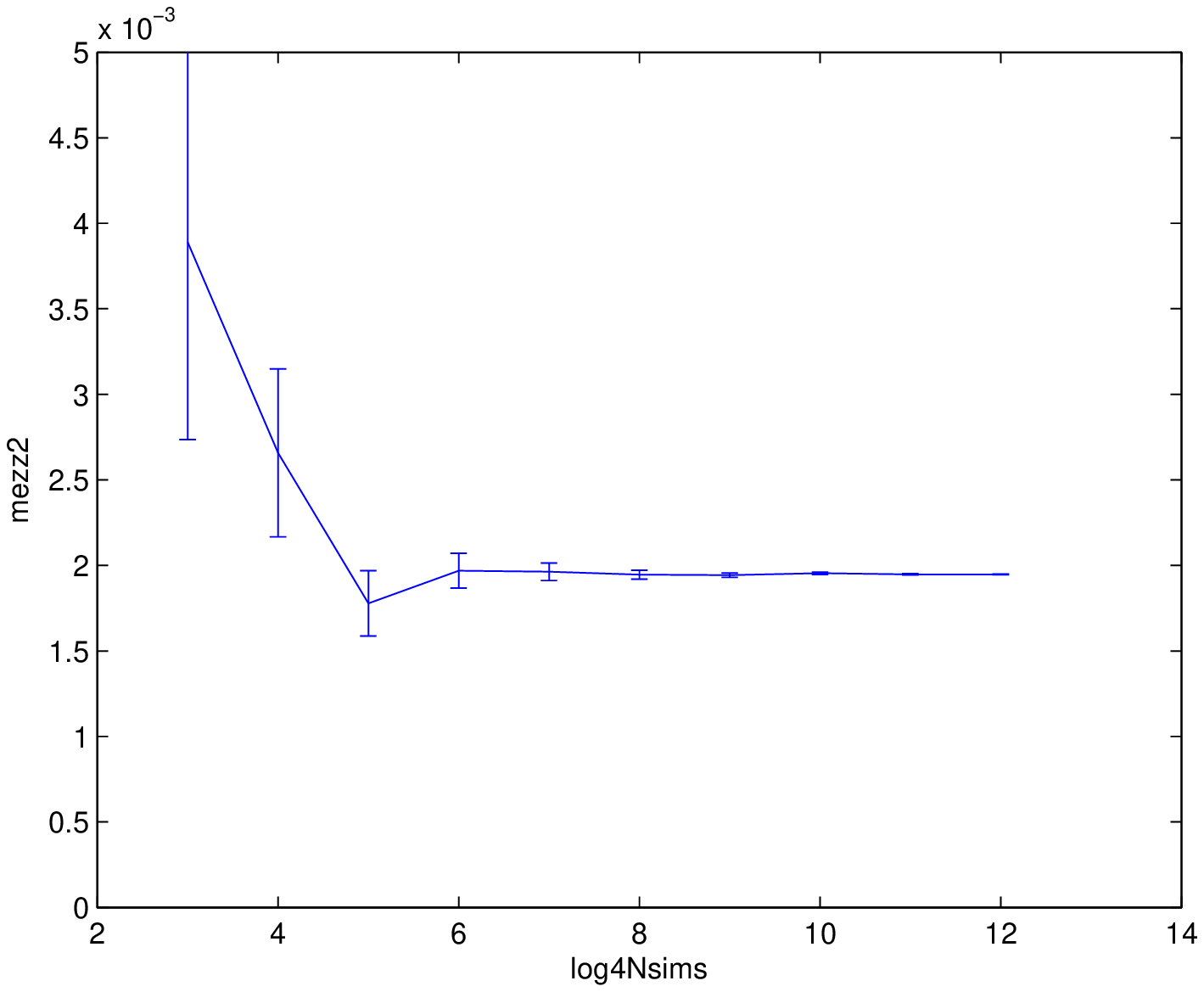}
\hfill
\psfrag{supersenior}[d][t][0.8]{$\mathbb{E}^{\mathbb{Q}}[Y_{T}]$, $a=0.12, d=0.22$}
\includegraphics[width= 0.32 \columnwidth]{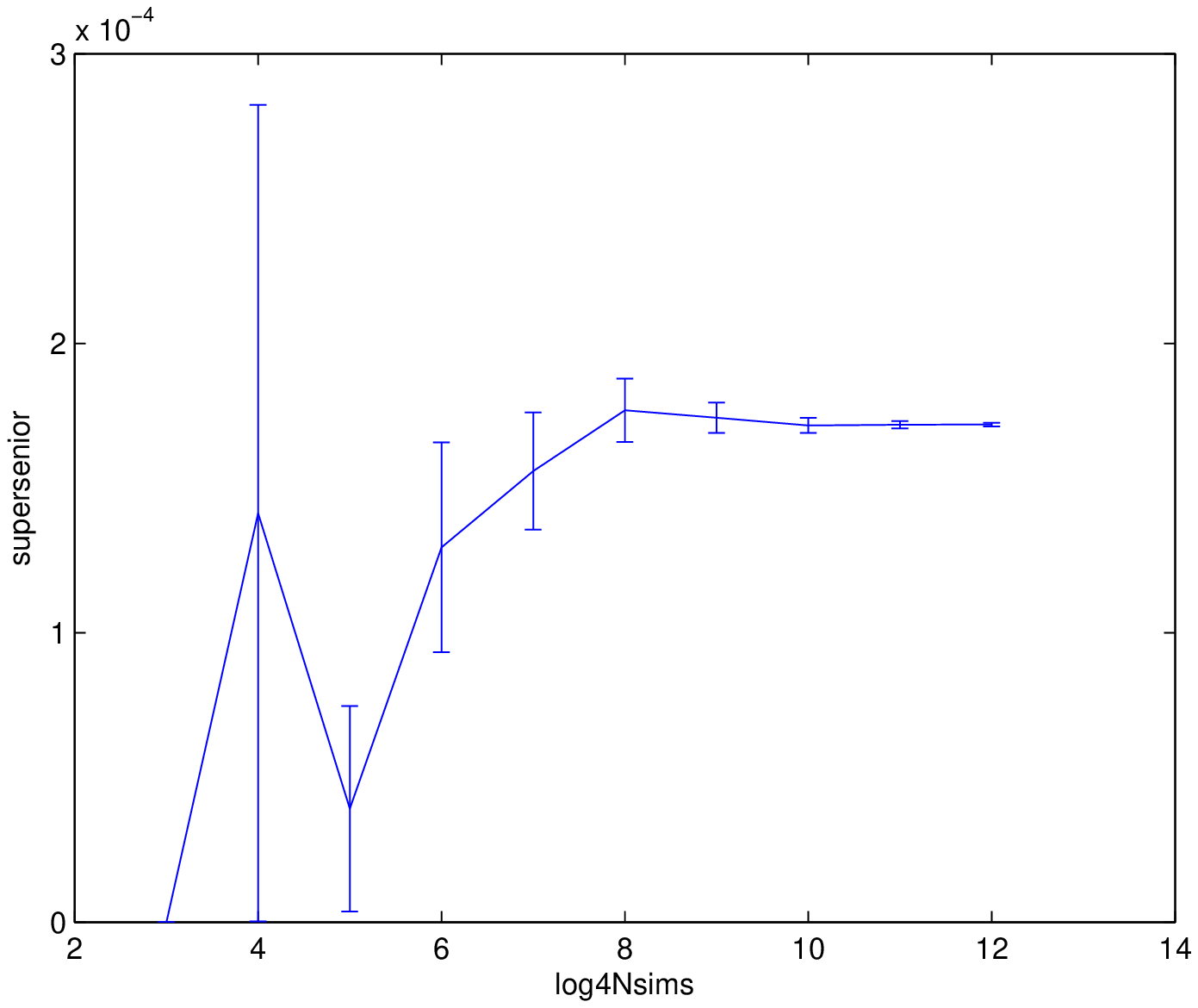}
\caption{Monte Carlo estimators with standard error bars
for expected losses (\ref{trancheloss}) in tranches
$[0,3\%]$, $[6\%,9\%]$, $[12\%,22\%]$, for
$N_{sims} = 16\cdot 4^{k-1}$, $k=1,...,10$, 
and a typical set of parameters, maturity $T=5$.}
\label{fig:mcconv}
\end{figure}
Importance sampling could cure this problem but was not found necessary for the purposes of this
study.

Numerical parameters were in the following adjusted such that the (heuristically) estimated
approximation error
was sufficiently small compared to the effects observed by varying model parameters.

\section{Market pricing examples}\label{section:pricing}

\subsection{Calibration to index tranches}

In this section, we analyse our model's ability to price regular index tranches for all maturities and investigate the implied 
correlation skew. We consider performance pre and post the onset of the credit crunch, illustrating the model's inherent 
ability to cope with a variety of credit environments. 

Throughout the analysis, we infer the initial condition from market spreads for the underlying index constituents, rather 
than allowing it to be a free parameter to be fixed by calibration to index tranches.
This is to be consistent with CDS spreads for the individual constituents.
We do this by backing out the  distance-to-default for each constituent from its five-year CDS spread and then 
aggregating these. 
Note that as we model the distance-to-default as in (\ref{distance-to-default}), different volatilities of the underlying firms can be taken into account by
rescaling.
As a consequence, the initial condition is driven by both the level of constituent spreads and their dispersion. 


We study the ability of our model to price index tranches on two dates: February 22, 2007 and December 5, 2008. These dates are
chosen specifically to investigate the flexibility of the model to cope with different market and spread environments.
February 22, 2007 was pre-crisis when spreads were tight and curves upward sloping; December 5, 2008 was at the height of market 
volatility, when spreads were at their widest and curves frequently inverted.

We set $R=40\%$, the level typically assumed by the market for investment grade names, and for each date, calibrate the model 
to $5$, $7$ and $10$-year index spreads using the volatility, $\sigma$. $r$ is the risk-free rate obtained from the Euro swap curve. 
(N.B. the correlation parameter, $\rho$, does not come into this calibration since index spreads depend only on the expected losses, which are identical to
the sum of default probabilities and hence correlation-independent.) 

Table \ref{22Feb07itraxxspreads} shows the traded and model index spreads for Feb 22, 2007. Since we derive the initial condition 
from constituent spreads, we only have one free parameter, the volatility $\sigma$, for calibrating all three index spreads. 
Increasing $\sigma$ to increase model spreads also causes the initial distance-to-default for each constituent to increase (since 
CDS spreads are fixed), so index and tranche spreads are less sensitive to changes in volatility than they would be if the initial
condition was specified independently.

\begin{table}[ht]
\begin{center}
\begin{tabular}{|c|c|c|c|}
  \hline
  Maturity Date & Fixed Coupon (bp) & Traded Spread (bp) & Model Spread (bp)\\
  \hline
  20/12/2011 & 30 & 21 & 19.6 \\
  20/12/2013 & 40 & 30 & 30.7 \\
  20/12/2016 & 50 & 41 & 41.0 \\
  \hline
\end{tabular}
\caption{The fixed coupons, traded spreads and model spreads for the iTraxx Main Series 6 index on
February 22, 2007. Parameters used for the model spreads are $r=0.042$, $\sigma=0.22$, $R=0.4$.} \label{22Feb07itraxxspreads}
\end{center}
\end{table}

Table \ref{05Dec08itraxxspreads} shows the same results for Dec 5, 2008. In this highly distressed state, we notice that spreads 
are dramatically wider and the curve is inverted with $5$-year $ > $ $7$-year $ > $ $10$-year spreads. Our simple model again does 
a good job of calibrating all three spreads. This is achieved by a smaller distance-to-default for the initial
positions in combination with a lower volatility, triggering more defaults in the near future.
The $5$-year point is a little low, which is a shortcoming of using a purely diffusive driving process: it can be hard to generate 
sufficient short-term losses.
We refer to Section \ref{conclusions} for a discussion of extensions to jump and 
stochastic volatility driven processes.

For the parameters from the calibration in Table \ref{22Feb07itraxxspreads}, Table \ref{22Feb07tranchespreads} illustrates 
the correlation sensitivity of the $5$, $7$ and $10$-year index tranches in the pre-crunch environment. We note that model 
spreads illustrate the behaviour we would anticipate:
\begin{itemize}
\item Equity tranche spreads decline with increasing correlation whilst spreads for other tranches generally increase with correlation. 
As correlation increases, there are less likely to be a few defaults, and so the equity tranche becomes less risky and its spread
decreases. The probability of a greater number of defaults increases with increasing correlation and so spreads on the more 
senior tranches increase with correlation.
\item A notable exception is the $10$-year junior mezzanine tranche ($3\%-6\%$) which behaves more like an equity tranche and has 
declining spreads with increasing correlation. This is because, for the parameters used, the expected index loss is between 
$3\%$ and $6\%$. The risk of this tranche therefore decreases, along with the spread, as correlation increases, making losses 
in this tranche less likely.
\item
The $7$-year junior mezzanine tranche ($3\%-6\%$) spreads indicate the
transition, as maturity increases, from positive to negative correlation 
sensitivity by exhibiting a humped shape.
\item For the $5$ and $7$-year junior mezzanine and $10$-year senior mezzanine tranches, spreads decline with increasing 
correlation for high values of correlation. 
\end{itemize}

\begin{table}[ht]
\begin{center}
\small
\begin{tabular}{l|llllllll}
 & 5 Year \\
 \hline
Tranche & Market & $\rho=0.1$ & $\rho=0.2$ & $\rho=0.3$ & $\rho=0.4$ & $\rho=0.5$& $\rho=0.6$ & $\rho=0.7$  \\
\hline \\
  0\%-3\% & 7.19 \% & 7.55 \% & 4.99 \% & 2.14 \% & -0.71 \% & -3.48 \% & -6.17 \% & -8.78 \% \\
  3\%-6\% & 41 & 15.6 & 55.6 & 86.4 & 106.1 & 116.2 & 119.5 & 117.4 \\
  6\%-9\% &  10.8 & 0.7 & 9.1 & 25 & 40.3 & 54.5 & 65.2 & 71.7 \\
  9\%-12\% & 5 & 0 & 2.2 & 8.2 & 18.8 & 28.6 & 37.2 & 45.4 \\
  12\%-22\% & 1.8 & 0 & 0.2 & 1.7 & 4.9 & 9.8 & 16.1 & 22.5 \\
  22\%-100\% & 0.9 & 0 & 0 & 0 & 0.1 & 0.3 & 0.7 & 1.5 \\
  \hline 
  & 7 Year \\
  \hline
Tranche & Market & $\rho=0.1$ & $\rho=0.2$ & $\rho=0.3$ & $\rho=0.4$ & $\rho=0.5$& $\rho=0.6$ & $\rho=0.7$  \\
\hline \\
   0\%-3\% & 22.1 \% & 27.45 \% & 19.97 \% & 13.79 \% & 8.31 \% &  3.27 \% & -1.47 \% & -6.04 \% \\
  3\%-6\% & 110 &  130.6 & 183.3 & 202.2 & 206 & 201.5 & 191.6 & 177.8 \\
  6\%-9\% &  32.5 & 15.3 & 52.4 & 80.5 & 99.1 & 110.6 & 116.1 & 116.9 \\
  9\%-12\% & 15 & 1.8 & 17.4 & 37.1 & 54.3 & 67.1 & 76.5 & 82.7 \\
  12\%-22\% & 4.9 & 0.1 & 2.3 & 8.9 & 19 & 29.9 & 39.5 & 47.9 \\
  22\%-100\% & 2 & 0 & 0 & 0.1 & 0.4 & 1.1 & 2.3 & 4.1 \\ 
\hline 
& 10 Year \\
  \hline
Tranche & Market & $\rho=0.1$ & $\rho=0.2$ & $\rho=0.3$ & $\rho=0.4$ & $\rho=0.5$& $\rho=0.6$ & $\rho=0.7$  \\
\hline \\
  0\%-3\% & 38 \% & 42.51 \% & 32.51 \% & 24.13 \% & 16.65 \% &  9.71 \% & 3.11 \% & -3.33 \% \\
  3\%-6\% & 302.5 & 375.8 & 354.9 & 331.9 & 308.1 & 283.5 & 258.3 & 231.9 \\
  6\%-9\% &  83 & 101.4 & 147.3 & 166.2 & 173.6 & 174.4 & 170.8 & 163.8 \\
  9\%-12\% & 37 & 24.3 & 64.1 & 90.9 & 107.7 & 117.8 & 122.9 & 124.1 \\
  12\%-22\% & 12.5 & 2 & 13.5 & 29.1 & 44.4 & 57.5 & 68.2 & 76.5 \\
  22\%-100\% & 3.6 & 0 & 0.1 & 0.6 & 1.5 & 3 & 5.1 & 7.7 \\  \hline
\end{tabular}
\end{center}
\caption{Model tranche spreads (bp) for varying values of the correlation
parameter. The equity tranches are quoted as an upfront assuming a 500bp running spread.
The model is calibrated to the iTraxx Main Series 6 index for Feb 22, 2007. Market levels shown are for this date; model parameters are $r=0.042$, $\sigma=0.22$, $R=0.4$. }
\label{22Feb07tranchespreads}
\end{table}

\begin{figure}[!htbp]
\begin{center}
\caption{Implied Correlation Skew for iTraxx Main Series 6 Tranches, Feb 22, 2007.}\label{Fig:CorrSkew07}
		\includegraphics[width=12cm]{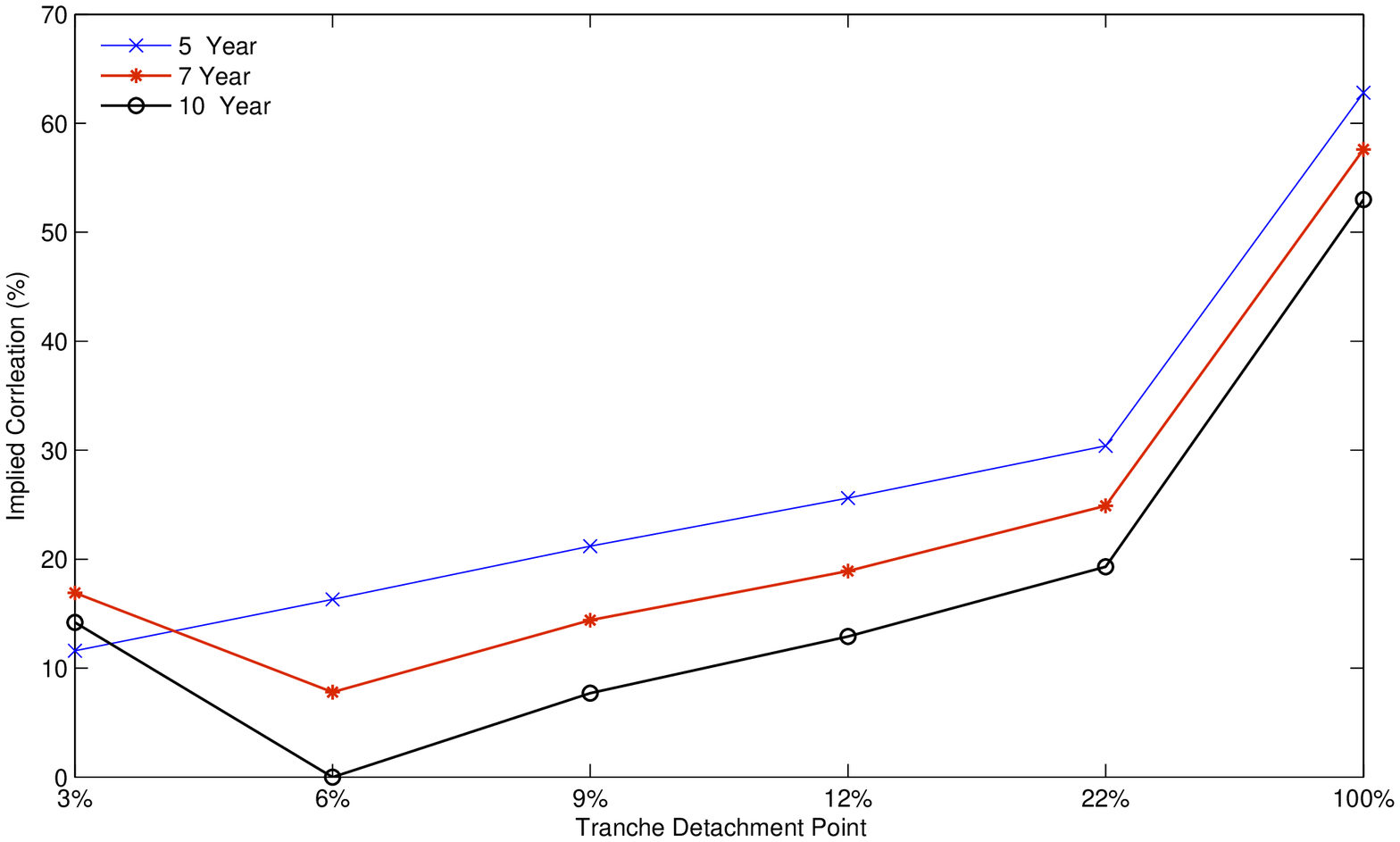}
\end{center}
\begin{center}
\small{The implied correlation for each tranche is the value of correlation that gives a model tranche spread equal to the market tranche spread given in Table \ref{22Feb07tranchespreads}. Model parameters are $r=0.042$, $\sigma=0.22$, $R=0.4$.}
\end{center}
\end{figure}

Figure \ref{Fig:CorrSkew07} illustrates the $5$, $7$ and $10$-year implied correlation skew -- the value of correlation that gives a model spread equal to the market spread for each tranche and maturity. 
\begin{itemize}
\item With the exception of the $0\%-3\%$ tranche, we see similar behaviour and levels for all three maturities. This consistency across the term-structure 
suggests that the dynamics underlying the model are realistic, even in its simple form.
\item
$5$-year implied correlations are generally high relative to the others and $10$-year values relatively low.
To achieve consistency of the correlation parameter across maturities, a driving process with the ability to generate more default events in the short-term would be required, eg a more general Levy process for the market factor.
\item
An anomaly is revealed by the $3\%-6\%$ implied correlations 
and the corresponding row data in Table \ref{22Feb07tranchespreads},
where it is seen that 
the correlation dependence of model tranche
spreads flips from increasing to hump-shaped to decreasing for 
maturities running from 5 to 10 years.
This has the following effect:
for 5 years, there is a unique implied correlation for this tranche;
for 7 years, a second, higher, correlation (just under 1)
also fits this tranche;
for 10 years, only a single high correlation can fit the
market spread. 
Essentially, the implied correlation curves in Figure \ref{Fig:CorrSkew07}
are shifted downwards with increasing maturity.
The alternative 
higher branches, where applicable, are not included in the Figure. 
When a curve crosses zero (in the case of the 10-year $3\%-6\%$ 
tranche),
we have set the implied correlation to zero
(instead of the value of around $0.42$ from the higher branch
which exactly reproduces the
market quote). For pricing and (especially) hedging purposes,
continuous dependence of implied correlations with respect to
maturity and market data is clearly desirable.
The lack of a calibration which is both stable \emph{and} exact underlines the
need for a richer model.
\end{itemize}

\begin{table}[ht]
\begin{center}
\begin{tabular}{|c|c|c|c|}
  \hline
  Maturity Date & Fixed Coupon (bp) & Traded Spread (bp) & Model Spread (bp)\\
  \hline
  20/12/2013 & 120 & 215 & 207 \\
  20/12/2015 & 125 & 195 & 195 \\
  20/12/2018 & 130 & 175 & 176 \\
  \hline
\end{tabular}
\caption{The fixed coupons, traded spreads and model spreads for the iTraxx Main Series 10 index on
December 5, 2008. Parameters used for the model spreads are $r=0.033$, $\sigma=0.136$, $R=0.4$.} \label{05Dec08itraxxspreads}
\end{center}
\end{table}

Table \ref{05Dec08tranchespreads} shows the correlation sensitivity of the Dec 5, 2008 index tranches with parameters from the calibration in Table \ref{05Dec08itraxxspreads}. We notice that relative to Table \ref{22Feb07tranchespreads}, spreads are highly distressed, the index is inverted and tranche spreads are flat to inverted across maturities. As a result, the tranches exhibit very different sensitivity to correlation than before, however there are some common themes and extensions to earlier behaviour:
\begin{itemize}
\item Default probabilities for the index and its constituents are very high. The index expected loss is therefore much greater than before, illustrated by the fact the first three $5$-year tranches and the first four $7$ and $10$-year tranches have declining spreads with increasing correlation. This contrasts with just the equity and $10$-year junior mezzanine tranches in Feb 2007.
\item Much higher levels of $\rho$ are needed to replicate market prices than in pre-crunch times, consistent with the fact that systematic risk is a much greater concern at this time.
\item Too much of our model's portfolio loss distribution lies in the middle tranches: $6\%-22\%$; more weight needs to be in the tail to be able to replicate $22\%-100\%$ tranche values. The same model shortcoming holds for all maturities 
and reflects the need for a more sophisticated driving process. 
\end{itemize}

\begin{table}[ht]
\begin{center}
\small
\begin{tabular}{l|llllllll}
 & 5 Year \\
 \hline
Tranche & Market & $\rho=0.3$ & $\rho=0.4$ & $\rho=0.5$ & $\rho=0.6$ & $\rho=0.7$& $\rho=0.8$ & $\rho=0.9$  \\
\hline \\
  0\%-3\% & 71.5 \%  & 81.88 \% & 75.9 \% &  69.56 \% & 63.02 \% & 56.25 \% & 49.16 \% & 41.65 \%
 \\
  3\%-6\% & 1576.3 & 2275.2 & 1978.5 & 1743.2 & 1546.8 & 1374.6 & 1222.8 & 1090.1 \\
  6\%-9\% & 811.5  & 1273.1 & 1168.2 & 1079.7 & 1001.4 & 931.3 & 864.6 & 796.3 \\
  9\%-12\% & 506.1 & 775.7 & 765.8 & 748.6 & 724.7 & 695.8 & 663.2 & 629.1 \\
  12\%-22\% & 180.3 & 307.8 & 353.3 & 384.7 & 405.5 & 418.1 & 423.4 & 420.5\\
  22\%-100\% & 77.9  & 9.2 & 16.5 & 25 & 34.3 & 44.5 & 55.7 & 68.1 \\
  \hline
  & 7 Year \\
  \hline
Tranche & Market & $\rho=0.3$ & $\rho=0.4$ & $\rho=0.5$ & $\rho=0.6$ & $\rho=0.7$& $\rho=0.8$ & $\rho=0.9$  \\
\hline \\
  0\%-3\% & 72.9 \%  & 84.03 \% & 78.98 \% &  73.26 \% & 66.93 \% & 60 \% & 52.41 \% & 44.13 \% \\
  3\%-6\% & 1473.2  & 2327.3 & 1985.7 & 1715.2 & 1493.4 & 1308 & 1147.8 & 1001.3 \\
  6\%-9\% & 804.2  & 1344.2 & 1199 & 1085.2 & 988.2 & 900.7 & 820.9 & 747.9 \\
  9\%-12\% & 512.4  & 855.4 & 808.4 & 765.3 & 725.3 & 684.8 & 643 & 600.4 \\
  12\%-22\% & 182.6 & 375.4 & 401.7 & 417.6 & 425.6 & 427.4 & 423.1 & 411.8 \\
  22\%-100\% & 75.8  & 14 & 22 & 30.6 & 39.6 & 49.3 & 59.7 & 71.2 \\
  \hline
& 10 Year \\
  \hline
Tranche & Market & $\rho=0.3$ & $\rho=0.4$ & $\rho=0.5$ & $\rho=0.6$ & $\rho=0.7$& $\rho=0.8$ & $\rho=0.9$  \\
\hline \\
  0\%-3\% & 73.8 \%  & 85.13 \% & 80.57 \% &  74.99 \% & 68.51 \% & 61.31 \% & 53.31 \% & 44.22 \% \\
  3\%-6\% & 1385.5 & 2270.8 & 1895.7 & 1611.1 & 1385.8 & 1195.3 & 1032 & 889.6 \\
  6\%-9\% & 824.7  & 1332.2 & 1164.2 & 1033.7 & 925.5 & 833.5 & 749.8 & 669.7 \\
  9\%-12\% & 526.1 & 870.8 & 798.8 & 740.7 & 689.3 & 640.5 & 592.1 & 543.1 \\
  12\%-22\% & 174.1 & 406.1 & 414.9 & 417.5 & 415.6 & 409.8 & 400.2 & 385.3 \\
  22\%-100\% & 76.3 & 18.3 & 26.1 & 34 & 42.1 & 50.6 & 59.7 & 69.8 \\  
  \hline
\end{tabular}
\caption{Model tranche spreads (bp) for varying values of the correlation
parameter. The equity tranches are quoted as an upfront assuming a 500bp running spread.
The model is calibrated to the iTraxx Main Series 10 index for Dec 5, 2008. Market levels shown are for this date; model parameters are $r=0.033$, $\sigma=0.136$, $R=0.4$. }\label{TrancheResults08}
\label{05Dec08tranchespreads}
\end{center}
\end{table}

\subsection{Forward starting CDO contracts}

These contracts are obligations to buy or sell protection on a specified tranche for a
specified spread at some specified time in the future. Although these instruments
are traded infrequently, their
pricing and hedging is an active research topic. We will look at two
types of forward starting CDO: one that resets the cumulative loss at the forward start
date and one that does not. For discussion purposes we will refer to these as resetting
and non-resetting respectively but it should be borne in mind that these are not standard
market terms.

\subsubsection{Non-resetting forward CDO tranche}
For a non-resetting forward CDO tranche defined over the time interval $[T,T^*]$, the
cumulative losses incurred up to time $T$ count towards the total loss in the tranche for
all $t$ with $T<t$. This feature makes pricing straightforward and analogous to a forward
CDS contract.

Consider a portfolio with $m$ entities in the reference portfolio. We define the total loss on the portfolio at time t by
\begin{equation}
L_t = \sum_{i=1}^{m} L_i 1_{\lbrace \tau_{i} \leq t \rbrace}.
\end{equation}
If the forward tranche has attachment point \emph{a} and detachment point \emph{d} then
the outstanding tranche notional, $Z_t$, is given as
\begin{equation}
Z_t = [d - L_t]^{+} - [a - L_t]^{+}.
\end{equation}
The value of the forward tranche contract is again given by the difference between the
fee leg and the protection leg. So far the setup has been the same as the standard CDO
tranche. The only difference when pricing this forward contract is the fact that now we
are only interested in the payment dates $T_i,\, i=1,\ldots,n$ where $T < T_1 < \ldots, <
T_n\leq T^*$. Using these payment dates the present value of the coupon payments given a
forward spread $s$ is
\begin{equation}\label{eqnfeeleg}
s V^{fee} = s \sum_{i=1}^{n} \frac{\delta_{i}}{b(T_i)}
\mathbb{E}^{\mathbb{Q}}[Z_{T_{i}}].
\end{equation}
The protection leg is given by
\begin{equation}\label{eqnprotleg}
V^{prot} = \sum_{i=1}^{n} \frac{1}{b(T_{i})}\mathbb{E}^{\mathbb{Q}}[Z_{T_{i-1}} -
Z_{T_{i}}].
\end{equation}
Today, the value of the forward starting contract is zero and hence the forward
break-even spread is given by
\begin{equation} \label{breakevenspread}
s=\frac{V^{prot}}{V^{fee}}.
\end{equation}

\subsubsection{Resetting forward CDO tranche}
With this contract, the cumulative loss up to time $T$ is ignored and the value of the
tranche is dependent only on the further loss incurred after time $T$. If the forward
tranche has attachment \emph{a} and detachment \emph{d} then this is equivalent to a
non-resetting forward tranche with attachment $(L_T+a)$ and detachment $(L_T+d)$. Using
the same payment dates as the non-resetting forward contract we define the effective
forward loss at time $T_i$ by
\begin{equation}
\hat{L}_{T_i}=L_{T_i}-L_T,
\end{equation}
which gives the forward tranche notional as
\begin{equation}
Z_t = [d - \hat{L}_t]^{+} - [a - \hat{L}_t]^{+}.
\end{equation}
With these new definitions the forward break-even spread can be calculated as before
using (\ref{eqnfeeleg}), (\ref{eqnprotleg}) and (\ref{breakevenspread}).

\subsection{Forward pricing results for the pre-crunch state early 2007}

We value resetting and non-resetting forward CDO contracts for a range of
forward starting dates $T$. The data used is for the European iTraxx Main Series 6 index from February 22 2007. The index fixed coupons and traded spreads are shown in table \ref{22Feb07itraxxspreads} and we use a constant risk-free rate of 4.2\% obtained from the Euro swap curve. The tenor of the forward contracts is always five years i.e.
$T^*-T=5$. The forward dates we use are 0 years i.e. the spot spread and the 1, 3 and 5
year forward starting dates. The forward break-even spreads for the non-resetting and
resetting forwards are shown in table \ref{forward-cdos-2007}.

\begin{table}[ht]
\begin{center}
\begin{tabular}{l|llll|}
$\rho = 0.1$ & \multicolumn{4}{c|}{non-resetting} \\ \hline
& $T=0$ & $T=1$ & $T=3$ & $T=5$ \\
\hline
0\%-3\% &  7.55 \% 
&  24.06 \% & 51.32 \% & 64.31 \% \\
3\%-6\% & 15.6 &   71.0 & 374.2 & 911 \\
6\%-9\% &  0.7 &  4.7 & 58.7 & 231 \\
9\%-12\% &  0 &  0.7 & 9.7 & 54 \\
12\%-22\% &  0 &       0 & 0.7 & 4.5 \\
22\%-100\% &  0 &       0 & 0 & 0 \\ \hline
\end{tabular}
\begin{tabular}{lll}
\multicolumn{3}{c}{resetting} 
\\ \hline
$T=1$ & $T=3$ & $T=5$ \\
\hline
24.05 \% & 50.26 \% & 60.95 \% \\
70.9 & 312.3 & 484.0 \\
4.7 & 43.5 & 79.5 \\
0.7 & 6.6 & 10.3 \\
0 & 0.3 & 0.5 \\
0 & 0 & 0\\ \hline
\end{tabular}\\
\begin{tabular}{l|llll|}
$\rho = 0.3 $ & \multicolumn{4}{c|}{}
\\ \hline
0\%-3\% & 2.14 \% & 13.87 \% & 31.87 \% & 36.77 \% \\
3\%-6\% & 86.4 & 174.2 & 441.9 & 704.3 \\
6\%-9\% & 25.0 & 62.6 & 188.5 & 358.6 \\
9\%-12\% & 8.2& 24.1 & 88.1 & 197.6 \\
12\%-22\% & 1.7 & 5.0 & 24.3 & 63.2 \\
22\%-100\% & 0 & 0.1 & 0.4 & 1.3 \\ \hline
\end{tabular}
\begin{tabular}{lll}
\\ \hline
   13.87 \% & 33.04 \% & 42.44 \% \\
    174.1 & 402.9 & 545.3 \\
    62.6 & 164.1 & 231.1 \\
    24.1 & 74.4 & 107.9 \\
    5.0 & 19.9 & 25.8 \\
    0 & 0.2 & 0.2 \\ \hline
\end{tabular}\\ 
\begin{tabular}{l|llll|}
$\rho = 0.5$ & \multicolumn{4}{c|}{}
\\ \hline
0\%-3\%& -3.48 \%\hspace{0mm} &  5.18 \%\hspace{0.5mm} & 17.96 \% & 20.16 \% \\
3\%-6\%&  116.2 &  193.9 & 400.8 & 532.2 \\
6\%-9\%&   54.5 & 101.6 & 228.2 & 341.4 \\
9\%-12\%&   28.6 & 59.8 & 142.9 & 237.0 \\
12\%-22\%&   9.8 & 22.0 & 62.0 & 118.5 \\
22\%-100\%&    0.3 & 0.8 & 2.9 & 6.4
\end{tabular}
\begin{tabular}{lll}
\\
\hline
    5.19 \%\hspace{1.7mm} & 20.11 \% & 28.31 \% \\
    193.8 & 384.5 &  507.8 \\
    101.5 & 211.2 & 283.7 \\
    59.7 & 128.9 & 177.0 \\
    22.0 & 54.4 & 73.9 \\
    0.8 & 2.3 & 2.1
\end{tabular}
\caption{The non-resetting and resetting forward spreads (bp) for varying values of the
correlation
parameter. The equity tranches are quoted as an upfront assuming a 500bp running spread.
The model is calibrated to the iTraxx Main Series 6 index for {\bf 22 Feb 2007}. All forwards
have a tenor of 5 years.}\label{forward-cdos-2007}
\end{center}
\end{table}

\begin{figure}[ht]
\psfrag{t}[r][r][0.8]{$t$}
\psfrag{l}[d][t][0.8]{$\frac{{\rm d}}{{\rm d} t} \mathbb{E}^{\mathbb{Q}}[Y_{t}]$}
\includegraphics[width= 0.47 \columnwidth]{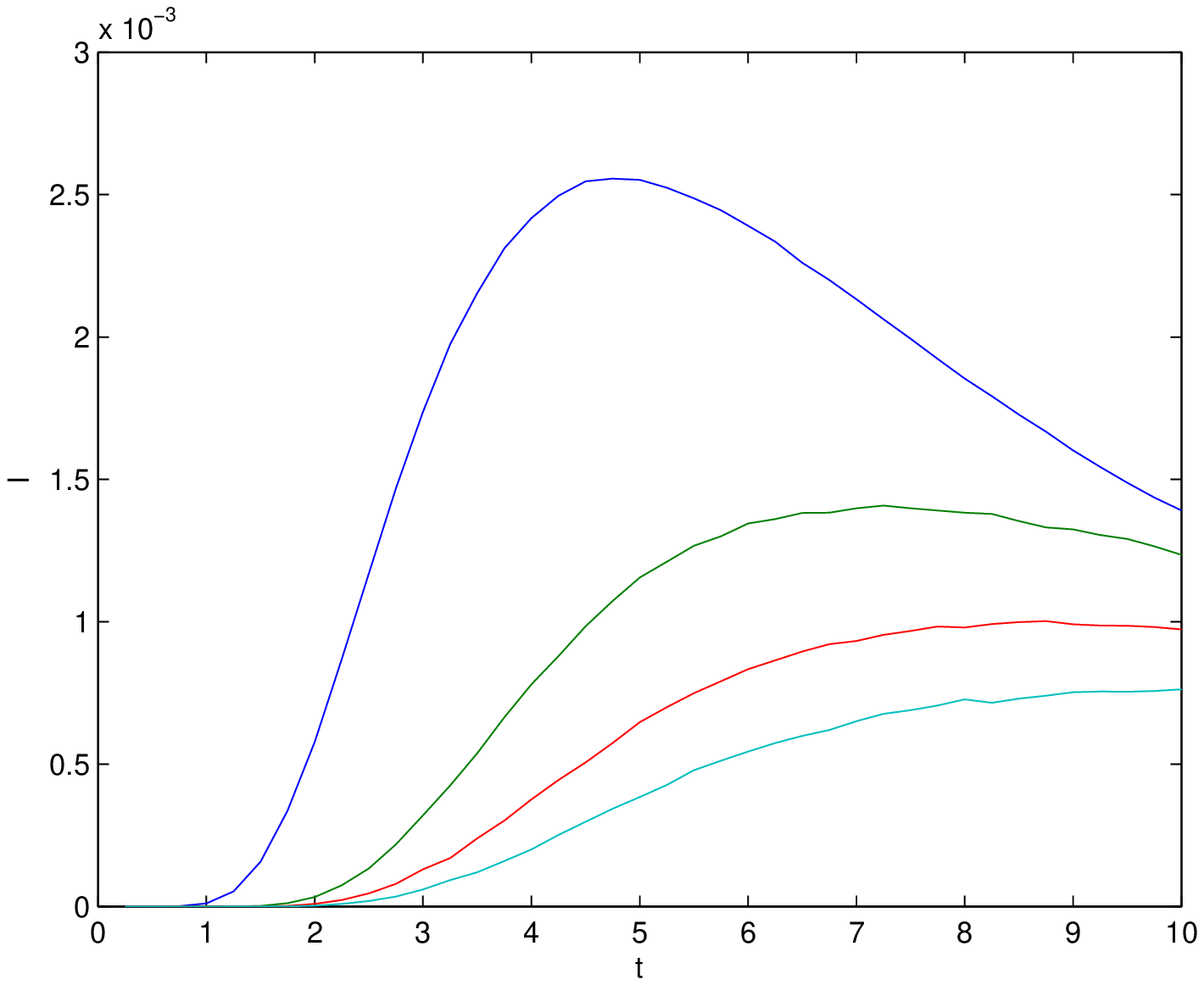}
\hfill
\includegraphics[width= 0.47 \columnwidth]{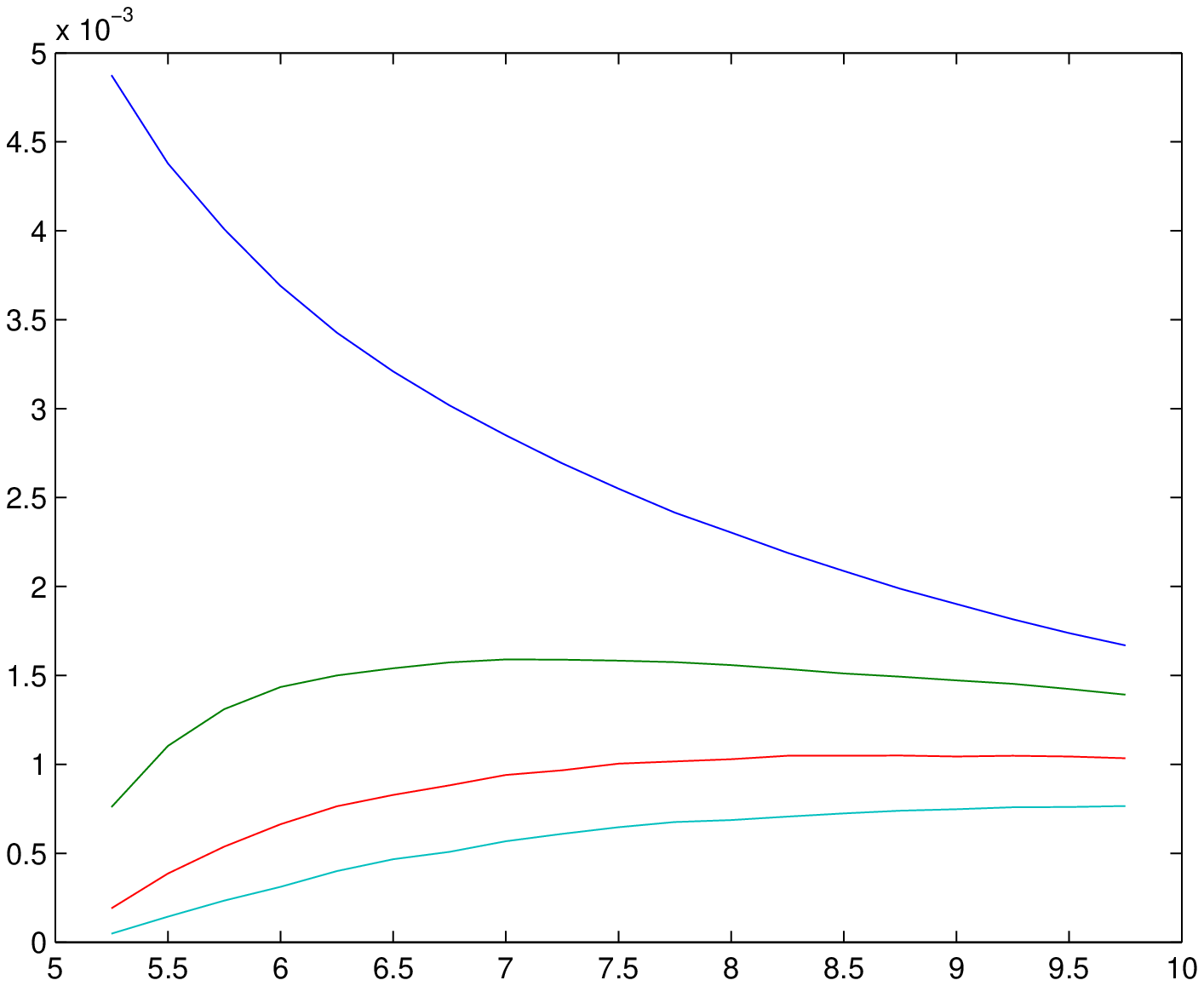}
\caption{Rate of expected losses in tranches
$[0,3\%]$, $[3\%,6\%]$, $[6\%,9\%]$, $[9\%,12\%]$ in 2007, for correlation $\rho=0.5$,
for non-resetting (left) and resetting (right) losses.}
\label{fig:ratetranchelosses2007}
\end{figure}

\subsubsection{Non-resetting forward CDO tranche}

First we focus on the non-resetting tranches. From table \ref{forward-cdos-2007} we observe
the following points:
\begin{itemize}
\item As the forward start date increases, the break-even forward spread increases for
all tranches.
\item For the junior mezzanine (3-6\%) tranche, as the forward start date increases the
spread sensitivity to correlation changes sign. The sensitivities of all other tranches
are single signed.
\end{itemize}
Both of these observations can be explained by the fact that losses in the portfolio are
cumulative. As time passes, the total loss in the portfolio accumulates and so the
attachment and detachment points of non-resetting forward tranches effectively move down
the capital structure. In other words, forward equity tranches start behaving like very
narrow spot equity tranches, forward junior mezzanine tranches start behaving like spot
equity tranches and so on. On the forward start date, investors will require additional
compensation for holding these now riskier tranches and so the break-even forward spread
increases.

This also gives the reason why the correlation sensitivity of the junior mezzanine
tranche changes sign. For a start date sufficiently far into the future, the tranche is
expected to be an equity tranche which has a negative correlation sensitivity.

\subsubsection{Resetting forward CDO tranche}

Now we turn to the resetting tranches which from a dynamic modelling point of view can be
considered the more interesting of the forward contracts. From table \ref{forward-cdos-2007} we
observe the following points:
\begin{itemize}
\item For the forward tranches with a start date in 1 years time the break-even spreads
are the same as for the non-resetting forward tranches.

\item As the forward start date increases, the break-even forward spread date generally increases.

\end{itemize}
Addressing these observations in order, the reason for the first point is the
nature of the structural model. Because of the diffusive nature of the asset processes,
the probability of defaults occurring in the short term is very low. In Section 3 this
was highlighted as one of the major downsides for this type of model. The consequence of
that property here is that the cumulative loss within the first year is negligible and so
we have $\hat{L}_t \approx L_t$. This in turn leads to the same break-even spreads for
both types of forward contract.

The second point can be explained simply by the potential for a decrease in credit
quality due to the natural diffusion of the asset processes. This is also present in the
non-resetting tranche prices but there the increase in spreads is dominated by the move
down the capital structure.
An exception is the super senior (22-100\%) tranche for high correlation. We come back to
this when discussing the distressed state where this effect is more pronounced.

\subsection{Forward pricing results for the distressed state late 2008}

We again value resetting and non-resetting forward CDO contracts for a range of
forward starting dates $T$. The data used is for the European iTraxx Main Series 6 index from December 5 2008. The index fixed coupons and traded spreads are shown in table \ref{05Dec08itraxxspreads} and we use a constant risk-free rate of 3.3\% obtained from the Euro swap curve. The contract details are as before.
Forward break-even spreads for the non-resetting and
resetting forwards are shown in tables \ref{forward-cdos-2008}.

\begin{table}[ht]
\begin{center}
\begin{tabular}{l|llll|}
$\rho = 0.3$ & \multicolumn{4}{c|}{non-resetting} \\ \hline
& $T=0$ & $T=1$ & $T=3$ & $T=5$ \\
\hline
0\%-3\% & 81.88 \% &   82.40 \%& 69.26 \% & 56.96 \% \\
3\%-6\% & 2275.2 &   3106.5 & 2658.7 & 1989.3 \\
6\%-9\% & 1273.1 &   1708.0 & 1771.4 & 1401.2 \\
9\%-12\% & 775.7 &   1045.9 & 1205.5 & 1024.9 \\
12\%-22\% & 307.8 &    425.3 & 570.1 & 546.6 \\
22\%-100\% & 9.2 &    13.9 & 24.1 & 29.4  \\ \hline
\end{tabular}
\begin{tabular}{lll}
\multicolumn{3}{c}{resetting} 
\\ \hline
$T=1$ & $T=3$ & $T=5$ \\
\hline
   83.21 \% & 77.69 \% & 68.98 \% \\
   2479.9 & 1830.2 & 1206.6 \\
   1348.2 & 901.9 & 522.1 \\
    811.7 & 487.7 & 238.9 \\
    313.3 & 143.9 & 51.4 \\
    8.0 & 1.8 & 0.2 \\ \hline
\end{tabular}
\\ 
\begin{tabular}{l|llll|}
$\rho = 0.5$ & \multicolumn{4}{c|}{}
\\ \hline
0\%-3\% & 69.56 \% & 66.22 \% & 47.22 \% & 34.18 \% \\
3\%-6\% &  1743.2 & 2090.3 & 1580.0 & 1119.3 \\
6\%-9\%  &  1079.7 & 1341.4 & 1149.0 & 869.2 \\
9\%-12\%  &   748.6 & 938.6 & 890.4 & 690.9 \\
12\%-22\%  &   384.7 & 495.5 & 532.2 & 455.6 \\
22\%-100\%  &   25 & 33.4 & 44.8 & 44.8 \\ \hline
\end{tabular}
\begin{tabular}{lll}
\\ \hline
  70.09 \% & 63.07 \% & 53.86 \% \\
  1858.6 & 1441.5 & 1033.5 \\
  1141.9 & 845.4 & 565.4 \\
   787.1 & 549.8 & 333.5 \\
   400.0 & 241.7 & 123.6 \\
   23.3 & 8.6 & 2.5 \\ \hline
\end{tabular}
\\ 
\begin{tabular}{l|llll|}
$\rho = 0.7$ & \multicolumn{4}{c|}{}
\\ \hline
0\%-3\% & 56.25 \% & 49.41 \% & 29.70 \% & 16.99 \% \\
3\%-6\% & 1374.6 &   1481.6 & 1040.3 & 734.7 \\
6\%-9\%  & 931.3 &   1072.1 & 827.8 & 602.8 \\
9\%-12\%  &  695.8 &    818.8 & 689.6 & 516.9 \\
12\%-22\%  & 418.1 &    513.8 & 478.1 & 384.8 \\
22\%-100\%  & 44.5 &     55.9 & 64.1 & 57.9 
\end{tabular}
\begin{tabular}{lll}
\\ \hline
   55.67 \% & 47.50 \% & 38.76 \% \\
   1428.3 & 1146.6 & 864.3 \\
    973.7 & 766.2 & 549.2 \\
    731.3 & 554.3 & 379.7 \\
    443.0 & 310.4 & 188.1 \\
    42.7 & 20.8 & 9.5
\end{tabular}
\caption{The non-resetting and resetting forward spreads (bp) for varying values of the 
correlation
parameter. The equity tranches are quoted as an upfront assuming a 500bp running spread.
The model is calibrated to the iTraxx Main Series 6 index for {\bf 5 Dec 2008}. All forwards
have a tenor of 5 years.}\label{forward-cdos-2008}
\end{center}
\end{table}

\begin{figure}[ht]
\psfrag{t}[r][r][0.8]{$t$}
\psfrag{l}[d][t][0.8]{$\frac{{\rm d}}{{\rm d} t} \mathbb{E}^{\mathbb{Q}}[Y_{t}]$}
\includegraphics[width= 0.47 \columnwidth]{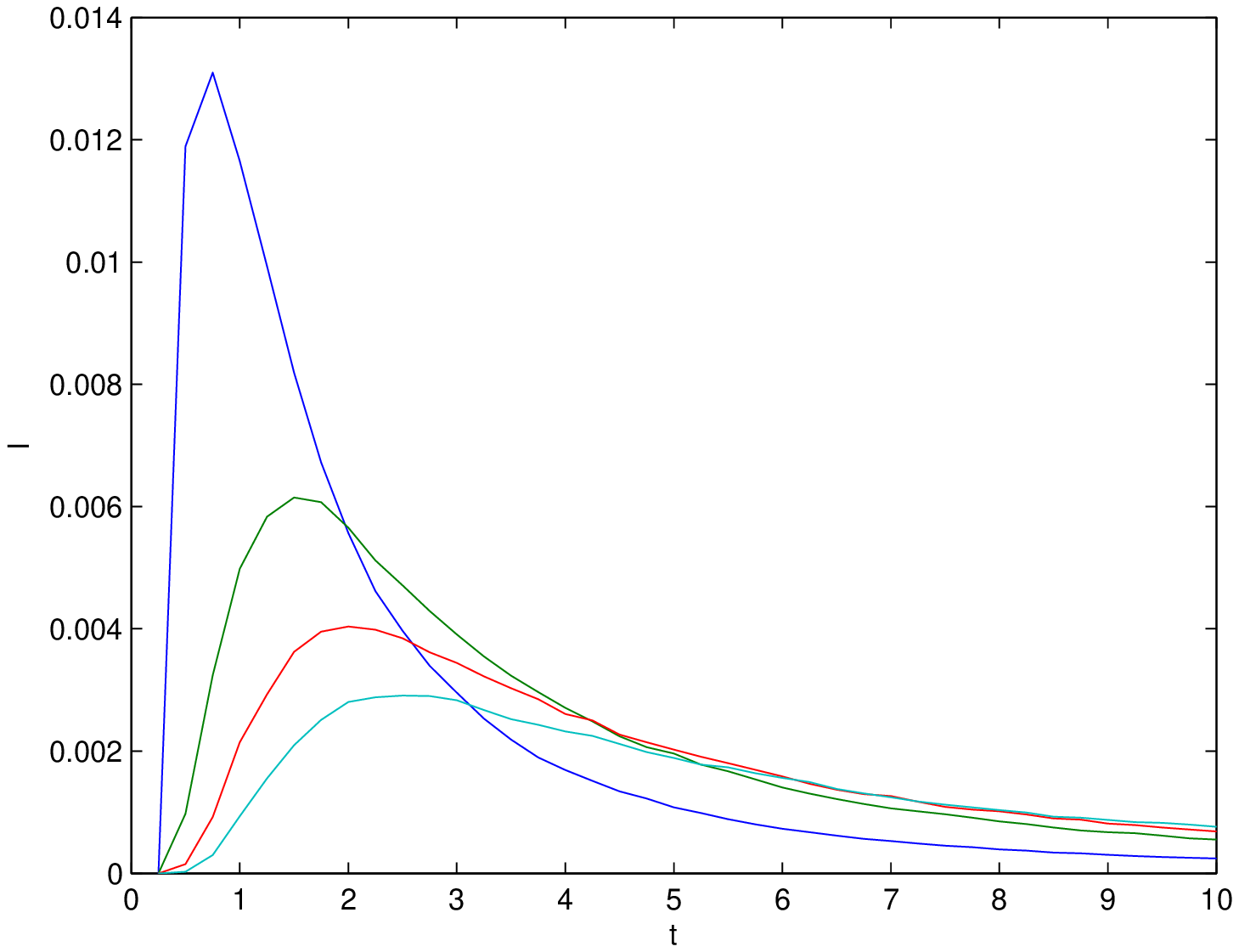}
\hfill
\includegraphics[width= 0.47 \columnwidth]{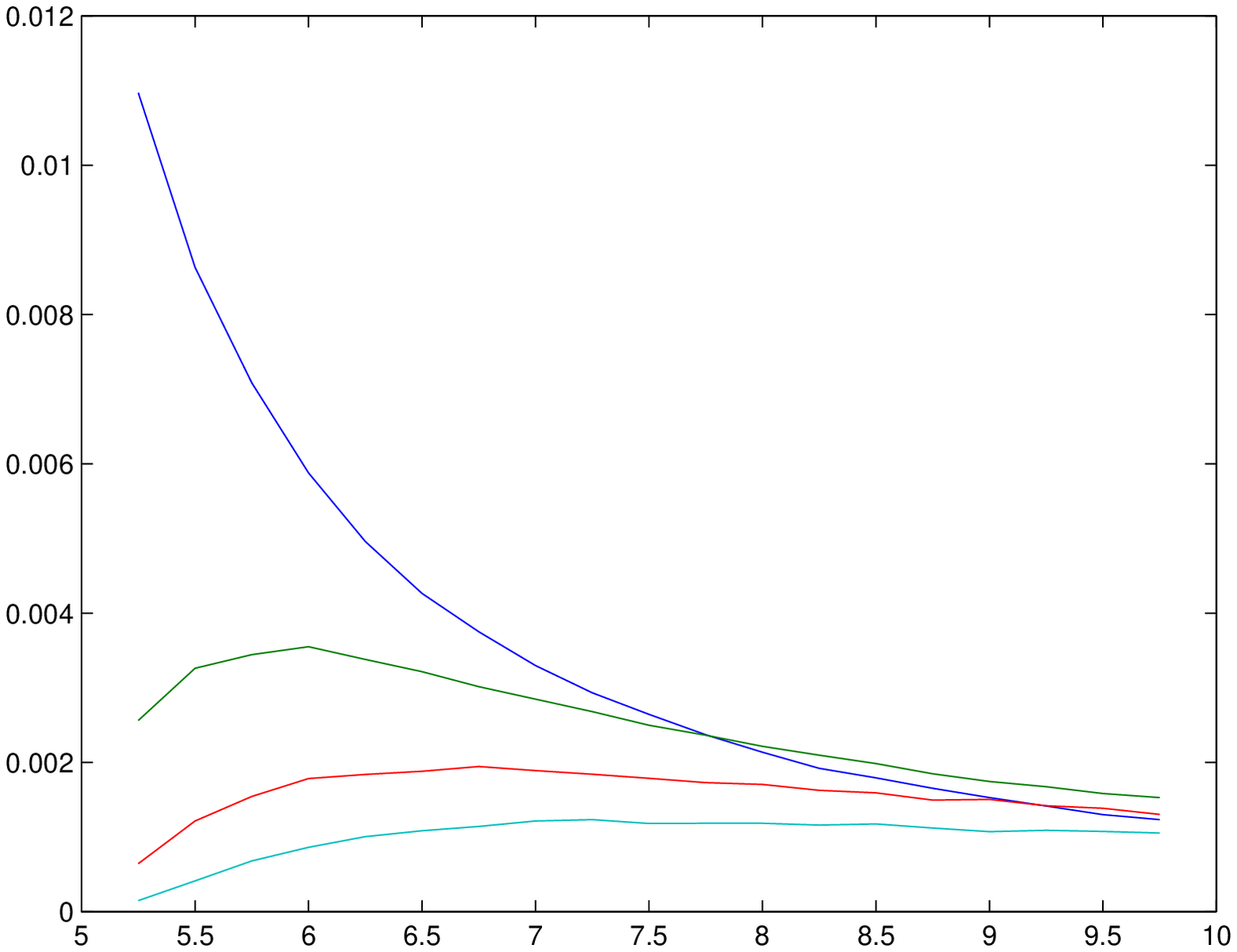}
\caption{Rate of expected losses in tranches
$[0,3\%]$, $[3\%,6\%]$, $[6\%,9\%]$, $[9\%,12\%]$ in 2008, for fixed correlation $\rho=0.5$,
for non-resetting (left) and resetting (right) losses.}
\label{fig:ratetranchelosses2008}
\end{figure}

Comparing figures \ref{fig:ratetranchelosses2008} with \ref{fig:ratetranchelosses2007},
the expected
loss rate peaks at the short end in the distressed environment of 2008.
The effect on the resetting forward starting CDO, which is basically a standard CDO moved
into the future, is that the tranche spreads decrease with the forward start date. This
is in contrast to the 2007 environment where the risk is generally perceived to increase
with the forward start date.

The behaviour is more involved for the non-resetting tranches, where the outstanding
tranche notional decays leading up to the forward start date. The net
effect here is that typically the spreads decrease, with respect to the forward start date,
for the junior tranches, increase for the senior tranches, and have a hump-shaped
term-structure in the mezzanine range.

\section{Conclusions}
\label{conclusions}

We have illustrated the ability of our simple model to crudely calibrate to the index term-structure in wildly different market 
environments, and have shown that the correlation sensitivity of tranche spreads demonstrates the behaviour expected. 
More importantly, using just two parameters and without making them time-dependent, we have shown that our very simple 
structural evolution model displays realistic term-structure dynamics. Using just the volatility parameter, it is able 
to calibrate well to all three index spreads and correlation sensitivities of the various tranches are fairly stable 
across maturities. This is an improvement on the majority of pricing models which lack a coherent means of 
incorporating dynamics. 

The next stage, which has not been the focus here, is to extend the framework so that it can 
calibrate to all tranches with a single set of parameters. This will involve moving away from a simple 
Brownian Motion driving the process, and may include a more general stochastic volatility or Levy or jump-diffusion process.
Jumps in the market factor are conceptually easy to include and result in a jump
process driving the SPDE drift.
Similarly, a single stochastic volatility factor affecting all firms will result
in a stochastic term driving the SPDE diffusion.
Contagion may be incorporated by making model parameters, notably the correlation,
loss dependent.
These extensions allowe the loss distribution 
process to become more skewed, allocating more weight to the tail and increasing super senior tranche spreads, as well as generally allowing more flexibility to
match observed data.

\end{document}